\documentclass[10pt,journal,compsoc]{IEEEtran}

\usepackage[T1]{fontenc}
\ifCLASSOPTIONcompsoc

\usepackage{cite}

\usepackage{bm}
\usepackage{amsmath}
\usepackage{amsthm}
\usepackage{float}
\usepackage{setspace}
\usepackage{amssymb}
\usepackage{stfloats}
\usepackage{cite}
\usepackage{ragged2e}
\usepackage[ruled,vlined,linesnumbered]{algorithm2e}
\usepackage{amsfonts}
\usepackage{mathrsfs}
\usepackage{amsmath,amsthm}
\usepackage{array,booktabs}
\usepackage{subfigure}
\usepackage{multirow}
\usepackage{cuted}
\usepackage{multicol}
\usepackage{graphicx}
\usepackage{subfigure}
\usepackage{graphicx,xcolor,bm}
\usepackage{hyperref}
\usepackage{threeparttable}
\usepackage{dcolumn}
\usepackage{setspace}
\usepackage{makecell}
\usepackage{lipsum}
\usepackage{enumerate}
\usepackage{mathrsfs}
\usepackage{bbm}
\usepackage[bottom]{footmisc}

\newtheorem{Defn}{Definition}
\newtheorem{lem}{Lemma}

\newtheorem{Prop}{Proposition}
\newtheorem{thm}{Theorem}

\usepackage{cite,bm}
\graphicspath{{figures/}}
\begin{document}
	\title{Masking Intent, Sustaining Equilibrium: Risk-Aware Potential-Game-Based Service Provision in Dynamic Mobile Crowdsensing}
	
\author{Houyi Qi, \IEEEmembership{Graduate Student Member}, \IEEEmembership{IEEE},
	 Minghui Liwang, \IEEEmembership{Senior Member}, \IEEEmembership{IEEE}, 
	Kaiwen Tan,
	 Wenyong Wang, \IEEEmembership{Senior Member}, \IEEEmembership{IEEE},
	 Sai Zou,  \IEEEmembership{Senior Member}, \IEEEmembership{IEEE}, 
	 Yiguang Hong, \IEEEmembership{Fellow}, \IEEEmembership{IEEE},
	\\Xianbin Wang, \IEEEmembership{Fellow}, \IEEEmembership{IEEE},
and	Wei Ni, \IEEEmembership{Fellow}, \IEEEmembership{IEEE}

	\thanks{H. Qi (houyiqi@tongji.edu.cn), M. Liwang (minghuiliwang@tongji.edu.cn), and Y. Hong (yghong@iss.ac.cn) are with the Shanghai Research Institute for Intelligent Autonomous Systems, the State Key Laboratory of Autonomous Intelligent Unmanned Systems, Department of Control Science and Engineering, Tongji University, Shanghai, China. 
	K. Tan (2450031@tongji.edu.cn) is with the Guohao School, Tongji University, Shanghai, China. 
	W. Wang (wywang@must.edu.mo) is with the International Institute of Next Generation Internet, Macau University of Science and Technology, Macau, China.
	S. Zou (dr-zousai@foxmail.com) is with College of Big Data and Information Engineering, Guizhou University, China. 
	X. Wang (xianbin.wang@uwo.ca) is with the Department of Electrical and Computer Engineering, Western University, Ontario, Canada. 
	W. Ni (Wei.Ni@ieee.org) is with the School of Engineering, Edith Cowan University, Perth, Western Australia, Australia. 
	}}

	\IEEEtitleabstractindextext{\vspace{-3.5mm}
		\begin{abstract}
			\justifying
Mobile crowdsensing (MCS) is evolving from basic data collection toward dynamic service provisioning, where platforms must ensure reliable task completion, budget feasibility, and sensing quality under fluctuating worker availability. Beyond raw-data and location privacy, workers’ long-term intent traces, such as task-selection tendencies and participation histories, may be exploited by an honest-but-curious platform to infer private preferences through one- and multi-snapshot intent inference. Meanwhile, worker dropouts and execution uncertainty may destabilize sensing coverage and create redundant sensing, while frequent global re-optimization incurs high interaction overhead and increases observable exposure.
To address these challenges, we propose \textit{iParts}, an intent-preserving and risk-aware two-stage service provisioning framework for dynamic MCS. In the offline stage, workers report perturbed intent vectors via personalized local differential privacy (LDP) with memoization-based permanent randomized response, which suppresses long-term frequency-based inference while preserving report utility for decision making. Using only perturbed intents, the platform constructs a redundancy-aware quality model and performs risk-aware pre-planning under budget availability, individual rationality, quality-failure risk, and intent-mismatch risk constraints. The offline problem is formulated as an exact potential game whose potential function is the expected social welfare, ensuring constrained pure-strategy equilibrium existence and finite-step convergence under exact feasible improvement dynamics. In the online stage, emerging quality deficits are addressed through bounded-round temporary recruitment over idle or standby workers, enabling feasibility-preserving remedial adjustment with limited observable exposure.
Extensive experiments show that iParts achieves a favorable privacy--utility--efficiency trade-off, improving welfare and task completion while reducing redundancy and communication overhead compared with representative benchmark methods.
\end{abstract}

		\vspace{-1mm}
		\begin{IEEEkeywords}
			Mobile crowdsensing, Intent privacy, Risk analysis, Personalized Local Differential Privacy, Potential games, Interaction efficiency.
		\end{IEEEkeywords}
}
	
\maketitle
\IEEEdisplaynontitleabstractindextext
%
\IEEEpeerreviewmaketitle

\section{Introduction}
\IEEEPARstart{B}{y} opportunistically leveraging distributed sensing resources and mobile user capabilities, mobile crowdsensing (MCS) has become an effective paradigm for large-scale data acquisition and sensing-intensive applications. Owing to its flexibility and low deployment cost, MCS has supported many emerging applications, such as urban environment monitoring, traffic congestion inference, and public safety management \cite{wang2022bsif,cai2025towards}. As these applications increasingly require timely, reliable, and high-quality sensing services, MCS platforms are expected to evolve beyond conventional data collection toward service-oriented provisioning. In this emerging paradigm, platforms must jointly ensure budget feasibility, robust task completion under participant dropout, sensing-quality satisfaction, and low interaction overhead, which are essential for supporting dynamic and large-scale MCS applications\cite{cai2025towards,an2024privacy,qi2026Accelerating}.

Reliable MCS service provision relies on incentive-driven recruitment and task scheduling among heterogeneous workers (e.g., mobile users)\cite{an2024privacy,qi2026Accelerating}. However, privacy protection, redundancy-aware quality control, and worker dynamics are tightly coupled in dynamic MCS, making mechanism design challenging. Existing studies often address these issues in isolation, making it difficult to simultaneously ensure sensing reliability, budget feasibility, and privacy protection under uncertain worker participation. Motivated by these gaps, we distill a set of key research questions (RQs) that guide our MCS design and clarify the core motivations behind this work.

\noindent $\bullet$ \textit{RQ1:}
Most privacy-preserving MCS studies focus on location perturbation\cite{cai2025towards,xu2022personalized} or raw-data encryption/anonymization\cite{xiao2020privacy,azhar2022privacy}, while paying less attention to {intent/preference privacy}. In particular, the tasks a worker is willing (or unwilling) to undertake and the long-term patterns embedded in task-selection and participation histories constitute sensitive information that can be readily inferred. Under the honest-but-curious platform assumption, even without direct access to raw sensing data, the platform may still perform one- or multi-snapshot inference\cite{cai2025towards} by exploiting intent reports, historical task allocations, and interaction logs. This auxiliary information may reveal workers' behavioral profiles and risk preferences, discouraging participation and threatening the sustainability of MCS.
Therefore, the first RQ is \textit{how to support effective mapping between tasks and workers without explicitly revealing workers' true intents, while quantifying inference risks and characterizing constraint-aware allocation quality loss under intent perturbation?}

\noindent $\bullet$ \textit{RQ2: }
In practical MCS, tasks may be redundantly sampled by multiple workers. While redundancy improves sensing quality, its marginal gain often diminishes or saturates due to spatio-temporal correlations.
Ignoring redundancy effects may lead to structural imbalance: some ``popular'' tasks become overcrowded with redundant reports, wasting budget, whereas others remain undersampled, resulting in severe quality and coverage gaps that fail to meet required thresholds.
Nevertheless, most prior work overlooks sensing redundancy, implicitly presuming a linear quality gain with increased worker recruitment\cite{qi2026Accelerating,liwang2025long}.
Therefore, our second RQ is: \textit{How to model redundancy-aware sensing quality while ensuring attainable quality, controlled redundancy, and budget feasibility?}

\noindent $\bullet$ \textit{RQ3:}
Dynamic MCS is inherently uncertain due to worker mobility, availability fluctuation, participant dropout, and network variations~\cite{qi2026Accelerating,qi2023matching}. Purely online optimization requires frequent preference disclosure and communication, leading to high interaction overhead~\cite{liwang2025long,qi2023matching} and increased inference exposure. In contrast, purely offline planning may become infeasible when actual worker participation deviates from the planned assignments. Hence, a practical MCS service provisioning mechanism should combine risk-aware offline planning with lightweight online adaptation, so that quality deficits can be mitigated with bounded interactions and stable adjustment. Therefore, our third RQ is: 
\textit{How to design a unified framework integrating risk-aware offline planning with lightweight online adaptation, enabling temporary recruitment within a few interaction rounds under explicit budget and quality requirements?}

To answer the above-discussed RQs, we propose \textbf{iParts} (\underline{i}ntent-\underline{p}rivate and \underline{r}isk-aware \underline{t}wo-\underline{s}tage service provision), an offline-online coordinated paradigm that integrates \textit{risk-aware offline pre-planning} with \textit{lightweight online adjustment}. Our key contributions are summarized as follows:

\noindent $\bullet$ \textit{We develop a unified intent-preserving and risk-aware framework for dynamic MCS provisioning with low interaction overhead.}
By decoupling privacy-sensitive structural decisions into offline planning and confining the online stage to lightweight online adjustment, our framework simultaneously enhances social welfare (SW) and task reliability, mitigates redundant sensing, and reduces the inference surface exploitable for intent profiling.

\noindent $\bullet$ \textit{We formalize and protect workers' intent/preference information under repeated participation via personalized local differential privacy (LDP) with memoization.}
We model workers' intent vectors and their locally perturbed reports, and introduce an \emph{expected intent-report distortion} metric to quantify report-level utility loss induced by perturbation.
To characterize inference risk, we establish one-/multi-snapshot attack models and incorporate an inference-resistance requirement, where memoization mitigates frequency-based profiling within memo-epochs under repeated observations.

\noindent $\bullet$ \textit{We design an offline redundancy-aware and risk-aware pre-planning mechanism via an exact potential game.}
To capture the diminishing marginal gain of redundant sensing, we develop a redundancy-discounted quality aggregation model. Under budget feasibility, individual rationality, quality-failure risk, and intent-mismatch risk constraints, we formulate the offline pre-planning problem as an exact potential game whose potential function is the expected SW. This formulation ensures the existence of a constrained pure-strategy Nash equilibrium and finite-step convergence under feasible improvement dynamics, based on which efficient budget-feasible update algorithms are developed.

\noindent $\bullet$ \textit{We investigate lightweight online adjustment for execution deviations using temporary potential-game dynamics.}
When dynamic execution deviations lead to quality gaps, the platform initiates temporary recruitment over idle/standby workers and coordinates bounded-round remedial decisions through potential-guided updates.
The online interaction is restricted to task-level aggregates and feasibility-screened adjustments, reducing communication overhead while preserving the low-observability design goal.

\noindent $\bullet$ \textit{We provide theoretical properties and extensive evaluations on both service performance and intent privacy.}
We prove the personalized-LDP guarantee of intent reporting, establish mechanism-level post-processing invariance for platform-side quantities derived from perturbed reports, and characterize inference resistance under single- and multi-snapshot adversaries. Experiments show that iParts achieves higher SW and task reliability, better redundancy control, and lower interaction overhead, while offering stronger resistance to intent inference than representative game-theoretic and privacy-preserving benchmarks.

\section{Literature Review}
This section reviews related work from two perspectives: game-theoretic service provisioning (focusing on interaction efficiency and decision dynamics) and privacy-preserving MCS (emphasizing protection scope and inference exposure). We further highlight the key distinctions between our design and existing approaches.

\noindent  $\bullet$ \textbf{Game-theoretic service provision over MCS: interaction efficiency and privacy gaps.}
Game-theoretic paradigms have been extensively explored as a principled means to characterize strategic interactions in MCS, particularly for incentive design in user recruitment, pricing, and task assignment. For example, \textit{Qi et al.}~\cite{qi2026Accelerating} developed a stagewise trading mechanism to achieve stable task-worker matching under task diversity and worker dynamics. \textit{Ouyang et al.}~\cite{Ouyang2025MWRS} coupled a multi-armed bandit model with a tripartite stackelberg game for worker recruitment under hierarchical strategic interactions. \textit{Guang et al.}~\cite{Guang2025Game} designed a collaborative game-based task allocation framework with hierarchical information sharing. \textit{Zhang et al.}~\cite{Zhang2025An} investigated a reverse affine maximizer auction mechanism to improve provider utility in MCS service provisioning.

Despite their effectiveness in capturing strategic behaviors, existing game-theoretic frameworks still face limitations in large-scale and dynamic sensing markets: First, many schemes are interaction-intensive, relying on repeated bidding or pricing, iterative equilibrium computation, and frequent global information exchange, which incurs substantial latency and communication overhead. Second, the schemes often assume that key system parameters are publicly known or can be accurately learned online, which is an assumption that is fragile under the incomplete and uncertain information prevalent in practical crowdsensing. These approaches primarily optimize allocation or equilibrium efficiency, but often overlook intent privacy: \textit{frequent interactions and explicit preference revelation may expose fine-grained task-selection traces, increasing the observable exposure for intent inference.}

\noindent  $\bullet$ \textbf{Privacy-preserving MCS: from data and location privacy to intent privacy.}
Prior studies on privacy-preserving MCS have predominantly concentrated on protecting raw sensing data, such as sensor readings and multimedia reports as well as worker locations.
For instance, \textit{Cai et al.}~\cite{cai2025towards} developed a personalized location-privacy trading framework that jointly balances privacy protection and task allocation efficiency.
\textit{An et al.}~\cite{an2024privacy} combined privacy-preserving recruitment with quality-aware sensing using deviation- and variance-based metrics.
\textit{Zhao et al.}~\cite{zhao2025privacy} proposed a privacy-preserving truth discovery (PPTD) framework for evolving truths, protecting both user and requester privacy while maintaining high estimation accuracy.
\textit{Liang et al.}~\cite{Liang2025Privacy} proposed a fog-assisted PPTD scheme for sensory data while considering server collusion, server dropout, and heterogeneous worker efficiency.
\textit{You et al.}~\cite{You2025Location} combined federated learning with reinforcement learning to support decentralized task selection while preserving location privacy.
\textit{Zhou et al.}~\cite{Zhou2025Unknown} designed a shuffle differential privacy-based auction mechanism to reduce privacy-related utility loss.

Despite these advances, most existing mechanisms primarily focus on protecting information/data privacy (what to sense) and location privacy (where to sense). In dynamic MCS, however, an equally critical yet underexplored dimension is intent or preference privacy, namely, which tasks a worker is willing or unwilling to undertake and how such preferences evolve over time. Many existing mechanisms rely on explicit preference disclosure, repeated bidding, or fine-grained interaction records, which may expose task-selection traces and enable one- or multi-snapshot inference of long-term worker behaviors, even under the honest-but-curious platform assumption. Consequently, while prior studies mainly safeguard data, location, or identity privacy, \textit{intent privacy under repeated observations remains insufficiently addressed, especially when dynamic MCS requires low-interaction service provisioning.} Table~\ref{Tab. RW} compares iParts with representative existing approaches.

\begin{table}[t!]
	{\scriptsize
		\caption{\footnotesize{A summary of related studies\\(Red.: Redundancy controllability, Ris.: Risk controllability)}\label{Tab. RW}}
		\vspace{-0.51cm}
		\begin{center}
			\setlength{\tabcolsep}{1.2mm}{
				\begin{tabular}{|c|c|c|c|c|c|c|c|}
					\hline
					\multirow{2}{*}{\textbf{Reference}} &\multirow{2}{*}{\makecell[c]{ \textbf{Privacy}\\ \textbf{Type}}}
					& \multicolumn{2}{c|}{\makecell[c]{\textbf{Network}\\ \textbf{Characteristics}}} 
					& \multicolumn{2}{c|}{\textbf{Trading Mode}} 
					& \multicolumn{2}{c|}{\makecell[c]{\textbf{Innovative}\\ \textbf{Attributes}}} \\
					\cline{3-8}
					& &Static & Dynamic   
					& Offline & Online  
					& Red.  & Ris. \\ 
					\hline
					
						\makecell[c]{\cite{qi2026Accelerating}}
					&---  &   & $\surd$
					& $\surd$ &$\surd$& &  $\surd$
					\\
					\hline
					
						\makecell[c]{\cite{Ouyang2025MWRS}}
					&---  & $\surd$  & 
					&  &$\surd$& $\surd$ &  
					\\
					\hline
					
						\makecell[c]{\cite{Guang2025Game}}
					&---  &   & $\surd$
					&  &$\surd$&  &  
					\\
					\hline
					
						\makecell[c]{\cite{Zhang2025An}}
					&---  & $\surd$  & 
					&  &$\surd$&  &  
					\\
					\hline
					
					\makecell[c]{\cite{cai2025towards,You2025Location}}
					&location  &   & $\surd$
					&  &$\surd$&  &  $\surd$
				  \\
					\hline
					
					\makecell[c]{\cite{an2024privacy}}
					& data&  & $\surd$
					&  &$\surd$&   &  
					 \\
					\hline
					
					\makecell[c]{\cite{zhao2025privacy}}
					& data&  & $\surd$
					&  &$\surd$& $\surd$  &$\surd$  
					\\
					\hline

					\makecell[c]{\cite{Liang2025Privacy}}
					& data& $\surd$ & 
					& $\surd$ &&    &$\surd$  
					\\
					\hline
					
					\makecell[c]{\cite{Zhou2025Unknown}}
					& data&  & $\surd$
					&  &$\surd$&    &$\surd$  
					\\
					\hline

					\textbf{iParts (Ours)}
					& Intent &  & $\surd$
					&$\surd$ &$\surd$ & $\surd$ & $\surd$
					 \\
					\hline
			\end{tabular}}
	\end{center}}
	\vspace{-1.5mm}
\end{table}

\section{Preliminaries and System Model}
\subsection{Overview of iParts}
\label{sec:overview_method}

We consider a dynamic MCS system consisting of three main components: \textit{(i)} a sensing platform, \textit{(ii)} a set of sensing tasks $\bm{\mathcal{S}}=\{s_1,\ldots,s_{|\bm{\mathcal{S}}|}\}$, and \textit{(iii)} a set of mobile workers $\bm{\mathcal{W}}=\{w_1,\ldots,w_{|\bm{\mathcal{W}}|}\}$. The platform is assumed to be \emph{honest-but-curious}: it faithfully follows the prescribed protocol, but may infer workers' sensitive intent/preference information from reported signals, task allocation histories, and interaction logs.
\begin{figure}[t!]
	\centering
	\includegraphics[width=1\linewidth]{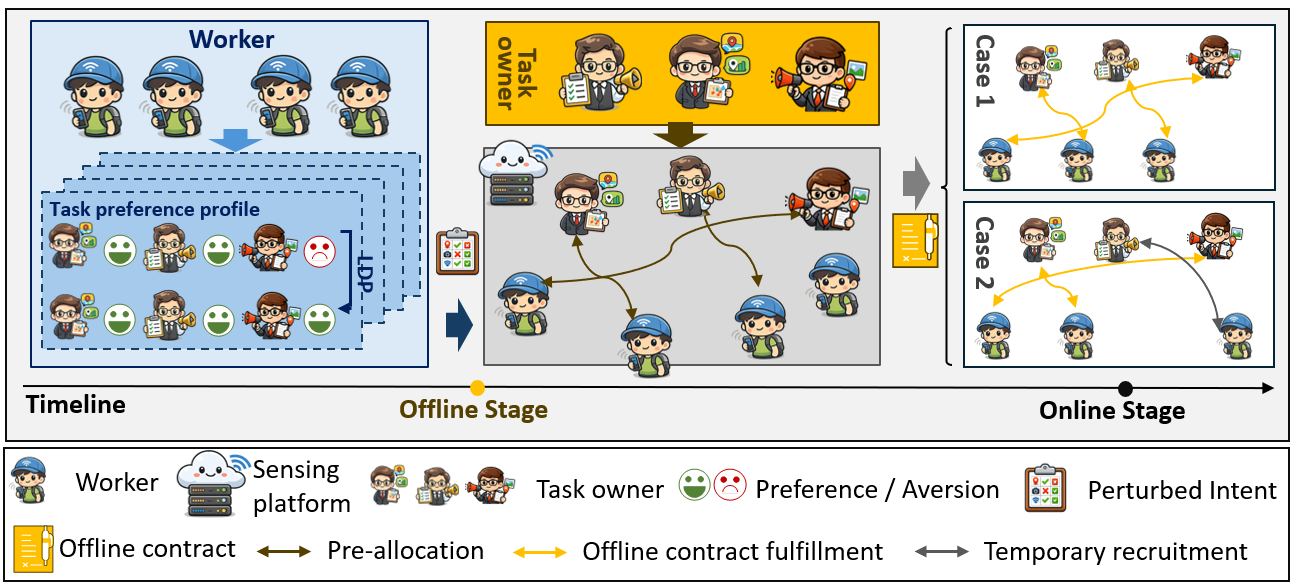}
	\caption{Timeline and workflow of the proposed iParts framework in dynamic MCS.}
	\label{fig:system}
\end{figure}
In this setting, the platform aims to maximize practical social welfare (pSW) while satisfying several system-level requirements, including \textit{(i)} intent privacy preservation, \textit{(ii)} redundancy-aware sensing-quality assurance, \textit{(iii)} budget feasibility, and \textit{(iv)} low online interaction overhead. To achieve these goals, iParts adopts an offline-online coordinated design, which separates privacy-sensitive structural planning from lightweight online remediation, as illustrated in Fig.~\ref{fig:system}.

\noindent
$\bullet$ \textit{Offline stage: intent-preserving and risk-aware pre-planning.}
Before task execution, the platform performs risk-aware pre-planning based on workers' perturbed intent reports and public task/worker parameters. This stage addresses two key challenges in dynamic MCS: worker participation is uncertain due to mobility and dropouts, and the platform cannot directly access workers' true intents due to privacy constraints. To this end, we introduce \underline{r}isk-\underline{a}ware \underline{p}re-planning with long-term \underline{co}ntr\underline{a}ct \underline{d}esign (RAPCoD), which jointly accounts for execution uncertainty, redundancy-aware quality requirements, and budget feasibility. To support privacy-preserving planning, PRIMER and MIRROR are developed to generate personalized-LDP intent reports, while ASPIRE-Off solves the resulting offline planning problem through an exact potential game that maximizes expected social welfare (eSW). The offline stage produces intent-preserving and risk-aware long-term contracts, thereby reducing the need for frequent online preference disclosure and global re-optimization.

\noindent
$\bullet$ \textit{Online stage: lightweight remediation via temporary recruitment.}
During task execution, worker availability fluctuations and execution uncertainty may cause deviations from the offline plan and lead to sensing-quality deficits. To address these deficits under limited interaction budgets, iParts activates a lightweight online adjustment mechanism based on a transient potential game for temporary recruitment among idle or standby workers. With potential-guided remedial updates, the online stage performs bounded-round adjustment to mitigate quality deficits when feasible, while improving pSW with low overhead and limited observable exposure.

Together, the two stages move privacy-sensitive structural decisions to offline planning under personalized LDP and risk constraints, while restricting online decisions to lightweight and temporary remediation.

\subsection{Basic Modeling}
We introduce the basic models of tasks, workers, and intent preferences with perturbed reports.

\noindent\textbf{\textit{(i)} Tasks:}
The attributes of each sensing task $s_i \in \bm{\mathcal{S}}$ are characterized by the triple
$\{\, loc_i^\mathrm{S},\, B_i,\,  Q_i^\mathrm{D}\,\}$,
where $loc_i^\mathrm{S}$ denotes its spatial location;
$B_i$ is the nominal budget;
and $Q_i^\mathrm{D}$ represents the minimum sensing-quality requirement (i.e., quality-demand threshold).

\noindent\textbf{\textit{(ii)} Workers:}
The attributes of each worker $w_j \in \bm{\mathcal{W}}$ are modeled by a 4-tuple
$
\{\, loc_j^\mathrm{W},\, \varepsilon_j,\, \theta_j,\, \alpha_j \,\}$,
where $loc_j^\mathrm{W}$ denotes the worker's location,
$\varepsilon_j$ is the personalized intent-privacy budget used in personalized LDP,
$\theta_j>0$ is the sensing capability parameter (a larger $\theta_j$ implies a smaller observation error),
while $\alpha_j\in\{0,1\}$ indicates whether $w_j$ is practically available during the online stage, capturing the arrival/dropout uncertainty.
Moreover, we model $\alpha_j$ as a Bernoulli random variable, i.e.,
$\alpha_j\sim \mathbf{B}(\pi_j)$, with
$\Pr(\alpha_j=1)=\pi_j$ and $\Pr(\alpha_j=0)=1-\pi_j$.

\noindent\textbf{\textit{(iii)} Intent preference and perturbed intent report:}
We denote the true intent vector of worker $w_j$ by
$\mathbf{b}_j=[b_{i,j}]_{s_i\in\bm{\mathcal{S}}}$,
where $b_{i,j}=1$ indicates that $w_j$ is willing to undertake task $s_i$ (and $b_{i,j}=0$ otherwise).
To prevent direct disclosure of task preferences while preserving participatory autonomy, each worker locally generates perturbed intent report:
$\tilde{\mathbf{b}}_j=[\tilde b_{i,j}]_{s_i\in\bm{\mathcal{S}}}$. The platform can observe only $\tilde{\mathbf{b}}_j$, whereas the true intent vector $\mathbf{b}_j$ remains private.

\subsection{Intent-Report Distortion Model}\label{subsec:intent_quality_loss}
To quantify the report-level utility loss induced by intent perturbation, we introduce the expected intent-report distortion (EIRD). Since the platform can only observe perturbed intent reports, the reported preferences used for offline pre-planning may deviate from workers' true intents. Such deviation may reduce the informativeness of intent reports and affect the quality of subsequent task-worker mapping. Different from allocation-level mismatch risk, which depends on the final assignment decision, EIRD characterizes the mechanism-dependent distortion between the true intent vector and the perturbed report, and can be directly computed from the perturbation rule.

\begin{Defn}[Expected intent-report distortion (EIRD)]\label{def:expected_report_distortion}
	For worker $w_j$, the EIRD is defined as\footnote{Eq.  (\ref{eq:Qloss_intent_rev}) takes expectation over both the prior distribution and IPM.}
	\begin{equation}
		\begin{aligned}
			Q&^{\mathrm{loss}}_j(\phi,f,\Delta_j)
			=
			\mathbb{E}_{\mathbf{b}_j\sim\phi}\Big[
			\mathbb{E}_{\tilde{\mathbf{b}}_j\sim f(\cdot\mid \mathbf{b}_j)}\big[
			\Delta_j(\tilde{\mathbf{b}}_j,\mathbf{b}_j)
			\big]\Big]
			\\&=
			\sum_{\mathbf{b}_j\in\mathcal{B}_j}\phi(\mathbf{b}_j)
			\sum_{\tilde{\mathbf{b}}_j\in\tilde{\mathcal{B}}_j}
			f(\tilde{\mathbf{b}}_j\mid\mathbf{b}_j)\,
			\Delta_j(\tilde{\mathbf{b}}_j,\mathbf{b}_j),
		\end{aligned}
		\label{eq:Qloss_intent_rev}
	\end{equation}
	where $\phi(\mathbf{b}_j)$ denotes the prior distribution of the true intent vector, and
	$f(\cdot)$ denotes the intent perturbation mechanism (IPM).
	Moreover, $\mathcal{B}_j$ and $\tilde{\mathcal{B}}_j$ describe the input and output spaces of IPM, respectively.
\end{Defn}

To measure the entry-wise deviation between the perturbed report and the true intent vector, we adopt a weighted Hamming distortion:
\begin{equation}
	\Delta_j(\tilde{\mathbf{b}}_j,\mathbf{b}_j)
	\triangleq
	\sum_{s_i\in\bm{\mathcal{S}}}
	\gamma_j \big|b_{i,j}-\tilde b_{i,j}\big|,
	\label{eq:intent_distortion_rev}
\end{equation}
where $\big|b_{i,j}-\tilde b_{i,j}\big|=1$ if the intent entry associated with task $s_i$ is flipped by the IPM, and equals $0$ otherwise. The coefficient $\gamma_j\ge 0$ is a worker-specific weighting factor that measures the normalized cost of one flipped intent entry. Accordingly, $\Delta_j(\tilde{\mathbf{b}}_j,\mathbf{b}_j)$ aggregates the entry-wise perturbation errors into a report-level distortion score, providing a simple and mechanism-dependent proxy for quantifying how intent perturbation changes the information observed by the platform.

\subsection{Intent Inference Model under Repeated Observations}\label{subsec:intent_attack_model}
We adopt an \emph{honest-but-curious} platform assumption, under which the platform faithfully executes the prescribed offline and online protocols, but may infer workers' true intents or long-term preferences from observable intent reports, allocation histories, and interaction logs. Accordingly, we analyze two representative intent-inference attacks: \emph{one-snapshot} attack and \emph{multi-snapshot} attack.

\subsubsection{One-snapshot attack}\label{subsubsec:one_snapshot_intent}
The adversary is assumed to know the prior distribution $\phi(\mathbf{b}_j)$ of worker $w_j$'s true intent vector\footnote{Following a conservative inference setting, we assume that $\phi(\mathbf{b}_j)$ is known to the adversary. In practice, such a prior can be estimated from historical participation statistics, public information, or long-term observations.}~\cite{cai2025towards}, and observes the perturbed intent report $\tilde{\mathbf{b}}_j$ generated by the IPM, i.e., $f(\tilde{\mathbf{b}}_j\mid\mathbf{b}_j)$.
By Bayes' rule, the posterior distribution of the true intent vector is
\begin{equation}
	\Pr(\mathbf{b}_j\mid\tilde{\mathbf{b}}_j)=
	\frac{f(\tilde{\mathbf{b}}_j\mid\mathbf{b}_j)\,\phi(\mathbf{b}_j)}
	{\sum_{\mathbf{b}'_j\in\mathcal{B}_j } f(\tilde{\mathbf{b}}_j\mid\mathbf{b}'_j)\,\phi(\mathbf{b}'_j)}.
	\label{eq:posterior_intent_rev}
\end{equation}
where $\mathbf{b}'_j$ is a dummy summation variable over the feasible set $\mathcal{B}_j$, and the denominator is the marginal probability $\Pr(\tilde{\mathbf{b}}_j)$ for normalization.
Given the posterior, the adversary performs optimal inference by choosing an estimate $\hat{\mathbf{b}}_j$ that minimizes the expected inference error (eIE)~\cite{cai2025towards,shokri2012protecting}:
\begin{equation}
	\hat{\mathbf{b}}_j=
	\arg\min_{\hat{\mathbf{b}}_j}
\sum_{\mathbf{b}_j\in\mathcal{B}_j}
\Pr(\mathbf{b}_j\mid\tilde{\mathbf{b}}_j)\,
d(\hat{\mathbf{b}}_j,\mathbf{b}_j).
\label{eq:optimal_inference_intent_rev}
\end{equation}

To measure the discrepancy between the inferred and true intent vectors, we adopt a task-level \emph{weighted Hamming distance} as the inference error metric:
\begin{equation}
d(\hat{\mathbf{b}}_j,\mathbf{b}_j)
=\sum_{s_i\in\bm{\mathcal{S}}} \omega_i\,\big|\hat{b}_{i,j}-b_{i,j}\big|, \qquad \omega_i\ge 0,
\label{eq:distance_intent_rev}
\end{equation}
where $\omega_i$ denotes the sensitivity/importance weight of task $s_i$.
The adversary's eIE under one-snapshot observation is defined as
\begin{equation}
\xi_j
=
\sum_{\mathbf{b}_j\in\mathcal{B}_j}\phi(\mathbf{b}_j)
\sum_{\tilde{\mathbf{b}}_j\in\tilde{\mathcal{B}}_j} f(\tilde{\mathbf{b}}_j\mid\mathbf{b}_j)\,
d\!\left(\hat{\mathbf{b}}_j(\tilde{\mathbf{b}}_j),\,\mathbf{b}_j\right),
\label{eq:inference_error_intent_rev}
\end{equation}
where $\hat{\mathbf{b}}_j(\tilde{\mathbf{b}}_j)$ is the adversary's optimal estimator after observing $\tilde{\mathbf{b}}_j$.

To limit one-snapshot intent inference, we require the inference error to be no smaller than a threshold $\beta^{0}$:
\begin{equation}
\xi_j \ge \beta^{0}.
\label{eq:error_lower_bound_intent_rev}
\end{equation}
Following representative studies\cite{cai2025towards,wang2019mobile}, to facilitate computation and constraint decomposition, we introduce an auxiliary function $h(\tilde{\mathbf{b}}_j)$, and rewrite \eqref{eq:error_lower_bound_intent_rev} equivalently as the following two constraints:
\begin{equation}
\sum_{\mathbf{b}_j\in\mathcal{B}_j}\phi(\mathbf{b}_j)\,
f(\tilde{\mathbf{b}}_j\mid\mathbf{b}_j)\,
d\!\left(\hat{\mathbf{b}}_j(\tilde{\mathbf{b}}_j),\,\mathbf{b}_j\right)
\ge h(\tilde{\mathbf{b}}_j),\quad \forall \tilde{\mathbf{b}}_j,
\label{eq:h_lower_bound_intent_1_rev}
\end{equation}
\begin{equation}
\sum_{\tilde{\mathbf{b}}_j\in\tilde{\mathcal{B}}_j} h(\tilde{\mathbf{b}}_j)\ge \beta^{0}.
\label{eq:h_lower_bound_intent_2_rev}
\end{equation}

\subsubsection{Multi-snapshot attack}\label{subsubsec:multi_snapshot_intent}
In multi-snapshot settings, the adversary continuously observes a sequence of perturbed intent reports submitted by the same worker $w_j$ over multiple rounds,
$\{\tilde{\mathbf{b}}^{(1)}_j,\tilde{\mathbf{b}}^{(2)}_j,\ldots,\tilde{\mathbf{b}}^{(T)}_j\}$,
and performs frequency-based profiling on each task dimension.

For task dimension $s_i\in\bm{\mathcal{S}}$, the empirical frequency of observing output 1 over $T$ reports is defined as
\begin{equation}
F_{i,j} \triangleq \frac{1}{T}\sum_{\tau=1}^{T}\tilde b^{(\tau)}_{i,j},
\label{eq:freq_stat_intent_rev}
\end{equation}
where $\tilde b^{(\tau)}_{i,j}\in\{0,1\}$ is the perturbed response of worker $w_j$ on dimension $s_i$ at service round $\tau$.
When IPM is independent across rounds and has a stable expectation, the law of large numbers implies that $F_{i,j}$ converges to its mathematical expectation with high probability, thereby providing a stable statistical signal for the adversary.

Based on $F_{i,j}$, the adversary may apply a likelihood-based frequency-threshold rule to infer the underlying intent.
For the standard \emph{symmetric} binary randomized response, this reduces to a simple majority-vote test:
\begin{equation}
	\hat b_{i,j} = \mathbbm{1}\{F_{i,j}\ge 1/2\},\qquad \forall s_i\in\bm{\mathcal{S}},
	\label{eq:majority_inference_intent_rev}
\end{equation}
whereas in general the optimal threshold depends on the perturbation parameters and the prior.

The above attack shows that if per-round perturbations are independent and identically distributed (i.i.d.) with a stable mean, averaging $\{\tilde b^{(\tau)}_{i,j}\}_{\tau=1}^{T}$ progressively reduces the estimation variance. As $T$ increases, the adversary's inference accuracy may improve, i.e., the eIE decreases, leading to \emph{privacy degradation} under long-term observations.

To suppress frequency convergence and profiling inference under repeated observations,
we incorporate two complementary designs into IPM:
\textit{(i)}~\emph{$\varepsilon_j$-personalized LDP} to bound per-report distinguishability; and
\textit{(ii)}~\emph{memoization-based permanent randomized response} to suppress the adversary's variance-reduction gain from repeated observations within an epoch.
Specifically, within each memo-epoch $e$, worker $w_j$ first generates an epoch-stable permanent perturbed intent vector
$\tilde{\mathbf{b}}^{\mathrm{perm}}_j(e)$ via entry-wise personalized randomized response under $\varepsilon_j$,
stores it locally, and then {reuses the same} $\tilde{\mathbf{b}}^{\mathrm{perm}}_j(e)$ for all reporting rounds in epoch $e$.
Therefore, repeated observations within the same epoch reveal only the same epoch-stable perturbed vector, preventing the adversary from progressively reducing uncertainty through intra-epoch frequency statistics.
To accommodate slowly drifting intents and limit long-horizon linkability, we further enforce epoch-level refresh:
when entering $e{+}1$, worker $w_j$ discards $\tilde{\mathbf{b}}^{\mathrm{perm}}_j(e)$ and regenerates a new permanent report
$\tilde{\mathbf{b}}^{\mathrm{perm}}_j(e{+}1)$ using the same personalized randomized response rule with budget $\varepsilon_j$.

\subsection{Personalized LDP Guarantee}\label{subsec:personalized_ldp_intent}
\begin{Defn}[$\varepsilon_j$-personalized LDP~\cite{yu2017dynamic}]
\label{def:personalized_ldp_intent}
Given the task set $\bm{\mathcal{S}}$ and the output space $\tilde{\mathcal{B}}$, two intent vectors $\mathbf{b}_j^{(1)}$ and $\mathbf{b}_j^{(2)}$ are adjacent if they differ in exactly one task dimension, i.e., $\|\mathbf{b}_j^{(1)}-\mathbf{b}_j^{(2)}\|_0=1$. A randomized mechanism $f$ is said to satisfy $\varepsilon_j$-personalized LDP for worker $w_j$ if, for any adjacent $\mathbf{b}_j^{(1)}$ and $\mathbf{b}_j^{(2)}$ and any output $\tilde{\mathbf{b}}^{*}\in\tilde{\mathcal{B}}$, it holds that
\begin{equation}
\frac{f(\tilde{\mathbf{b}}^{*}\mid \mathbf{b}_j^{(1)})}{f(\tilde{\mathbf{b}}^{*}\mid \mathbf{b}_j^{(2)})} \le e^{\varepsilon_j}.
\label{eq:personalized_ldp_def_rev}
\end{equation}
The privacy budget $\varepsilon_j$ is specified according to worker $w_j$'s personalized privacy preference, with a smaller value indicating stronger privacy protection.
\end{Defn}

Moreover, our multi-snapshot attack relies on the fact that intent vectors are repeatedly observable on discrete task dimensions, which differs from location-privacy settings where the exact same coordinate is less likely to recur in a continuous space.
Therefore, multi-snapshot statistical inference is particularly relevant for intent privacy in dynamic MCS.
Condition~\eqref{eq:personalized_ldp_def_rev} provides a mechanism-level distinguishability guarantee for intent reporting, which facilitates privacy parameterization and downstream post-processing analysis. In contrast, constraints~\eqref{eq:h_lower_bound_intent_1_rev}--\eqref{eq:h_lower_bound_intent_2_rev} characterize inference resistance from the perspective of the adversary's optimal estimation error.

\section{Offline Stage: Intent-Preserving and Risk-Aware Service Pre-Planning}
\subsection{Redundancy-Aware Sensing Quality}
To capture the common phenomenon that repeated sampling of the same task by multiple workers yields diminishing marginal quality gains, we build a redundancy-aware task-level quality aggregation model.

For task $s_i$, we directly characterize the sensing reliability of worker $w_j$ by an \emph{effective error variance} $\sigma_{i,j}^2$, which abstracts the overall uncertainty of the data produced by $w_j$ when serving $s_i$, as given by
\begin{equation}
	\sigma_{i,j}^2
=\frac{\sigma_{0,i}^2}{\theta_j}
+(\sigma_{i,j}^\mathrm{data})^2,
\qquad \theta_j>0,
	\label{eq:sigma_decompose}
\end{equation}
where $\sigma_{0,i}^2$ represents the task-dependent reference variance under standard sensing conditions, and $\theta_j$ denotes the sensing capability factor of worker $w_j$.
A larger $\theta_j$ implies higher sensing capability and thus a smaller capability-related variance.
The data-quality term $(\sigma_{i,j}^\mathrm{data})^2$ can reflect factors such as distance, delay, and network conditions.
Accordingly, we define the single-shot sensing quality of $w_j$ for task $s_i$ as the inverse of the error variance (i.e., the precision):
\begin{equation}
	q_{i,j}= \frac{1}{\sigma_{i,j}^2}.
	\label{eq:qij_precision}
\end{equation}

Let the realized number of participating workers for task $s_i$ be $n_i = \sum_{w_j\in\bm{\mathcal{W}}} x_{i,j}\alpha_j$, where $x_{i,j}\in\{0,1\}$ is the offline assignment decision and $\alpha_j$ indicates whether worker $w_j$ actually arrives/participates online.
To capture the inherent correlation among sensing contributions arising from spatial and temporal redundancy, we introduce a redundancy correlation coefficient $\zeta_i\in[0,1)$ for each task.
This characterizes the diminishing marginal utility of multiple worker contributions caused by repeated or correlated sampling. Accordingly, we define the redundancy-aware aggregated quality as
\begin{equation}
	Q_i=
	\frac{\sum_{w_j\in\bm{\mathcal{W}}} x_{i,j}\alpha_j q_{i,j}}
	{1+\big(n_i-1\big)\zeta_i},
	\label{eq:Qi_redundancy}
\end{equation}
This formulation discounts correlated sensing contributions as the number of participating workers increases, thereby capturing the diminishing effective gain under redundant sensing.

\subsection{Utilities and Risks of Workers and Tasks}
\textit{(i) Worker utility:}
The utility of worker $w_j$ consists of payment and cost:
\begin{equation}
	u_j^\mathrm{W}
	=
	\sum_{s_i\in\bm{\mathcal{S}}} x_{i,j}\alpha_j\big(p_{i,j}-c^\mathrm{W}_{i,j}\big),
	\label{eq:uj_worker}
\end{equation}
where $p_{i,j}$ is the payment from the platform to worker $w_j$ for executing task $s_i$, and $c^\mathrm{W}_{i,j}$ is the worker cost.
We decompose the total cost of worker $w_j$ serving task $s_i$ into an execution cost and a privacy cost, i.e.,
\begin{equation}
	c^\mathrm{W}_{i,j}
	=
	c^{\text{exe}}_{i,j}
	+
	c^{\text{priv}}_{j}.
	\label{eq:cw_decompose}
\end{equation}
The execution cost models the physical effort (e.g., traveling) required for $w_j$ to accomplish task $s_i$, defined as
\begin{equation}
	c^{\text{exe}}_{i,j}
	=
	\mu_j \cdot \mathrm{dist}\!\left(loc_j^\mathrm{W},\,loc_i^\mathrm{S}\right),
	\label{eq:cexe_def}
\end{equation}
where $\mu_j\ge 0$ is a worker-specific cost coefficient that converts travel distance into monetary cost, and
$\mathrm{dist}(\cdot)$ denotes the Euclidean distance.

To capture the compensation demand associated with intent-privacy protection, we model the privacy cost of worker $w_j$ as
\begin{equation}
	c^{\text{priv}}_{j}
	=
	\frac{\lambda_j}{\varepsilon_j},
	\label{eq:cpriv_def}
\end{equation}
where $\lambda_j\ge 0$ is a scaling coefficient converting privacy preference into monetary cost, and
$\varepsilon_j$ is the worker's personalized privacy budget in the IPM.
A smaller $\varepsilon_j$ implies stronger privacy protection and therefore a larger privacy cost, reflecting the opportunity cost induced by the worker's privacy preference.

\noindent\textit{(ii) Task utility:}
The utility of task $s_i$ involves the value gained from sensing quality and payment cost.
Using $\omega_3>0$ to map the aggregated quality into a utility gain, we have
\begin{equation}
	u_i^\mathrm{S}
	=
	\omega_3\,Q_i
	-\sum_{w_j\in\bm{\mathcal{W}}} x_{i,j}\alpha_j p_{i,j}.
	\label{eq:ui_task}
\end{equation}

\noindent\textit{(iii) Social welfare:}
The overall pSW is defined as the sum of task utilities and worker utilities:
\begin{equation}
	\mathbb{SW}=\sum_{s_i\in\bm{\mathcal{S}}}u_i^\mathrm{S}+\sum_{w_j\in\bm{\mathcal{W}}}u_j^\mathrm{W}.
	\label{eq:SW_def}
\end{equation}

\noindent\textit{(iv) Risk analysis:}
Due to the IPM and the underlying dynamism of MCS, we introduce two types of risks:

\noindent $\bullet$ \textbf{Unsatisfactory service quality risk.}
Owing to stochastic worker participation and the redundancy-discount effect in quality aggregation, task $s_i$ may receive an insufficient realized sensing quality, i.e., the aggregated sensing quality falls below the required minimum level.
Based on \eqref{eq:Qi_redundancy}, we define the risk of task $s_i$ receiving unsatisfactory service quality as
\begin{equation}
	R_i^\mathrm{Qual}
	=
	\Pr\left(
	\frac{\sum_{w_j\in\bm{\mathcal{W}}} x_{i,j}\alpha_j q_{i,j}}
	{1+\big(n_i-1\big)\zeta_i}
	<
	Q_i^\mathrm{D}
	\right).
	\label{eq:R_qual}
\end{equation}

\noindent $\bullet$ \textbf{Intent mismatch risk.}
Since the platform can observe only perturbed intent reports, a worker may be assigned to a task that it is truly unwilling to undertake, which may reduce worker satisfaction and discourage long-term participation.
We define the intent-mismatch ratio of worker $w_j$ as
\begin{equation}
	\mathrm{mis}_j(\bm{x},\mathbf{b}_j)
	\triangleq
	\frac{1}{|\bm{\mathcal{S}}|}\sum_{s_i\in\bm{\mathcal{S}}} x_{i,j}\,(1-b_{i,j}),
	\label{eq:mis_ratio}
\end{equation}
where $\bm{x}=[x_{i,j}]$ denotes the assignment matrix, and a mismatch occurs if and only if $x_{i,j}=1$ (i.e., the platform assigns $s_i$ to $w_j$) while $b_{i,j}=0$ (i.e., the worker is truly unwilling to execute $s_i$).
Since the platform is unaware of $\mathbf{b}_j$ and can only evaluate mismatch under the prior distribution $\phi(\mathbf{b}_j)$, we define the expected mismatch ratio as an intent mismatch risk:
\begin{equation}
	\begin{aligned}
		R_j^\mathrm{Pref}
		&=
		\mathbb{E}_{\mathbf{b}_j\sim \phi}\!\left[\mathrm{mis}_j(\bm{x},\mathbf{b}_j)\right]
		\\&=
		\sum_{\mathbf{b}_j\in\mathcal{B}_j}\phi(\mathbf{b}_j)\,
		\frac{1}{|\bm{\mathcal{S}}|}\sum_{s_i\in\bm{\mathcal{S}}} x_{i,j}(1-b_{i,j}).
	\end{aligned}
	\label{eq:R_pref}
\end{equation}
A smaller $R_j^\mathrm{Pref}$ indicates that, in a statistical sense, the tasks assigned to worker $w_j$ better align with its true intent, and thus the worker is more likely to remain willing to participate in subsequent service execution.

\subsection{Joint Optimization in the Offline Stage}\label{subsec:offline_problem}
In the offline stage, the platform performs risk-aware service pre-planning based on perturbed intent reports.
Accordingly, we formulate an eSW maximization problem:
\begin{align}
	\bm{\mathcal{F}^{\mathrm{off}}}:~&\underset{\bm{x}}{\max}~ \mathbb{E}_{\bm{\alpha}}\!\left[ \mathbb{SW} \right]
	\tag{25}\label{ProES_rev}\\
	\text{s.t.}~~~&
	\sum_{s_i\in\bm{\mathcal{S}}}x_{i,j}\le 1,\ \forall w_j\in \bm{\mathcal{W}}
	\tag{25a}\label{equ.25a_rev}\\
	& p_{i,j}\ge c_{i,j}^\mathrm{W} ,\ \forall s_i\in\bm{\mathcal{S}},\ \forall w_j\in\bm{\mathcal{W}}
	\tag{25b}\label{equ.25b_rev}\\
	& \sum_{w_j\in\bm{\mathcal{W}}}x_{i,j}p_{i,j}\le B_i,\ \forall s_i\in\bm{\mathcal{S}}
	\tag{25c}\label{equ.25c_rev}\\
	& f~\text{satisfies } \varepsilon_j\text{-personalized LDP in \eqref{eq:personalized_ldp_def_rev}},\ \forall w_j\in\bm{\mathcal{W}}
	\tag{25d}\label{equ.25d_rev}\\
	& Q^{\text{loss}}_j(\phi,f,\Delta_j)\le Q^{\text{loss}}_{\max},\ \forall w_j\in\bm{\mathcal{W}}
	\tag{25e}\label{equ.25e_rev}\\
	& \eqref{eq:h_lower_bound_intent_1_rev}\ \text{and}\ \eqref{eq:h_lower_bound_intent_2_rev}\ \text{hold},\ \forall w_j\in\bm{\mathcal{W}}
	\tag{25f}\label{equ.25f_rev}\\
	& R_i^\mathrm{Qual}\le \rho_1,\ \forall s_i\in\bm{\mathcal{S}}
	\tag{25g}\label{equ.25g_rev}\\
	& R_j^\mathrm{Pref}\le \rho_2,\ \forall w_j\in\bm{\mathcal{W}}.
	\tag{25h}\label{equ.25h_rev}
\end{align}
\setcounter{equation}{25} where $\rho_1$ and $\rho_2$ are risk thresholds in $(0,1]$,
constraint \eqref{equ.25a_rev} limits each worker to serve at most one task. Individual rationality is guaranteed by \eqref{equ.25b_rev}, requiring the payment to cover the worker's cost, while budget feasibility is enforced by \eqref{equ.25c_rev}, which bounds the total payment of each task owner within its budget. Privacy protection is imposed through constraint \eqref{equ.25d_rev}, where workers report perturbed intents under $\varepsilon_j$-personalized LDP. Meanwhile, \eqref{equ.25e_rev} restricts the expected intent-report distortion of each worker to be no greater than 
$Q^{\text{loss}}_{\max}$, and \eqref{equ.25f_rev} ensures a minimum inference-error level against intent inference attacks\footnote{These privacy-side requirements are implemented by the intent-privacy module in Sec.~\ref{Sec:Solution_IPM}, where workers that fail the feasibility check are excluded from subsequent candidate construction.}. Finally, constraints \eqref{equ.25g_rev} and \eqref{equ.25h_rev} explicitly bound the risks of service-quality violation and intent mismatch, respectively.

Problem $\bm{\mathcal{F}^{\mathrm{off}}}$ is a mixed-integer stochastic optimization problem, where elements in $\bm{x}$ are coupled with budget constraints and further complicated by risk constraints, redundancy-aware nonlinear quality, and privacy/inference requirements.
To address this problem, we develop RAPCoD, which consists of two key modules, as illustrated in Fig.~\ref{fig:offline}. We first equip each worker with an intent-privacy module (Module A) comprising two algorithms (see Sec. \ref{Sec:Solution_IPM}): \textit{(i)} \underline{p}e\underline{r}sonalized \underline{i}ntent-report deter\underline{m}ination for privacy-budg\underline{e}t calib\underline{r}ation (PRIMER) , and \textit{(ii)} \underline{m}emo\underline{i}zation-based personalized LDP perturbation for gene\underline{r}ating intent \underline{r}ep\underline{or}ts (MIRROR).
Then, we further cast the offline pre-planning problem as an exact potential game (Module B), for which we develop an algorithm called \underline{a}synchronous \underline{s}elf-organized \underline{p}otential \underline{i}mprovement for \underline{re}liable \underline{off}line pre-contract (ASPIRE-Off).
ASPIRE-Off performs exact feasible best-response updates over the unilateral feasible strategy space of each task, where each accepted update is evaluated by the exact offline potential rather than by sampling-based estimation.
As a result, ASPIRE-Off terminates at a constraint-satisfying pure-strategy NE, and the resulting equilibrium profile directly yields the task--worker assignment for long-term service contracts.

\subsection{Design of Module A}\label{Sec:Solution_IPM}
Our intent-privacy module jointly balances privacy protection and decision utility via two components. First, PRIMER calibrates each worker's personalized privacy budget so that the expected intent-report distortion remains below $Q^{\mathrm{loss}}_{\max}$ while the one-snapshot expected inference error remains no smaller than $\beta^0$ (see Alg.~\ref{Alg:PIRD}). Second, MIRROR applies memoization-based personalized LDP to generate an epoch-stable perturbed intent vector that is reused within each memo-epoch, avoiding independent per-round perturbation samples and reducing the variance-reduction gain of frequency-based profiling under multi-snapshot observations (see Alg.~\ref{Alg:MSIPM}).

\begin{figure}[t!]
	\centering
	\includegraphics[width=1\linewidth]{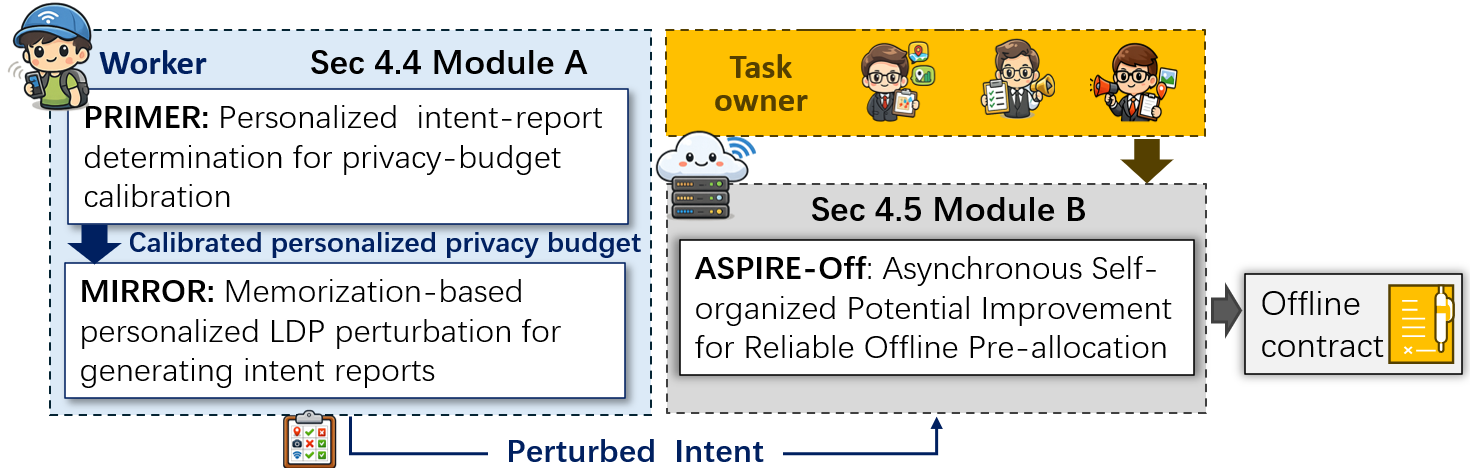}
	\caption{A flow chart regarding our proposed RAPCoD in the offline stage.}
	\label{fig:offline}
\end{figure}

\subsubsection{PRIMER: Personalized intent-report determination for privacy-budget calibration}
\noindent
PRIMER calibrates a deployable personalized privacy budget $\varepsilon_j^\star$ for each worker by jointly considering report usefulness and one-snapshot inference resistance, as shown in Alg.~\ref{Alg:PIRD}, incrementally searching for a \emph{feasible} personalized budget that satisfies both a usefulness constraint and a privacy-robustness constraint.
The main steps are summarized as follows.

\begin{algorithm}[b!]
	{\scriptsize
		\caption{Proposed PRIMER}
		\label{Alg:PIRD}
		\LinesNumbered
		
		\textbf{Input:} 
		task set $\bm{\mathcal{S}}$;
		worker $w_j$'s candidate privacy budget range $[\varepsilon_j^{\min},\varepsilon_j^{\max}]$;
		budget step $\Delta\varepsilon$;
		IPM $f(\cdot\mid \cdot;\varepsilon)$;
		prior $\phi(\mathbf b_j)$;
		distortion metric $\Delta_j(\tilde{\mathbf b}_j,\mathbf b_j)$ in \eqref{eq:intent_distortion_rev};
		distortion cap $Q^{\mathrm{loss}}_{\max}$;
		(one-snapshot) inference-error threshold $\beta^0$.
		
		\textbf{Output:} calibrated privacy budget $\varepsilon_j^\star$.
		
		\BlankLine
		\textbf{Initialization:} $\varepsilon \leftarrow \varepsilon_j^{\min}$.
		
		\BlankLine
		\While{$\varepsilon \le \varepsilon_j^{\max}$}{
			Compute the expected intent-report distortion
			$Q^{\mathrm{loss}}_j(\phi,f,\Delta_j)$ using \eqref{eq:Qloss_intent_rev} with $f(\cdot\mid\cdot;\varepsilon)$;\\
			
			Compute the one-snapshot eIE $\xi_j$ using \eqref{eq:inference_error_intent_rev}
			and equivalently check \eqref{eq:h_lower_bound_intent_1_rev}--\eqref{eq:h_lower_bound_intent_2_rev};\\
			
			\If{$Q^{\mathrm{loss}}_j(\phi,f,\Delta_j)\le Q^{\mathrm{loss}}_{\max}$ \textbf{and} $\xi_j \ge \beta^0$}{
				\Return $\varepsilon_j^\star \leftarrow \varepsilon$;
			}
			$\varepsilon \leftarrow \varepsilon + \Delta\varepsilon$;
		}
		
		\BlankLine
		\Return \textsf{INFEASIBLE}; 
	}
\end{algorithm}

\vspace{0.25em}
\noindent
\textbf{Step 1. Personalized search initialization and design goal} (line 3, Alg. \ref{Alg:PIRD}):
For each worker $w_j$, we specify a candidate budget interval $[\varepsilon_j^{\min},\,\varepsilon_j^{\max}]$ and a step size $\Delta\varepsilon$,
and then initialize $\varepsilon\leftarrow \varepsilon_j^{\min}$.
The objective is to find a deployable privacy budget $\varepsilon_j^\star$ that simultaneously satisfies
\textit{(i)} the distortion cap $Q^{\mathrm{loss}}_{\max}$ and
\textit{(ii)} the one-snapshot inference-error requirement $\beta^0$.
Since $\varepsilon$ controls the privacy--utility trade-off, this step establishes a principled search region for balancing the two.

\noindent
\textbf{Step 2. IPM instantiation under a trial budget} (line 5, Alg. \ref{Alg:PIRD}):
Given the trial value $\varepsilon$, we instantiate IPM $f(\cdot\mid\cdot;\varepsilon)$,
which induces the perturbed intent distribution $\tilde{\mathbf b}_j\sim f(\cdot\mid \mathbf b_j;\varepsilon)$ under the prior $\phi(\mathbf b_j)$.
This instantiation enables evaluating both distortion and inference-risk metrics consistently under the same mechanism.

\noindent
\textbf{Step 3. Distortion-cap verification for decision usefulness} (line 5, Alg. \ref{Alg:PIRD}):
Under the instantiated $f$, we compute the expected intent-report distortion $Q^{\mathrm{loss}}_j(\phi,f,\Delta_j)$ according to \eqref{eq:Qloss_intent_rev}
(and the distortion metric $\Delta_j(\tilde{\mathbf b}_j,\mathbf b_j)$ in \eqref{eq:intent_distortion_rev}).
The trial budget is utility-feasible only if
\[
	Q^{\mathrm{loss}}_j(\phi,f,\Delta_j)\ \le\ Q^{\mathrm{loss}}_{\max},
\]
which ensures that the perturbed intent remains sufficiently informative for downstream pre-plan decisions.

\noindent
\textbf{Step 4. One-snapshot inference-robustness verification} (line 6, Alg. \ref{Alg:PIRD}):
For the same $\varepsilon$, we evaluate the adversary's one-snapshot eIE $\xi_j$ using \eqref{eq:inference_error_intent_rev}
and equivalently verify \eqref{eq:h_lower_bound_intent_1_rev}--\eqref{eq:h_lower_bound_intent_2_rev}.
The trial budget is privacy-feasible only if
\[
	\xi_j\ \ge\ \beta^0,
\]
meaning that a single observation of $\tilde{\mathbf b}_j$ keeps the adversary's expected estimation error above the required threshold.

\noindent
\textbf{Step 5. Feasibility decision and early stopping} (lines 7-8, Alg. \ref{Alg:PIRD}):
If both feasibility conditions hold, i.e., $Q^{\mathrm{loss}}_j(\phi,f,\Delta_j)\le Q^{\mathrm{loss}}_{\max}$ and $\xi_j\ge \beta^0$,
we accept the current budget and output $\varepsilon_j^\star\leftarrow \varepsilon$.
Since the scan proceeds from $\varepsilon_j^{\min}$ upward, the first feasible solution favors stronger privacy among the feasible candidates.

\noindent
\textbf{Step 6. Infeasibility handling via conservative opt-out} (lines 4--12, Alg.~\ref{Alg:PIRD}):
If no privacy budget within $[\varepsilon_j^{\min},\varepsilon_j^{\max}]$ satisfies
$Q^{\mathrm{loss}}_j(\phi,f,\Delta_j)\le Q^{\mathrm{loss}}_{\max}$ and $\xi_j\ge \beta^0$ simultaneously,
we return \textsf{INFEASIBLE}.
In this case, worker $w_j$ is not included in the offline candidate set for service pre-planning.
Thus, constraints \eqref{equ.25e_rev}--\eqref{equ.25f_rev} are ensured for all workers that participate in RAPCoD.

\subsubsection{MIRROR: Memoization-based personalized LDP perturbation for generating intent reports}
Given the calibrated budget from Alg.~\ref{Alg:PIRD}, we design MIRROR (see Alg.~\ref{Alg:MSIPM}), which specifies how each worker generates and reuses perturbed intent reports under repeated observations, with key steps below.
\begin{algorithm}[t!]
	{\scriptsize
		\caption{Proposed MIRROR}
		\label{Alg:MSIPM}
		\LinesNumbered
		
		\textbf{Input:}
		task set $\bm{\mathcal{S}}$;
		worker $w_j$'s true intent vector $\mathbf b_j=[b_{i,j}]_{s_i\in\bm{\mathcal{S}}}\in\{0,1\}^{|\bm{\mathcal{S}}|}$;
		calibrated privacy budget $\varepsilon_j^\star$ (from Alg.~\ref{Alg:PIRD});
		memo-epoch index $e$.
		
		\textbf{Output:}
		perturbed intent report $\tilde{\mathbf b}_j=[\tilde b_{i,j}]_{s_i\in\bm{\mathcal{S}}}$ to the platform.
		
		\BlankLine
		\textbf{// Permanent randomization to resist multi-snapshot frequency attacks}\\
		\If{$\tilde{\mathbf b}^{\mathrm{perm}}_j(e)$ is not stored}{
			\For{$\forall s_i\in\bm{\mathcal{S}}$}{
				Sample $\tilde b^{\mathrm{perm}}_{i,j}(e) \sim \textsc{RR}\!\left(b_{i,j},\varepsilon_j^\star\right)$;
			}
			Store $\tilde{\mathbf b}^{\mathrm{perm}}_j(e)$ locally for epoch $e$;
		}
		
		\BlankLine
		\textbf{// Report generation in each round within epoch $e$}\\
		Set $\tilde{\mathbf b}_j \leftarrow \tilde{\mathbf b}^{\mathrm{perm}}_j(e)$;
		
		\Return $\tilde{\mathbf b}_j$.
	}
\end{algorithm}

\vspace{0.25em}
\noindent
\textbf{Step 1. Personalized LDP instantiation from the calibrated budget} (line 1, Alg.~\ref{Alg:MSIPM}):
Each worker $w_j$ takes $\varepsilon_j^\star$ and instantiates an $\varepsilon_j^\star$-personalized LDP perturbation rule for each binary intent entry
(e.g., RR in \eqref{eq:RR_def}), which directly provides the personalized LDP guarantee in \eqref{eq:personalized_ldp_def_rev}.

\noindent
\textbf{Step 2. Epoch-level memoization trigger } (line 4, Alg.~\ref{Alg:MSIPM}):
We partition time into memo-epochs $e=1,2,\ldots$. Upon its first reporting instance in epoch $e$, worker $w_j$ verifies the existence of a locally cached epoch-specific permanent report $\tilde{\mathbf b}^{\mathrm{perm}}_j(e)$. If $\tilde{\mathbf b}^{\mathrm{perm}}_j(e)$ has not been previously generated, the worker applies permanent randomization to construct $\tilde{\mathbf b}^{\mathrm{perm}}_j(e)$ and stores it locally. Otherwise, it reuses the cached vector without modification.

\noindent
\textbf{Step 3. Permanent randomization to form an epoch-stable perturbed intent vector} (lines 5-7, Alg.~\ref{Alg:MSIPM}):
For each task $s_i\in\bm{\mathcal{S}}$, worker $w_j$ perturbs the true binary intent indicator $b_{i,j}\in\{0,1\}$
via personalized randomized response (RR)\cite{warner1965randomized} with budget $\varepsilon_j^\star$.
Specifically, for any $z\in\{0,1\}$ and privacy budget $\varepsilon>0$, $\textsc{RR}(z,\varepsilon)$ generates $\tilde z\in\{0,1\}$ according to
\begin{equation}
	\Pr(\tilde z=z)=\frac{e^{\varepsilon}}{e^{\varepsilon}+1},
	\qquad
	\Pr(\tilde z=1-z)=\frac{1}{e^{\varepsilon}+1}.
	\label{eq:RR_def}
\end{equation}
Accordingly, the epoch-$e$ permanent perturbed intent entry is sampled as
\begin{equation}
	\tilde b^{\mathrm{perm}}_{i,j}(e)\ \sim\ \textsc{RR}\!\left(b_{i,j},\,\varepsilon_j^\star\right),
	\qquad \forall s_i\in\bm{\mathcal{S}}.
\end{equation}
The resulting vector $\tilde{\mathbf b}^{\mathrm{perm}}_j(e)$ is then stored locally and treated as fixed throughout epoch $e$,
so that repeated reports within this epoch do not provide additional independent samples to an observer (thus suppressing frequency-based multi-snapshot attacks).

\noindent
\textbf{Step 4. Round-wise report reuse to suppress multi-snapshot frequency attacks} (lines 8-9, Alg.~\ref{Alg:MSIPM}):
In each interaction round $\tau$ within epoch $e$, worker $w_j$ reports
$\tilde{\mathbf b}^{(\tau)}_j \leftarrow \tilde{\mathbf b}^{\mathrm{perm}}_j(e)$.
This memoization strategy prevents repeated reports within the same memo-epoch from forming independent samples, thereby reducing the variance-reduction gain of multi-snapshot inference.

The platform treats $\tilde b_{i,j}$ only as an eligibility signal for constructing candidate sets and performs downstream pre-planning and pricing based on the perturbed intent vector $\tilde{\mathbf b}_j$. It does not require reconstructing the true intent vector $\mathbf b_j$.
Therefore, the downstream pipeline can be viewed as post-processing of $\tilde{\mathbf b}_j$. By the post-processing property of personalized LDP~\cite{dong2022gaussian}, platform-side quantities derived from perturbed reports and public parameters do not consume additional privacy budget beyond the calibrated $\varepsilon_j^\star$. Moreover, to accommodate slowly drifting intents in long-term participation and to limit long-horizon linkability, we partition time into memo-epochs and enforce epoch-level refresh: when entering epoch $e{+}1$, worker $w_j$ discards the stored permanent report $\tilde{\mathbf b}^{\mathrm{perm}}_j(e)$ and regenerates a new permanent vector $\tilde{\mathbf b}^{\mathrm{perm}}_j(e{+}1)$ by reapplying the same entry-wise randomized response rule with budget $\varepsilon_j^\star$. This refresh avoids indefinite reuse of the same permanent report, reduces cross-epoch linkability, and preserves the memoization benefit within each epoch against frequency-based profiling\cite{wang2016using}.

The formal proofs of \textit{(i)} the $\varepsilon_j$-personalized LDP guarantee of MIRROR, 
\textit{(ii)} the post-processing invariance for all platform-side outputs, and 
\textit{(iii)} the memoization-based robustness against multi-snapshot frequency attacks 
are provided in Appx.~C.

\subsection{Design of Module B}
\label{subsec:task_side_pg}
We now move to Module B and formulate the offline task-worker pre-planning problem as a task-dominated non-cooperative game,
where each task selects a subset of candidate workers under local budget, privacy, and risk constraints.
We then show that the resulting game is an \emph{exact potential game} on the \emph{joint feasible set}, guaranteeing the existence of at least one pure-strategy constrained NE and the finite improvement property (FIP)\cite{monderer1996potential,ding2021potential}.
Leveraging these properties, we develop \textit{ASPIRE-Off} (Alg.~\ref{Alg:TaskDPAsyncOff}), an asynchronous exact feasible best-response algorithm.
In each round, every task enumerates its unilateral feasible strategy space under the current strategies of the other tasks, evaluates the exact offline potential of each feasible unilateral deviation, and reports its best feasible response.
Since the game is an exact potential game over a finite feasible set and each accepted update is an exact feasible best-response improvement, ASPIRE-Off terminates in finite steps at a constrained NE, as detailed after Definition~\ref{def:NE_off} and Theorem~\ref{thm:task_exact_potential_en}.

\subsubsection{Game formulation}
As noted, each task $s_i$ seeks high redundancy-aware sensing quality under its budget and risk boundaries.
However, tasks are coupled through a shared worker pool. Because each worker can be included in the pre-plan of at most one task, the long-term contracting decision of a given task induces an immediate externality on others by restricting their feasible worker sets and limiting their attainable service quality. Hence, we formulate the task-side interaction in the offline stage as a non-cooperative game\cite{wang2022cooperative}:
\begin{equation}
	\mathcal{G}^{\mathrm{off}}
	=
	\Big(
	\bm{\mathcal{S}},\
	\{\mathcal{A}_i\}_{s_i\in\bm{\mathcal{S}}},\
	\{U_i\}_{s_i\in\bm{\mathcal{S}}}
	\Big),
	\label{eq:Game_task_off_def_en}
\end{equation}
where core elements are detailed below:

\vspace{0.2em}
\noindent$\bullet$ \textbf{Players.}
Each sensing task is represented by a platform-side virtual player.
Each task $s_i\in\bm{\mathcal{S}}$ selects a set of workers to form its offline plan.

\vspace{0.2em}
\noindent$\bullet$ \textbf{Worker-side acceptance filtering (non-strategic).}
Workers do not participate in the strategy updating process.
Whether a worker can be considered by a task is determined by the following acceptance screening:
\emph{(i) perturbed intent visibility}: $\tilde b_{i,j}=1$; and
\emph{(ii) individual rationality}: $p_{i,j}\ge c^\mathrm{W}_{i,j}$.
Accordingly, the candidate worker set of task $s_i$ is
\begin{equation}
	\tilde{\bm{\mathcal{W}}}_i
	\triangleq
	\Big\{
	w_j\in\bm{\mathcal{W}}\ \big|\ 
	\tilde b_{i,j}=1,\ p_{i,j}\ge c^\mathrm{W}_{i,j},\ \varepsilon_j^\star \neq \textsf{INFEASIBLE}
	\Big\}.
	\label{eq:candidate_workers_task_en}
\end{equation}
Namely, $s_i$ can select workers only from $\tilde{\bm{\mathcal{W}}}_i$.

\vspace{0.2em}
\noindent$\bullet$ \textbf{Strategy space.}
The strategy of task $s_i$ is defined as choosing a pre-plan set from its candidate worker pool, i.e., selecting a subset of eligible workers:
\begin{equation}
	a_i \in \mathcal{A}_i
	\triangleq
	\Big\{
	\bm{\mathcal{W}}_i \subseteq \tilde{\bm{\mathcal{W}}}_i\
	\big|\
	\sum_{w_j\in \bm{\mathcal{W}}_i} p_{i,j}\le B_i
	\Big\},
	\label{eq:Ai_task_en}
\end{equation}
where the budget feasibility is embedded into $\mathcal{A}_i$.
Denote the joint strategy by $\bm{a}=(a_1,\ldots,a_{|\bm{\mathcal{S}}|})$.
It induces the binary selection variable
\begin{equation}
	x_{i,j}=y_{i,j}(\bm{a})
	=
	\mathbbm{1}\{w_j\in a_i\},
	\qquad
	\bm{y}(\bm{a})=[y_{i,j}(\bm{a})].
	\label{eq:y_from_a_task_en}
\end{equation}

\vspace{0.2em}
\noindent$\bullet$ \textbf{Feasible set.}
Due to the exclusivity constraint that each worker can be pre-planned by at most one task, together with the system-wide risk constraints, the task strategies are coupled through shared feasibility requirements. We thus define the joint feasible strategy set~\cite{huang2024distributed} as
\begin{equation}
	\mathcal{A}^{\mathrm{feas}}
	\triangleq
	\Big\{
	\bm{a}\in\prod_{s_i\in\bm{\mathcal{S}}}\mathcal{A}_i
	\ \Big|\
	\sum_{s_i\in\bm{\mathcal{S}}} y_{i,j}(\bm{a})\le 1,\ 
	\eqref{equ.25g_rev},\eqref{equ.25h_rev}\ \text{hold}
	\Big\},
	\label{eq:A_feas_task_en}
\end{equation}
where the worker-exclusivity condition holds for all $w_j\in\bm{\mathcal{W}}$.
Moreover, $\prod_{s_i\in\bm{\mathcal{S}}}\mathcal{A}_i$ imposes local task-side feasibility, while worker exclusivity and risk constraints define the joint safety boundary.

\vspace{0.2em}
\noindent$\bullet$ \textbf{Offline potential function.}
To characterize the global effect of an offline pre-plan under the stochastic arrival $\bm{\alpha}$,
we define the offline potential function using eSW:
\begin{equation}
	\Phi^{\mathrm{off}}(\bm{a})
	\triangleq
	\mathbb{E}_{\bm{\alpha}}\!\left[
	\mathbb{SW}\big(\bm{y}(\bm{a}),\bm{\alpha}\big)
	\right].
	\label{eq:Phi_off_task_en}
\end{equation}
Note that payments cancel out when aggregating task and worker utilities in $\mathbb{SW}$; thus $\Phi^{\mathrm{off}}$ can be equivalently written in a ``quality gain minus execution cost'' form.
Moreover, since worker arrivals are modeled as independent Bernoulli random variables, $\Phi^{\mathrm{off}}$ and the quality-violation risk can be evaluated exactly over the finite arrival-state space.
For a given profile $\bm a$, let
\[
	\bm{\mathcal{W}}(\bm a)
	\triangleq
	\Big\{
	w_j\in\bm{\mathcal{W}}\ \big|\ \sum_{s_i\in\bm{\mathcal{S}}}y_{i,j}(\bm a)=1
	\Big\}
\]
be the set of pre-planned workers, and let
$\Omega(\bm a)=\{0,1\}^{|\bm{\mathcal{W}}(\bm a)|}$ denote all possible arrival realizations of these workers.
Then,
\begin{equation}
	\Phi^{\mathrm{off}}(\bm a)
	=
	\sum_{\bm\alpha\in\Omega(\bm a)}
	\Pr(\bm\alpha)\,
	\mathbb{SW}\big(\bm y(\bm a),\bm\alpha\big),
	\label{eq:Phi_off_exact_eval}
\end{equation}
where
\begin{equation}
	\Pr(\bm\alpha)
	=
	\prod_{w_j\in\bm{\mathcal{W}}(\bm a)}
	\pi_j^{\alpha_j}(1-\pi_j)^{1-\alpha_j}.
	\label{eq:arrival_prob_exact}
\end{equation}
Similarly, for each task $s_i$, $R_i^{\mathrm{Qual}}$ is computed by summing the probabilities of all arrival realizations under which the realized redundancy-aware quality is below $Q_i^{\mathrm{D}}$.
Therefore, the offline potential and risk constraints used by ASPIRE-Off are evaluated exactly rather than through Monte-Carlo sampling.

\vspace{0.2em}
\noindent$\bullet$ \textbf{Task payoff: marginal contribution.}
To strictly align unilateral task updates with the global objective, we define each task's payoff as its \emph{marginal contribution}\cite{konda2024optimal} to the potential, as given by:
\begin{equation}
	U_i(\bm{a})
	\triangleq
	\Phi^{\mathrm{off}}(\bm{a})
	-
	\Phi^{\mathrm{off}}(\varnothing,\bm{a}_{-i}),
	\qquad \forall s_i\in\bm{\mathcal{S}},
	\label{eq:Ui_marginal_task_en}
\end{equation}
where $(\varnothing,\bm{a}_{-i})$ means that $s_i$ selects an empty set while others keep unchanged.
This construction allows each task-side virtual player to act as a distributed decision agent, so that local unilateral updates are exactly aligned with global eSW improvement.
Besides, we define the constrained NE as follows:
\begin{Defn}[Constrained NE]\label{def:NE_off}
	A joint strategy $\bm{a}^\star\in\mathcal{A}^{\mathrm{feas}}$ is a (pure-strategy) constrained NE if for any task $s_i\in\bm{\mathcal{S}}$,
	\begin{equation}
		U_i(a_i^\star,\bm{a}_{-i}^\star)
		\ge
		U_i(a_i,\bm{a}_{-i}^\star),
		\quad
		\forall a_i\ \text{s.t.}\ (a_i,\bm{a}_{-i}^\star)\in\mathcal{A}^{\mathrm{feas}}.
		\label{eq:NE_task_def_cn_en}
	\end{equation}
\end{Defn}

When an NE is achieved, no individual task can further improve its marginal contribution payoff by unilaterally changing its pre-plan set within the joint feasible region, and thus the offline solution is stable.

\subsubsection{Potential property and existence of NE}

We show that the above task-side game forms an exact potential game on $\mathcal{A}^{\mathrm{feas}}$, which implies the existence of a pure-strategy NE and the FIP.

\begin{Defn}[Exact potential game]
	On the joint feasible set $\mathcal{A}^{\mathrm{feas}}$, if there exists a function $P(\bm{a})$ such that for any task $s_i$ and any two feasible joint strategies
	$\bm{a}=(a_i,\bm{a}_{-i})\in\mathcal{A}^{\mathrm{feas}}$ and
	$\bm{a}'=(a_i',\bm{a}_{-i})\in\mathcal{A}^{\mathrm{feas}}$,
	\begin{equation}
		U_i(\bm{a}')-U_i(\bm{a})
		=
		P(\bm{a}')-P(\bm{a}),
		\label{eq:exact_potential_task_def_en}
	\end{equation}
	then the game is an exact potential game and $P(\bm{a})$ is a potential function.
\end{Defn}

\begin{thm}[Potential structure and existence of NE]
	\label{thm:task_exact_potential_en}
	Given a nonempty $\mathcal{A}^{\mathrm{feas}}$, game $\mathcal{G}^{\mathrm{off}}$ is an exact potential game with potential function
	\begin{equation}
		P(\bm{a})=\Phi^{\mathrm{off}}(\bm{a}).
		\label{eq:P_equals_Phi_task_en}
	\end{equation}
	Hence, the game admits at least one pure-strategy constrained NE and satisfies the FIP\cite{zhu2022nash}.
\end{thm}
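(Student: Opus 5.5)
The argument rests on the marginal-contribution structure of the payoff in \eqref{eq:Ui_marginal_task_en}. Fix a task $s_i$ and two feasible profiles $\bm{a}=(a_i,\bm{a}_{-i})$ and $\bm{a}'=(a_i',\bm{a}_{-i})$ in $\mathcal{A}^{\mathrm{feas}}$. Then
\begin{equation*}
U_i(\bm{a}')-U_i(\bm{a})=\Phi^{\mathrm{off}}(\bm{a}')-\Phi^{\mathrm{off}}(\varnothing,\bm{a}_{-i})-\Phi^{\mathrm{off}}(\bm{a})+\Phi^{\mathrm{off}}(\varnothing,\bm{a}_{-i}).
\end{equation*}
The term $\Phi^{\mathrm{off}}(\varnothing,\bm{a}_{-i})$ is the same in both payoffs, since it depends only on $\bm{a}_{-i}$, so the difference collapses to $\Phi^{\mathrm{off}}(\bm{a}')-\Phi^{\mathrm{off}}(\bm{a})$. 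This is exactly \eqref{eq:exact_potential_task_def_en} with $P=\Phi^{\mathrm{off}}$.

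The one subtlety is that $\Phi^{\mathrm{off}}(\varnothing,\bm{a}_{-i})$ must be well defined even though $(\varnothing,\bm{a}_{-i})$ need not lie in $\mathcal{A}^{\mathrm{feas}}$. For instance, the quality-risk constraint \eqref{equ.25g_rev} fails for $s_i$ when it selects nobody. I will note that $\Phi^{\mathrm{off}}$ is defined by \eqref{eq:Phi_off_exact_eval} for any profile in $\prod_i\mathcal{A}_i$ with worker exclusivity. Since $\varnothing\in\mathcal{A}_i$ and removing workers preserves exclusivity, the reference value is a finite, well-defined number, and feasibility plays no role in the cancellation. I will also check that $\Phi^{\mathrm{off}}$ is finite everywhere. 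The sum over $\Omega(\bm a)$ is finite. When $n_i=0$ in \eqref{eq:Qi_redundancy}, the numerator is $0$ and the denominator is $1-\zeta_i>0$ because $\zeta_i<1$, so $Q_i$ stays finite. This well-definedness check is the main point needing care; the algebra itself is routine.

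For existence of a constrained NE, note that each $\tilde{\bm{\mathcal{W}}}_i$ is finite, so each $\mathcal{A}_i$ is finite, and therefore $\mathcal{A}^{\mathrm{feas}}$ is finite and, by hypothesis, nonempty. Take $\bm{a}^\star\in\arg\max_{\bm{a}\in\mathcal{A}^{\mathrm{feas}}}\Phi^{\mathrm{off}}(\bm{a})$. Suppose some feasible unilateral deviation $(a_i,\bm{a}^\star_{-i})\in\mathcal{A}^{\mathrm{feas}}$ gave $s_i$ a strictly higher payoff. By the identity above, it would then strictly increase $\Phi^{\mathrm{off}}$, contradicting maximality. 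Hence $\bm{a}^\star$ satisfies \eqref{eq:NE_task_def_cn_en}.

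For the FIP, consider any improvement path that stays in $\mathcal{A}^{\mathrm{feas}}$, where each step is a feasible unilateral deviation that strictly raises the deviator's payoff. Along such a path, $\Phi^{\mathrm{off}}$ strictly increases at every step. Because $\mathcal{A}^{\mathrm{feas}}$ is finite, no profile can repeat, so every such path is finite, following the standard argument of \cite{monderer1996potential}. Its endpoint admits no feasible improving deviation, so it is a constrained NE.
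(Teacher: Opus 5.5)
Your proof is correct and follows the paper's argument: the common reference term $\Phi^{\mathrm{off}}(\varnothing,\bm{a}_{-i})$ cancels to give the exact potential identity. Existence then comes from a maximizer of $\Phi^{\mathrm{off}}$ over the finite nonempty $\mathcal{A}^{\mathrm{feas}}$, and the FIP from strict potential ascent on a finite set. Your extra check that $\Phi^{\mathrm{off}}(\varnothing,\bm{a}_{-i})$ is well defined even when $(\varnothing,\bm{a}_{-i})\notin\mathcal{A}^{\mathrm{feas}}$ (with $Q_i=0$ at $n_i=0$ because $1-\zeta_i>0$) is left implicit in the paper, and you handle it correctly.
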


\begin{proof}
	For given task $s_i$ and any two feasible joint strategies
	$\bm{a}=(a_i,\bm{a}_{-i})\in\mathcal{A}^{\mathrm{feas}}$ and
	$\bm{a}'=(a_i',\bm{a}_{-i})\in\mathcal{A}^{\mathrm{feas}}$.
	Employing \eqref{eq:Ui_marginal_task_en}, we have
	\begin{equation}
		\begin{aligned}
			U_i(\bm{a}')&-U_i(\bm{a})
			=
			\Big(\Phi^{\mathrm{off}}(\bm{a}')-\Phi^{\mathrm{off}}(\varnothing,\bm{a}_{-i})\Big)
			\\&-
			\Big(\Phi^{\mathrm{off}}(\bm{a})-\Phi^{\mathrm{off}}(\varnothing,\bm{a}_{-i})\Big)\\
			&=
			\Phi^{\mathrm{off}}(\bm{a}')-\Phi^{\mathrm{off}}(\bm{a})
			=
			P(\bm{a}')-P(\bm{a}),
		\end{aligned}
	\end{equation}
	which verifies \eqref{eq:exact_potential_task_def_en}. Therefore, the game is an exact potential game on $\mathcal{A}^{\mathrm{feas}}$ with $P=\Phi^{\mathrm{off}}$.
	
	Since the joint feasible set $\mathcal{A}^{\mathrm{feas}}$ is finite and nonempty, the potential function admits at least one maximizer, which is a pure-strategy constrained NE. Under any asynchronous feasible better-response update, the potential function $P$ strictly increases, and thus the convergence to an NE can be guaranteed within a finite number of steps, implying the FIP.
\end{proof}
The above structure establishes an exact alignment between unilateral feasible task improvements and the increase of the global eSW. This alignment naturally leads to a low-overhead distributed solution via potential-driven asynchronous feasible updates.

\subsubsection{ASPIRE-Off: Asynchronous self-organized potential improvement for offline pre-planning}
By Theorem~\ref{thm:task_exact_potential_en}, asynchronous feasible best-response updates with exact potential evaluation converge to a constrained NE in a finite number of iterations.
The pseudo-code of ASPIRE-Off is provided in Alg.~\ref{Alg:TaskDPAsyncOff}, where each task enumerates all feasible unilateral deviations under the current strategies of the other tasks and selects the one with the largest exact potential value.
\begin{algorithm}[t!]
	{\scriptsize \setstretch{0.8}
		\caption{Proposed ASPIRE-Off}
		\label{Alg:TaskDPAsyncOff}
		\LinesNumbered
		
		\textbf{Input:}
		task set $\bm{\mathcal{S}}$, worker set $\bm{\mathcal{W}}$, candidate sets $\{\tilde{\bm{\mathcal{W}}}_i\}$;
		given payment menu $\bm p=[p_{i,j}]$, budgets $\{B_i\}$;
		reliability parameters $\{\pi_j\}$, quality parameters $\{q_{i,j}\}$, redundancy factors $\{\zeta_i\}$.
		
		\textbf{Initialization:}
		$t\leftarrow 0$; initialize a feasible pre-plan profile $\bm a(0)=\{a_i(0)\}\in\mathcal{A}^{\mathrm{feas}}$ using a feasibility-repair initialization;
		set $y_{i,j}(0)=\mathbbm{1}\{w_j\in a_i(0)\}$;
		$\bm{\mathcal{W}}^{\mathrm{idle}}(0)\leftarrow \{w_j\in\bm{\mathcal{W}}\mid \sum_{s_i\in\bm{\mathcal{S}}} y_{i,j}(0)=0\}$.
		
		\While{\textbf{true}}{
			Platform broadcasts the current profile $\bm a(t)$, equivalently $\bm y(t)$, and the idle worker set $\bm{\mathcal{W}}^{\mathrm{idle}}(t)$.
			
			\For{$\forall s_i\in\bm{\mathcal{S}}$}{
				$\bm{\mathcal{W}}^{\mathrm{avail}}_i(t)
				\leftarrow
				\big(a_i(t)\cup \bm{\mathcal{W}}^{\mathrm{idle}}(t)\big)\cap \tilde{\bm{\mathcal{W}}}_i$.
				
				Construct the exact unilateral feasible set
				\[
				\mathcal{A}^{\mathrm{uni}}_i(t)
				\triangleq
				\Big\{
				a_i'\subseteq \bm{\mathcal{W}}^{\mathrm{avail}}_i(t)
				\ \Big|\
				(a_i',\bm a_{-i}(t))\in\mathcal{A}^{\mathrm{feas}}
				\Big\}.
				\]
				
				Find the exact feasible best response
				\[
				a_i^{\mathrm{br}}(t)
				\in
				\arg\max_{a_i'\in\mathcal{A}^{\mathrm{uni}}_i(t)}
				\Phi^{\mathrm{off}}\big(a_i',\bm a_{-i}(t)\big).
				\]
				
				Compute the exact potential gain
				\[
				\Delta_i(t)
				\leftarrow
				\Phi^{\mathrm{off}}\big(a_i^{\mathrm{br}}(t),\bm a_{-i}(t)\big)
				-
				\Phi^{\mathrm{off}}\big(\bm a(t)\big).
				\]
			}
			
			$\mathcal{I}(t)\leftarrow \{i\in\bm{\mathcal{S}}\mid \Delta_i(t)>0\}$.
			
			\If{$\mathcal{I}(t)=\varnothing$}{
				\textbf{break}
			}
			
			Select one task index
			$i^\star\in\arg\max_{i\in\mathcal{I}(t)}\Delta_i(t)$.
			
			$a_{i^\star}(t+1)\leftarrow a_{i^\star}^{\mathrm{br}}(t)$.
			
			\For{$\forall i\in\bm{\mathcal{S}},\, i\neq i^\star$}{
				$a_i(t+1)\leftarrow a_i(t)$;
			}
			
			Update $y_{i,j}(t+1)=\mathbbm{1}\{w_j\in a_i(t+1)\}$.
			
			$\bm{\mathcal{W}}^{\mathrm{idle}}(t+1)
			\leftarrow
			\{w_j\in\bm{\mathcal{W}}\mid \sum_{s_i\in\bm{\mathcal{S}}} y_{i,j}(t+1)=0\}$.
			
			$t\leftarrow t+1$.
		}
		
		\textbf{Return:} $\bm a^\star=\bm a(t)$ and the induced pre-plan decision $\bm y(\bm a^\star)$.
	}
\end{algorithm}

\noindent\textbf{Step 1. Candidate construction and feasible initialization} (line 1, Alg.~\ref{Alg:TaskDPAsyncOff}):
During the offline stage, workers submit perturbed intent indicators $\tilde b_{i,j}$ via the personalized LDP mechanism.
By combining the perturbed-intent acceptance rule with individual-rationality screening, the platform constructs the candidate worker set $\tilde{\bm{\mathcal{W}}}_i$ for each task according to \eqref{eq:candidate_workers_task_en}.
The platform then initializes a feasible pre-plan profile $\bm a(0)=\{a_i(0)\}\in\mathcal{A}^{\mathrm{feas}}$, for example by using a feasibility-repair initialization.
Based on $\bm a(0)$, it derives the initial binary assignment map
$y_{i,j}(0)=\mathbbm{1}\{w_j\in a_i(0)\}$
and the initial idle-worker set
$\bm{\mathcal{W}}^{\mathrm{idle}}(0)=\{w_j\in\bm{\mathcal{W}}\mid \sum_{s_i\in\bm{\mathcal{S}}}y_{i,j}(0)=0\}$.

\vspace{0.35em}
\noindent\textbf{Step 2. Round-wise broadcast and available-set formation} (lines 3--5, Alg.~\ref{Alg:TaskDPAsyncOff}):
We discretize the offline improvement process into rounds $t=0,1,2,\ldots$.
At the beginning of each round, the platform broadcasts the current pre-plan profile $\bm a(t)$, equivalently $\bm y(t)$, and the idle-worker set $\bm{\mathcal{W}}^{\mathrm{idle}}(t)$.
Given these aggregate states, each task $s_i$ forms its available worker set as
\begin{equation}
	\bm{\mathcal{W}}^{\mathrm{avail}}_i(t)
=
\big(a_i(t)\cup\bm{\mathcal{W}}^{\mathrm{idle}}(t)\big)\cap \tilde{\bm{\mathcal{W}}}_i.
\end{equation}
This definition allows $s_i$ to retain its currently selected workers or switch to other available workers, while preventing it from selecting workers already occupied by other tasks.

\vspace{0.35em}
\noindent\textbf{Step 3. Exact unilateral feasible-set enumeration} (lines 6--8, Alg.~\ref{Alg:TaskDPAsyncOff}):
Instead of generating only one heuristic candidate set, ASPIRE-Off enumerates the full unilateral feasible strategy set of each task under the current strategies of the other tasks:
\begin{equation}
	\mathcal{A}^{\mathrm{uni}}_i(t)
	\triangleq
	\Big\{
	a_i'\subseteq \bm{\mathcal{W}}^{\mathrm{avail}}_i(t)
	\ \Big|\
	(a_i',\bm a_{-i}(t))\in\mathcal{A}^{\mathrm{feas}}
	\Big\}.
	\label{eq:offline_unilateral_feasible_set}
\end{equation}
This set contains all feasible unilateral deviations of task $s_i$ that satisfy local budget feasibility, worker exclusivity, quality-violation risk, and intent-mismatch risk under the current profile of other tasks.

\vspace{0.35em}
\noindent\textbf{Step 4. Exact best-response selection and potential-gain evaluation} (lines 9--13, Alg.~\ref{Alg:TaskDPAsyncOff}):
For each task $s_i$, ASPIRE-Off selects the exact feasible best response by solving
\begin{equation}
	a_i^{\mathrm{br}}(t)
	\in
	\arg\max_{a_i'\in\mathcal{A}^{\mathrm{uni}}_i(t)}
	\Phi^{\mathrm{off}}\big(a_i',\bm a_{-i}(t)\big).
	\label{eq:offline_exact_best_response}
\end{equation}
The corresponding exact potential gain is then computed as
\begin{equation}
	\Delta_i(t)
	=
	\Phi^{\mathrm{off}}\big(a_i^{\mathrm{br}}(t),\bm a_{-i}(t)\big)
	-
	\Phi^{\mathrm{off}}\big(\bm a(t)\big).
	\label{eq:offline_exact_gain}
\end{equation}
Both $\Phi^{\mathrm{off}}\big(a_i^{\mathrm{br}}(t),\bm a_{-i}(t)\big)$ and $\Phi^{\mathrm{off}}\big(\bm a(t)\big)$ are evaluated exactly according to \eqref{eq:Phi_off_exact_eval}--\eqref{eq:arrival_prob_exact}, rather than estimated through Monte-Carlo sampling.

\vspace{0.35em}
\noindent\textbf{Step 5. Asynchronous best-response update} (lines 14--22, Alg.~\ref{Alg:TaskDPAsyncOff}):
Let
\begin{equation}
	\mathcal{I}(t)
	=
	\{i\in\bm{\mathcal{S}}\mid \Delta_i(t)>0\}
\end{equation}
denote the set of tasks that admit a strictly improving feasible best response in round $t$.
If $\mathcal{I}(t)\neq\varnothing$, the platform selects one task
$i^\star\in\arg\max_{i\in\mathcal{I}(t)}\Delta_i(t)$
and only allows this task to update:
\begin{equation}
	a_{i^\star}(t+1)\leftarrow a_{i^\star}^{\mathrm{br}}(t),
	\qquad
	a_i(t+1)\leftarrow a_i(t),\ \forall i\neq i^\star.
\end{equation}
The platform then refreshes $\bm y(t+1)$ and $\bm{\mathcal{W}}^{\mathrm{idle}}(t+1)$ accordingly.
Because only one feasible unilateral update is committed in each round, worker exclusivity and all risk constraints are preserved.

\vspace{0.35em}
\noindent\textbf{Step 6. Termination and constrained-NE output} (line 23, Alg.~\ref{Alg:TaskDPAsyncOff}):
If $\mathcal{I}(t)=\varnothing$, no task can improve the exact potential by any feasible unilateral deviation.
By the exact potential property in Theorem~\ref{thm:task_exact_potential_en}, this is equivalent to saying that no task can improve its marginal-contribution payoff by unilaterally changing its strategy within $\mathcal{A}^{\mathrm{feas}}$.
Therefore, the terminal profile $\bm a^\star=\bm a(t)$ is a constrained NE in the sense of Definition~\ref{def:NE_off}, and the induced pre-plan decision $\bm y(\bm a^\star)$ is feasible and deliverable by construction.
Since $\mathcal{A}^{\mathrm{feas}}$ is finite and each accepted update strictly increases $\Phi^{\mathrm{off}}$, ASPIRE-Off terminates in a finite number of rounds.

Due to space limitations, the formal proofs of the main properties of ASPIRE-Off are provided in Appx.~C.

\section{Online Stage: Lightweight Quality Remediation via Temporary Recruitment}
\label{subsec:stage2_online_spot}
The offline pre-plan provides the baseline execution blueprint for online service provision. Nevertheless, MCS dynamics, such as worker dropout, mobility perturbations, and environmental fluctuations, may render some long-term contracted workers unavailable, thereby inducing realized sensing-quality deficits. Accordingly, we activate a temporary-recruitment potential game over idle workers to mitigate these deficits with bounded interaction rounds and low communication overhead.

For the online stage, let $\bm{\mathcal{W}}^{\mathrm{on}}$ denote the set of idle or standby workers that are available for temporary recruitment, and let $\bm{\mathcal{S}}^{\mathrm{on}}$ denote the set of tasks with unmet quality demands, i.e., tasks whose realized sensing quality remains below the required threshold.
We then characterize the practical utilities of workers and tasks in the online adjustment stage.

\noindent\emph{(i) Worker utility.}
The realized utility of worker $w_j$ in the online stage is
\begin{equation}
	u^\mathrm{W,\mathrm{on}}_j
	=
	\sum_{s_i\in\bm{\mathcal{S}}^{\mathrm{on}}}
	x^{\mathrm{on}}_{i,j}\big(p_{i,j}-c^\mathrm{W}_{i,j}\big),
	\label{eq:worker_utility_online_spot}
\end{equation}
where $x^{\mathrm{on}}_{i,j}\in\{0,1\}$ denotes the online execution/participation decision of $w_j$ for task $s_i$.

\noindent\emph{(ii) Task utility.}
By the redundancy-aware quality model in \eqref{eq:Qi_redundancy}, the incremental task-side utility contributed by temporary online recruitment is defined as
\begin{equation}
	u^\mathrm{S,\mathrm{on}}_i
	=
	\omega_3 \,
	\frac{\sum_{w_j\in\bm{\mathcal{W}}^{\mathrm{on}}} x^{\mathrm{on}}_{i,j}\, q_{i,j}}
	{1+\big(n^{\mathrm{on}}_i-1\big)\zeta_i}
	-
	\sum_{w_j\in\bm{\mathcal{W}}^{\mathrm{on}}} x^{\mathrm{on}}_{i,j}\, p_{i,j},
	\label{eq:task_utility_online_spot}
\end{equation}
where $n^{\mathrm{on}}_i \triangleq \sum_{w_j\in\bm{\mathcal{W}}^{\mathrm{on}}} x^{\mathrm{on}}_{i,j}$ is the realized number of participating (temporarily recruited) workers for task $s_i$ in the online stage.

\noindent\emph{(iii) Online social welfare.}
The online social welfare $\mathbb{SW}^{\mathrm{on}}$ measures the incremental welfare brought by temporary recruitment:
\begin{equation}
	\mathbb{SW}^{\mathrm{on}}
	=
	\sum_{s_i\in\bm{\mathcal{S}}^{\mathrm{on}}}
	u^{\mathrm{S,on}}_i
	+
	\sum_{w_j\in\bm{\mathcal{W}}^{\mathrm{on}}}
	u^{\mathrm{W,on}}_j .
	\label{eq:SW_online_spot}
\end{equation}

In the online stage, we aim to maximize the incremental online SW through low-overhead temporary recruitment, subject to perturbed-intent eligibility, budget feasibility, redundancy-aware quality-gap mitigation, and bounded interaction requirements. The detailed online formulation and its feasibility-preserving potential-improvement property are provided in Appx.~B.

\section{Evaluation}\label{sec:evaluation}
We conduct comprehensive evaluations to verify the effectiveness of our iParts.
Experiments are carried out via MATLAB R2025a.

\subsection{Key Parameters}\label{subsec:sim_setting}
Experiments are conducted on the real-world Chicago taxi trip dataset~\cite{qi2023matching,chicago_taxi_trips_2013}, which reports taxi trajectories from 2013 to 2016 over 77 community areas. Following the widely adopted MCS setup, we designate a single community area (e.g., Area 77) as the sensing region and extract the corresponding trajectory samples for market instantiation. Taxis are regarded as workers, and the worker pool $\bm{\mathcal{W}}$ is constructed by sampling taxis with sufficient historical traces. Worker availability is modeled by an arrival probability $\pi_j$, estimated from the historical presence frequency in a reference month, and the realized availability is sampled as $\alpha_j\sim \mathbf{B}(\pi_j)$. To emulate spatially uncertain sensing demands, task locations are uniformly generated within the sensing region.

To quantify $c^{\mathrm{exe}}_{i,j}$ in a data-driven manner, we extract three distance-related factors from the dataset:
\emph{(i)} the traveled distance of taxi ($w_j$) in the considered trip segment;
\emph{(ii)} the distance between the current location of $w_j$ and task location $loc_i^{\mathrm{S}}$;
and \emph{(iii)} the distance between the post-service location (after completing the trip segment) and $loc_i^{s}$.
We set $c^{\mathrm{exe}}_{i,j}$ to be proportional to a weighted sum of the above \textit{(i)-(iii)}.
The sensing quality $q_{i,j}$ is set inversely proportional to the service uncertainty, represented by the sum of factors \textit{(ii)} and \textit{(iii)},
which reflects that sensing quality generally degrades with increasing worker-task distance due to higher noise and latency.
Accordingly, key parameters are set as follows, following common settings in MCS studies~\cite{cai2025towards,qi2023matching,qi2026Accelerating}:
$\pi_j\in[0.56,0.96]$, $\theta_j\in[45,55]$,
$p_{i,j}\in[40,55]$, $B_i\in[200,250]$,
$Q_i^{D}\in[20,28]$, $\zeta_i\in[0.05,0.40]$,
$\mu_j\in[0.2,0.8]$, $\lambda_j\in[1,5]$,
$\varepsilon_j\in[0.1,5.0]$, $\omega_3=7$, and $\rho_1=\rho_2=0.2$.
Unless otherwise specified, ASPIRE-Off evaluates the offline potential and quality-violation risk exactly over the finite Bernoulli arrival-state space induced by each pre-plan profile. Each reported value is obtained by averaging over $100$ independent experiments with different random seeds (e.g., task locations, worker sampling, and Bernoulli arrivals).

\subsection{Benchmark Methods}
\label{subsec:baselines}

We compare iParts with five representative benchmarks that cover heuristic allocation, online re-optimization, learning-based allocation, privacy-preserving recruitment, and matching-based scheduling.

\noindent $\bullet$ \textbf{Myopic:}
A myopic online re-optimization method that reconstructs the task-worker allocation from scratch in each execution slot according to the currently active workers, unfinished tasks, remaining budgets, and realized quality deficits. This baseline follows the online MCS task-allocation setting in~\cite{zhang2025utility}.

\noindent $\bullet$ \textbf{GDRL:}
A graph-attention-based DRL allocation method adapted from~\cite{xu2023gdrl}. It encodes task-worker relationships as a graph and employs a Double-DQN-style policy to sequentially select feasible task-worker pairs under the same budget and feasibility filters.

\noindent $\bullet$ \textbf{DPCMAB:}
A differentially private combinatorial multi-armed bandit (CMAB) recruitment method adapted from~\cite{an2024privacy}. It perturbs sensing-quality or reward feedback and updates confidence-bound scores for worker recruitment. Note that DPCMAB protects recruitment feedback observations rather than workers' task-intent vectors, and thus serves as a privacy-preserving recruitment baseline rather than an intent-privacy baseline.

\noindent $\bullet$ \textbf{OSM:}
An online stable matching method adapted from matching-based dynamic MCS service trading schemes~\cite{qi2026Bank}. It constructs worker/task preference lists and applies many-to-one deferred acceptance under quota and budget constraints.

\noindent $\bullet$ \textbf{Greedy:}
A benefit-cost-ratio greedy method iteratively selects the feasible task-worker pair with the largest marginal benefit per unit cost. This method follows the budget-constrained marginal-density selection principle commonly adopted in MCS incentive and task-allocation mechanisms~\cite{tong2024privacy}.

We also use internal variants for ablation and privacy analysis: \textbf{NoP} removes intent privacy protection, \textbf{NoMem} removes memoization, \textbf{NoR} removes redundancy-aware quality modeling, and \textbf{NoOn} disables online remediation.

\subsection{Performance Metrics}
\label{subsec:metrics}

We adopt the following metrics to quantitatively evaluate welfare, online interaction efficiency, privacy robustness, and constraint satisfaction.

\noindent \textbf{\textit{(i)} Welfare/utility and interaction-efficiency metrics:}

\noindent $\bullet$ \textit{SW:}
SW measures the overall social welfare, obtained by aggregating the realized value of executing offline contracts in~\eqref{eq:SW_def} and the additional contribution of online recruitment in~\eqref{eq:SW_online_spot}. For scale-consistent comparison, we report the normalized value of SW in the figures. A larger SW indicates higher overall service efficiency under dynamic worker participation.

\noindent $\bullet$ \textit{TU:}
TU denotes task-side utility, capturing the effective service benefit and budget efficiency achieved by sensing tasks. A larger TU indicates that tasks obtain higher quality-aware returns under budget constraints.

\noindent $\bullet$ \textit{WU:}
WU denotes worker-side utility, reflecting workers' incentives and long-term willingness to participate. A larger WU indicates that the mechanism provides stronger participation motivation while satisfying individual rationality.

\noindent $\bullet$ \textit{INSW:}
To jointly characterize welfare gain and online interaction efficiency, we introduce interaction-normalized social welfare (INSW), defined as $\mathrm{INSW} \triangleq\frac{\mathrm{SW}}{1+\mathrm{NI}/|\mathcal{S}|}. $
Here, $\mathrm{NI}/|\mathcal{S}|$ denotes the average number of online interactions per task. A larger INSW means that the mechanism can convert limited online interactions into higher social welfare. For visualization only, we plot $\log_{10}(1+\mathrm{INSW})$ in the corresponding figure.

\noindent \textbf{\textit{(ii)} Online interaction/computational overhead metrics:}

\noindent $\bullet$ \textit{RT:}
RT describes the wall-clock running time for obtaining online decisions in each execution epoch under the same software/hardware configuration. It reflects the real-time computational efficiency of the online stage. The offline pre-planning stage is conducted before execution and can be amortized over the memoized planning horizon.

\noindent $\bullet$ \textit{NI:}
NI denotes the number of task-worker message exchanges during online decision-making, including online status querying, recruitment notification, and response collection. It captures both signaling overhead and the observable interaction surface during execution.

\noindent $\bullet$ \textit{IL:}
We model the uplink latency $t^{\mathrm{U}}_{i,j}$ from worker to platform/task within $[0.5,11]$ milliseconds~\cite{3GPP2022,qi2023matching}, and the downlink latency $t^{\mathrm{D}}_{i,j}$ from platform/task to worker within $[0.5,4]$ milliseconds~\cite{3GPP2022,qi2023matching}. Then, IL is computed as $\mathrm{IL} \triangleq \sum_i \sum_j N_{i,j}
\big(t^{\mathrm{U}}_{i,j}+t^{\mathrm{D}}_{i,j}\big), $
where $N_{i,j}$ denotes the number of online message exchanges involving pair $(s_i,w_j)$.

\noindent $\bullet$ \textit{IEC:}
Let $e^{W}_{j}$ and $e^{S}_{i}$ denote the transmission powers of workers and tasks/platform, respectively, where $e^{W}_{j}\in[0.2,0.4]$ Watts~\cite{yi2020Multi-User,qi2026Bank} and $e^{S}_{i}\in[6,20]$ Watts~\cite{yi2020Multi-User,qi2026Bank}. The energy consumption induced by online interactions is computed as $\mathrm{IEC} \triangleq \sum_i \sum_j N_{i,j}
\big(e^{S}_{i}t^{\mathrm{D}}_{i,j}+e^{W}_{j}t^{\mathrm{U}}_{i,j}\big). $
IEC characterizes the communication-energy overhead during online decision-making. For readability, RT, NI, IL, and IEC are plotted using $\log_{10}(1+x)$.

\noindent \textbf{\textit{(iii)} Intent-privacy evaluation:}

\noindent $\bullet$ \textit{AOeIE:}
AOeIE denotes the average one-snapshot intent estimation error. Specifically, we report the adversary's eIE $\xi_j$ in~\eqref{eq:inference_error_intent_rev}, averaged over evaluated workers with the distance metric in~\eqref{eq:distance_intent_rev}. A larger AOeIE indicates that the perturbed reports retain greater uncertainty against optimal Bayesian inference from a single observation.

\noindent $\bullet$ \textit{AOSR:}
AOSR denotes the average one-snapshot success rate. It measures the probability that the adversary correctly recovers workers' intent entries from one observation. Specifically, we calculate the entry-wise recovery accuracy as $\mathrm{OSR} \triangleq
\frac{1}{|\mathcal{S}||\mathcal{W}|}
\sum_j \sum_i \mathbbm{1}\{\hat b_{i,j}=b_{i,j}\}, $
and report its average value. A smaller AOSR indicates lower one-snapshot intent recoverability.

\noindent $\bullet$ \textit{AMFL:}
AMFL denotes the average multi-snapshot frequency deviation. Under repeated observations, the adversary computes the empirical frequency $F_{i,j}$ in~\eqref{eq:freq_stat_intent_rev}. We measure the deviation between $F_{i,j}$ and the true intent $b_{i,j}$ by $\mathrm{MFL} \triangleq
\frac{1}{|\mathcal{S}||\mathcal{W}|}
\sum_j \sum_i |F_{i,j}-b_{i,j}|. $
A larger AMFL means that long-term frequency statistics provide less reliable signals for profiling inference.

\noindent $\bullet$ \textit{FMSR and AMSR:}
We further implement a frequency-threshold profiling attacker, e.g.,~\eqref{eq:majority_inference_intent_rev}, and report its recovery accuracy over intent entries. FMSR denotes the final multi-snapshot success rate at the maximum number of snapshots, while AMSR denotes the average multi-snapshot success rate over all evaluated snapshot numbers. Smaller FMSR and AMSR indicate stronger resistance to long-term averaging attacks.

\noindent \textbf{\textit{(iv)} Individual rationality and risk controllability:}

\noindent $\bullet$ \textit{Task-side budget feasibility:}
This reports each task's realized total payment and assesses task-side budget feasibility in~\eqref{equ.25c_rev}.

\noindent $\bullet$ \textit{IR of workers:}
This reports workers' received payments and realized service costs and assesses the worker-side individual rationality constraint in~\eqref{equ.25b_rev}.

\noindent $\bullet$ \textit{Unsatisfactory-quality risk (QRisk):}
This reports the empirical/estimated probability that a task's realized quality violates the threshold, i.e., $R_i^\mathrm{Qual}$ in~\eqref{eq:R_qual}, directly reflecting reliability under dropout and redundancy discounting.

\noindent $\bullet$ \textit{Intent-mismatch risk (PRisk):}
This reports the intent-mismatch risk $R_j^\mathrm{Pref}$ in~\eqref{eq:R_pref}, capturing the probability of being assigned to undesired tasks due to perturbed intent reports. A lower PRisk implies closer alignment with true preferences and greater participation stability.

\subsection{Performance Evaluations}

\subsubsection{SW, TU, WU and INSW}
\label{sec:perform_utility}

\begin{figure}[t!]
	\centering
	\includegraphics[width=1\linewidth]{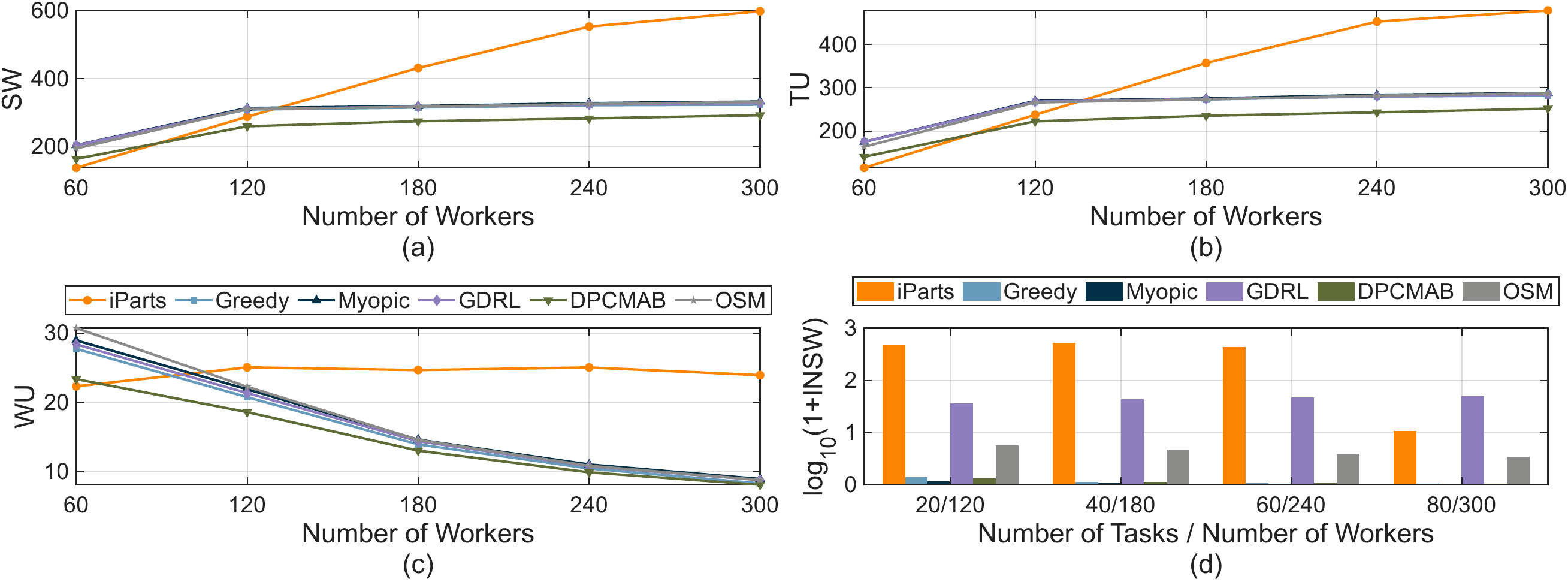}
	\caption{Performance comparison in terms of SW, TU, WU and INSW, where (a)--(c) set the number of tasks to 60.}
	\label{fig:SW_all}
	\vspace{-0.4cm}
\end{figure}

To evaluate economic performance and interaction-aware service efficiency, we consider four metrics in Fig.~\ref{fig:SW_all}: SW, TU, WU, and INSW. Figs.~\ref{fig:SW_all}(a)--\ref{fig:SW_all}(c) show the impact of the number of workers when the number of tasks is fixed to 60, while Fig.~\ref{fig:SW_all}(d) reports INSW under different task/worker scales. For better visualization, Fig.~\ref{fig:SW_all}(d) plots $\log_{10}(1+\mathrm{INSW})$.

As shown in Fig.~\ref{fig:SW_all}(a), iParts achieves a clear SW advantage when the worker population becomes sufficiently large. When the worker pool is small, iParts behaves relatively conservatively because intent perturbation and risk-aware pre-planning restrict overly aggressive assignments under limited worker availability. However, as more workers become available, iParts exploits the enlarged feasible worker pool more effectively and obtains substantially higher SW than Greedy, Myopic, GDRL, DPCMAB, and OSM. This indicates that iParts can better convert redundant worker availability into system-level welfare gains.

Fig.~\ref{fig:SW_all}(b) shows a similar trend for TU. iParts gradually outperforms all baselines as the number of workers increases, suggesting that the proposed risk-aware and redundancy-aware planning mechanism can improve task-side utility under richer worker resources. In contrast, Greedy and DPCMAB exhibit relatively limited TU growth because greedy benefit-cost selection and DP-protected recruitment do not explicitly coordinate long-term task quality, worker availability, and online remediation. Myopic, GDRL, and OSM achieve competitive TU in moderate settings, but their gains become saturated when the worker pool further increases.

Fig.~\ref{fig:SW_all}(c) compares WU under different worker populations. For most baselines, WU decreases as the number of workers increases, indicating stronger worker-side competition and less favorable utility distribution in denser markets. In contrast, iParts maintains a relatively stable and higher WU after the worker pool becomes sufficiently large. This is because iParts jointly considers worker-side participation utility, risk-aware service provisioning, and online remediation, instead of simply selecting workers according to immediate task-side gains.

Finally, Fig.~\ref{fig:SW_all}(d) evaluates INSW. iParts achieves high interaction-normalized welfare in most settings and remains competitive under the heaviest scale. This result is important because high SW can be obtained by frequent online querying, repeated re-optimization, or aggressive rematching. By contrast, INSW measures how efficiently each method converts limited online interactions into social welfare. The observed trend is consistent with the low-interaction two-stage design: offline pre-planning handles most regular service provisioning, while online remediation is triggered only when runtime deficits occur. Although GDRL shows competitive INSW in the largest setting, iParts consistently outperforms Greedy, Myopic, DPCMAB, and OSM in terms of interaction-aware welfare efficiency. Overall, Fig.~\ref{fig:SW_all} suggests that iParts achieves a favorable balance among system welfare, task utility, worker utility, and online interaction efficiency.

\subsubsection{Interaction overhead}
\label{sec:perform_io}

\begin{figure}[t!]
	\centering
	\includegraphics[width=1\linewidth]{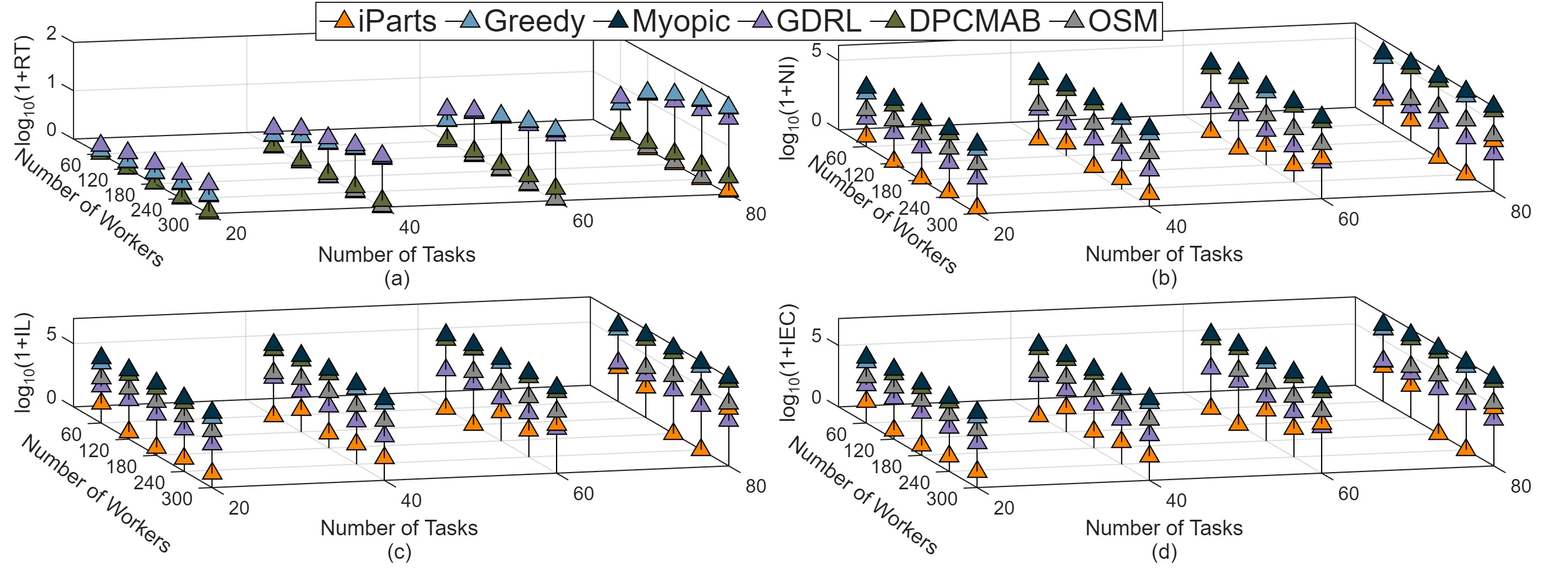}
	\caption{Performance comparisons in terms of online interaction overhead.}
	\label{fig:IO}
\end{figure}

To evaluate the real-time decision overhead and online interaction efficiency of iParts, we compare six methods in terms of RT, NI, IL, and IEC, as shown in Fig.~\ref{fig:IO}. RT denotes online running time, NI denotes the number of online interactions, IL denotes interaction latency, and IEC denotes interaction energy consumption. Since different methods may incur overheads at different orders of magnitude, each subfigure reports $\log_{10}(1+x)$ for better visualization. Fig.~\ref{fig:IO} focuses on online-stage overhead. The offline pre-planning of iParts is conducted before the sensing epoch and can be amortized over the memoized planning horizon; hence, it is not counted as repeated online interaction.

Fig.~\ref{fig:IO}(a) compares RT under different task/worker scales. iParts maintains low online running time across most settings because the major contract construction and risk-aware planning procedures have already been completed offline. During online execution, iParts only identifies realized quality gaps and activates lightweight temporary remediation among idle workers. In contrast, Myopic and OSM usually incur higher RT because they repeatedly perform online re-optimization or stable-matching updates as the market state changes. GDRL benefits from learned policy inference, but it still needs online state construction and decision execution.

Fig.~\ref{fig:IO}(b) presents NI, which directly measures online interactions. iParts consistently incurs fewer online interactions than most baselines because it avoids repeatedly querying all workers after each runtime change. Instead, the offline stage establishes risk-aware service provisioning decisions based on privacy-preserving intent reports, and the online stage only interacts with a limited candidate set when remediation is necessary. By comparison, Myopic and OSM require more frequent online coordination, such as state collection, preference updates, proposals, acceptances, and rematching operations. DPCMAB also requires continuous recruitment feedback and exploration-exploitation updates, increasing online communication.

Figs.~\ref{fig:IO}(c) and~\ref{fig:IO}(d) further report IL and IEC. Since latency and energy consumption are closely related to online interactions, their trends are generally consistent with NI. iParts achieves low IL and IEC in most task/worker settings, indicating that it not only reduces the number of online messages but also lowers the associated communication delay and energy cost. This advantage becomes more evident as the market scale grows, where fully online or frequently updated methods suffer from enlarged coordination spaces.

Overall, Fig.~\ref{fig:IO} illustrates the low-interaction property of iParts. By moving privacy-sensitive intent reporting, risk-aware pre-planning, and most regular service-provisioning decisions to the offline stage, iParts reduces the online coordination scope. The online stage is then limited to lightweight remediation for tasks with realized service deficits. Consequently, iParts achieves lower online RT, NI, IL, and IEC while preserving strong utility performance, suggesting that the proposed two-stage mechanism balances service quality, real-time efficiency, and interaction cost in dynamic MCS.

\subsubsection{Privacy protection and inference robustness}
\label{sec:perform_privacy}

\begin{table}[t]
	\centering
	\caption{Privacy resistance against one-snapshot and multi-snapshot intent inference attacks.}
	\label{tab:privacy_resistance}
	\footnotesize
	\setlength{\tabcolsep}{2pt}
	\renewcommand{\arraystretch}{1}
	\begin{tabular}{lccccc}
		\toprule
		Method & AOeIE $\uparrow$ & AOSR $\downarrow$ & AMFL $\uparrow$ & FMSR $\downarrow$ & AMSR $\downarrow$ \\
		\midrule
		iParts & $\mathbf{0.159 \pm 0.081}$ & $0.840 \pm 0.102$ & $\mathbf{0.165}$ & $\mathbf{0.834}$ & $\mathbf{0.835}$ \\
		NoP    & $0.000 \pm 0.000$ & $1.000 \pm 0.000$ & $0.000$ & $1.000$ & $1.000$ \\
		NoMem  & $0.158 \pm 0.080$ & $\mathbf{0.834 \pm 0.091}$ & $0.165$ & $1.000$ & $0.941$ \\
		\bottomrule
	\end{tabular}
\end{table}

To evaluate intent-privacy preservation, we examine iParts under both \emph{one-snapshot} and \emph{multi-snapshot} inference settings. AOeIE and AOSR evaluate the adversary's average inference error and recovery success rate under one-shot observation, where the inference process follows~\eqref{eq:inference_error_intent_rev}. AMFL, FMSR, and AMSR characterize long-term leakage under repeated observations, where the adversary exploits the empirical frequency in~\eqref{eq:freq_stat_intent_rev} and the frequency-threshold profiling rule in~\eqref{eq:majority_inference_intent_rev}. We compare iParts with two variants: NoP, which directly exposes true intents, and NoMem, which applies the same intent perturbation as iParts but removes memoization.

Table~\ref{tab:privacy_resistance} shows that NoP provides almost no protection. Its AOeIE is zero and its AOSR reaches one, indicating that the adversary can nearly perfectly recover workers' true intent entries once unperturbed reports are exposed. In contrast, iParts achieves a positive AOeIE of $0.159$ and reduces AOSR to $0.840$, suggesting that personalized LDP introduces uncertainty into observable intent reports. NoMem exhibits comparable one-snapshot behavior, with AOeIE of $0.158$ and AOSR of $0.834$. This is expected because iParts and NoMem use the same personalized perturbation mechanism for a single report. Hence, the one-snapshot results mainly support the role of LDP-based intent perturbation.

The advantage of memoization becomes evident under multi-snapshot inference. NoMem has AMFL close to that of iParts, indicating that AMFL alone captures the deviation of empirical frequencies but does not fully reflect the recoverability of true intents after threshold-based profiling. However, its FMSR reaches $1.000$ and its AMSR rises to $0.941$, meaning that independent per-round perturbations still allow the adversary to exploit frequency concentration and recover the underlying intent pattern through threshold-based profiling. By contrast, iParts maintains a much lower FMSR of $0.834$ and AMSR of $0.835$. This improvement is achieved because MIRROR reuses an epoch-stable perturbed intent vector within each memo-epoch, preventing repeated observations within a memo-epoch from becoming independent samples for frequency-based profiling.

Overall, Table~\ref{tab:privacy_resistance} summarizes the complementary roles of personalized LDP and memoization. Personalized LDP provides one-snapshot protection by perturbing intent reports, while memoization strengthens long-term protection by suppressing multi-snapshot frequency convergence. This supports the use of iParts in repeated participation scenarios, where long-term task-selection traces may otherwise expose workers' intent preferences.

\subsubsection{Individual rationality and risk analysis}
\label{sec:IR}

\begin{figure}[t]
	\centering
	\subfigtopskip=2pt
	\subfigbottomskip=10pt
	\subfigcapskip=-0.1cm
	\setlength{\abovecaptionskip}{-0.1cm}
	\hspace{-4.29mm}
	\subfigure[IR of tasks]{
		\begin{minipage}[t]{0.32\columnwidth}
			\centering
			\includegraphics[width=\linewidth, height=2.5cm]{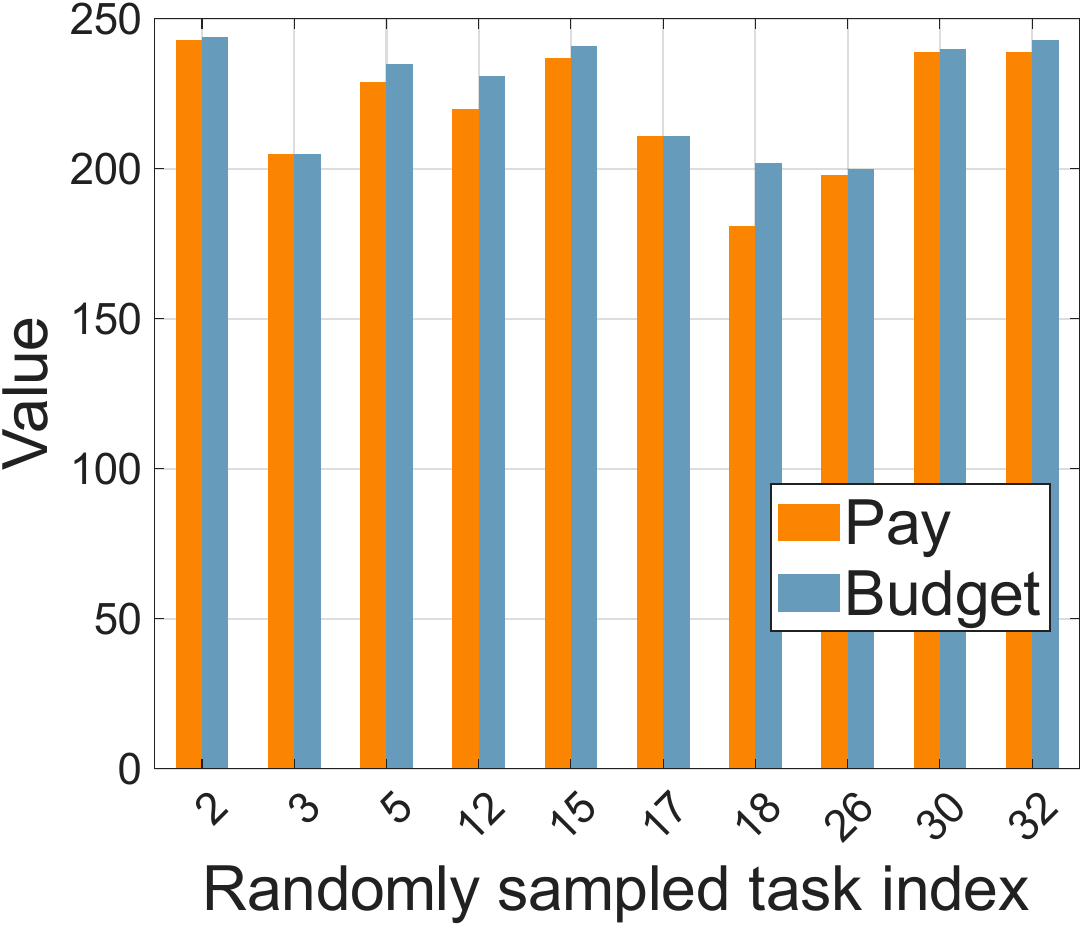}
			\label{fig:IR_tas}
	\end{minipage}}
	\subfigure[IR of workers]{
		\includegraphics[width=0.32\columnwidth, height=2.5cm]{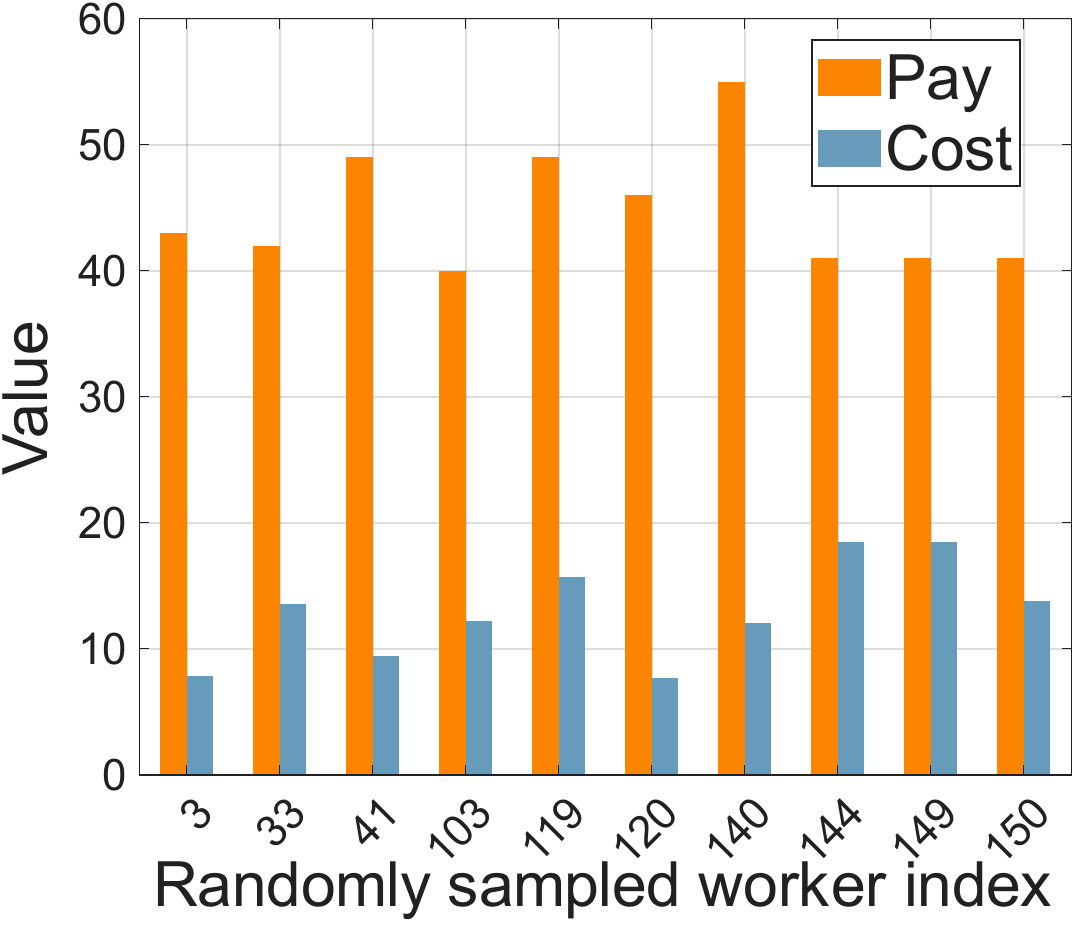}
		\label{fig:IR_worker}
	}
	\hspace{-1.29mm}
	\subfigure[Risk analysis]{
		\includegraphics[width=0.32\columnwidth, height=2.5cm]{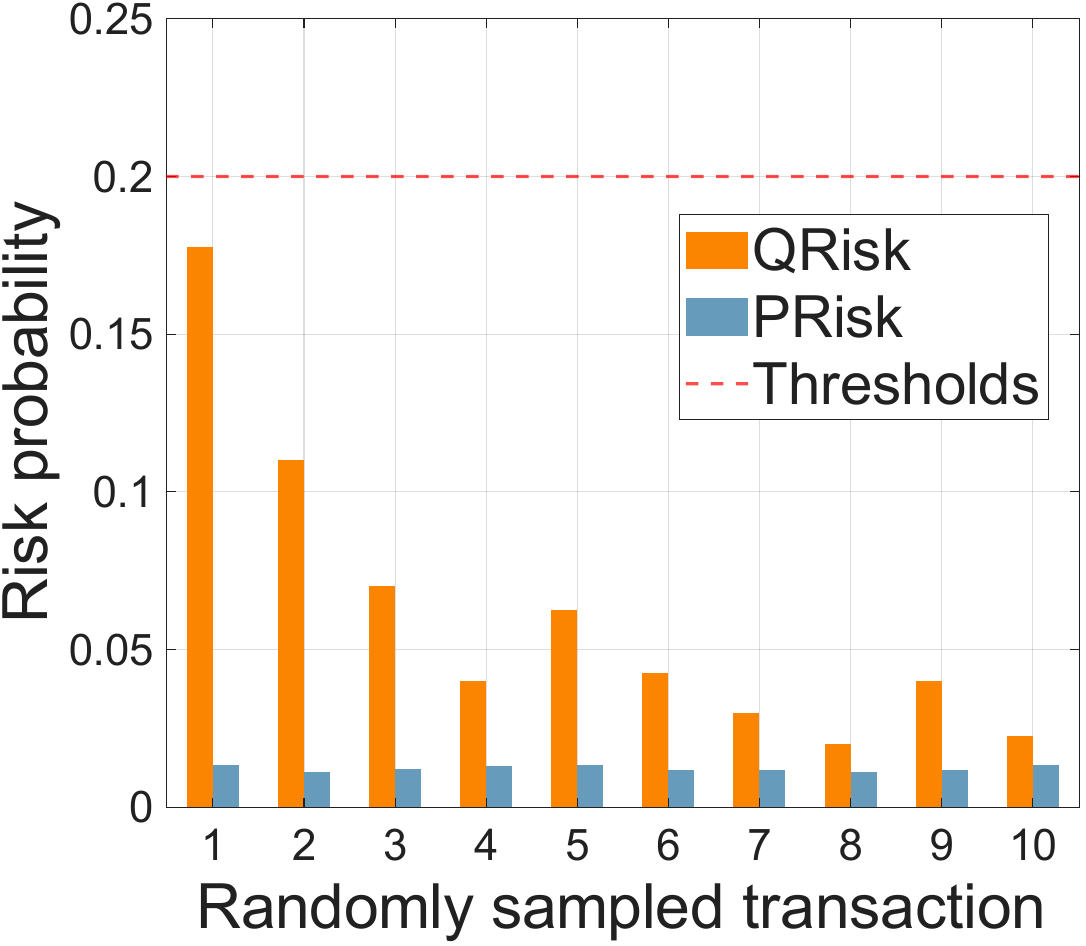}
		\label{fig:Risk}
	}
	\caption{IR and risk analysis of tasks and workers.}
	\label{fig:IR}
	\vspace{-0.4cm}
\end{figure}

To examine the practical deliverability of iParts, we evaluate individual rationality, budget feasibility, and risk controllability under a representative setting with $|\mathcal{W}|=160$ and $|\mathcal{S}|=40$.

Fig.~\ref{fig:IR_tas} reports task-side budget feasibility in~\eqref{equ.25c_rev}. For 10 randomly sampled tasks, the realized total payments never exceed their budgets, indicating strict budget compliance. Fig.~\ref{fig:IR_worker} reports worker-side individual rationality in~\eqref{equ.25b_rev}. For 10 sampled workers, payments always exceed realized costs, indicating non-negative utility after accounting for execution and privacy costs. Fig.~\ref{fig:Risk} examines the explicit risk constraints in~\eqref{equ.25g_rev}--\eqref{equ.25h_rev}. For 10 sampled transactions, both QRisk and PRisk remain below the threshold 0.2, consistent with the quality-violation and preference-mismatch risk bounds.

These results show that iParts satisfies worker-side IR, task-side budget feasibility, and dual risk controllability. Together, these results support the practical reliability and constraint-compliant execution of the proposed intent-preserving service provisioning mechanism.

\section{Conclusion}
This paper studied intent-preserving and efficiency-aware service provisioning in dynamic MCS under an honest-but-curious platform. We proposed \textit{iParts}, a two-stage framework that integrates personalized intent perturbation, redundancy-aware quality aggregation, risk-aware offline pre-planning, and lightweight online remediation. By shifting privacy-sensitive structural decisions to the offline stage and limiting online decisions to temporary recruitment, iParts coordinates task-worker services while reducing repeated preference exposure and interaction overhead. We further formulated the offline pre-planning problem as an exact potential game and established constrained-equilibrium and finite-convergence guarantees for offline feasible improvement dynamics under explicit privacy, budget, quality, and risk constraints. Trace-driven experiments showed that iParts improves service welfare and task reliability, reduces redundant sensing and online overhead, and lowers intent recoverability under one-snapshot and multi-snapshot inference. Future work will explore adaptive privacy-budget calibration and temporal intent protection under evolving worker preferences.

\begin{spacing}{0.99}
\bibliographystyle{ieeetr}
\bibliography{reference}

@article{wang2022bsif,
  title={BSIF: Blockchain-based secure, interactive, and fair mobile crowdsensing},
  author={Wang, Weizheng and Yang, Yaoqi and Yin, Zhimeng and Dev, Kapal and Zhou, Xiaokang and Li, Xingwang and Qureshi, Nawab Muhammad Faseeh and Su, Chunhua},
  journal={IEEE J. Sel. Areas Commun.},
  volume={40},
  number={12},
  pages={3452--3469},
  year={2022},
  publisher={IEEE}
}

@article{cai2025towards,
  title={Towards Personalized Location Privacy Trading for Mobile Crowd Sensing},
  author={Cai, Hui and Lan, Chen and Yang, Yuanyuan and Xiao, Fu and Zhu, Yanmin and Zhou, Jian and Sheng, Biyun},
  journal={IEEE Trans. Depend. Sec. Comput.},
  year={2025},
  publisher={IEEE}
}

@article{an2024privacy,
  title={Privacy-preserving user recruitment with sensing quality evaluation in mobile crowdsensing},
  author={An, Jieying and Ren, Yanbing and Li, Xinghua and Zhang, Man and Luo, Bin and Miao, Yinbin and Liu, Ximeng and Deng, Robert H},
  journal={IEEE Trans. Depend. Sec. Comput.},
  volume={22},
  number={1},
  pages={787--803},
  year={2025},
  publisher={IEEE}
}

@article{qi2026Accelerating,
  author={Qi, Houyi and Liwang, Minghui and Wang, Xianbin and Fu, Liqun and Hong, Yiguang and Li, Li and Cheng, Zhipeng},
  journal={IEEE Trans. Mobile Comput.}, 
  title={Accelerating Stable Matching Between Workers and Spatial-Temporal Tasks for Dynamic MCS: A Stagewise Service Trading Approach}, 
  year={2026},
  volume={25},
  number={2},
  pages={2878-2894},
  publisher={IEEE}
}

@article{xu2022personalized,
  title={Personalized location privacy protection for location-based services in vehicular networks},
  author={Xu, Chuan and Ding, Yingyi and Chen, Chao and Ding, Yong and Zhou, Wei and Wen, Sheng},
  journal={IEEE Trans. Intell. Transp. Syst.},
  volume={24},
  number={1},
  pages={1163--1177},
  year={2022},
  publisher={IEEE}
}

@article{xiao2020privacy,
  title={Privacy-preserving user recruitment protocol for mobile crowdsensing},
  author={Xiao, Mingjun and Gao, Guoju and Wu, Jie and Zhang, Sheng and Huang, Liusheng},
  journal={IEEE/ACM Trans. Netw.},
  volume={28},
  number={2},
  pages={519--532},
  year={2020},
  publisher={IEEE}
}

@article{azhar2022privacy,
  title={Privacy-preserving and utility-aware participant selection for mobile crowd sensing},
  author={Azhar, Shanila and Chang, Shan and Liu, Ye and Tao, Yuting and Liu, Guohua},
  journal={Mobile Netw. Appl.},
  volume={27},
  number={1},
  pages={290--302},
  year={2022},
  publisher={Springer}
}

@ARTICLE{liwang2025long,
  author={Liwang, Minghui and Gao, Zhibin and Hosseinalipour, Seyyedali and Cheng, Zhipeng and Wang, Xianbin and Jiao, Zhenzhen},
  journal={IEEE Trans. Mobile Comput.}, 
  title={Long-Term or Temporary? Hybrid Worker Recruitment for Mobile Crowd Sensing and Computing}, 
  year={2025},
  volume={24},
  number={2},
  pages={1055-1072},
  publisher={IEEE}
  }

@article{qi2023matching,
  title={Matching-based hybrid service trading for task assignment over dynamic mobile crowdsensing networks},
  author={Qi, Houyi and Liwang, Minghui and Hosseinalipour, Seyyedali and Xia, Xiaoyu and Cheng, Zhipeng and Wang, Xianbin and Jiao, Zhenzhen},
  journal={IEEE Trans. Serv. Comput.},
  volume={17},
  number={5},
  pages={2597--2612},
  year={2024},
  publisher={IEEE}
}

@article{zhao2025privacy,
  author={Zhao, Jingcheng and Xue, Kaiping and Li, Ruidong and Zhu, Bin and Li, Meng and Sun, Qibin and Lu, Jun},
  journal={IEEE Trans. Depend. Sec. Comput.}, 
  title={Privacy-Preserving Truth Discovery of Evolving Truths for Mobile Crowdsensing Systems}, 
  year={2025},
  volume={22},
  number={6},
  pages={6406-6421},
  publisher={IEEE}
  }

@article{Liang2025Privacy,
  author={Liang, Yihuai and Li, Yan and Shin, Byeong-Seok},
  journal={IEEE Trans. Depend. Sec. Comput.}, 
  title={Privacy-Preserving and Reliable Truth Discovery for Heterogeneous Fog-Based Crowdsensing}, 
  year={2025},
  volume={22},
  number={3},
  pages={2338-2351},
  publisher={IEEE}
  }

@article{You2025Location,
  author={You, Zhichao and Dong, Xuewen and Liu, Ximeng and Gao, Sheng and Wang, Yongzhi and Shen, Yulong},
  journal={IEEE Trans. Depend. Sec. Comput.}, 
  title={Location Privacy Preservation Crowdsensing With Federated Reinforcement Learning}, 
  year={2025},
  volume={22},
  number={3},
  pages={1877-1894},
  publisher={IEEE}
  }

@article{Zhou2025Unknown,
  author={Zhou, Qihang and Zhang, Xinglin and Yang, Zheng},
  journal={IEEE Trans. Mobile Comput.}, 
  title={Unknown Worker Recruitment With Long-Term Incentive in Mobile Crowdsensing}, 
  year={2025},
  volume={24},
  number={2},
  pages={999-1015},
  publisher={IEEE}
  }

@article{Ouyang2025MWRS,
  author={Ouyang, Yan and Zeng, Feng and Xiong, Neal N. and Liu, Anfeng and Pedrycz, Witold},
  journal={IEEE Trans. Mobile Comput.}, 
  title={MWRS: A MAB-Based Worker Recruitment Scheme With Tripartite Stackelberg Game for Reliable Mobile Crowdsensing}, 
  year={2025},
  volume={24},
  number={7},
  pages={5665-5680},
  publisher={IEEE}
  }

@article{Guang2025Game,
  author={Guang, Xiaoliang and Peng, Yuhuai and Wang, Chenlu},
  journal={IEEE Trans. Mobile Comput.}, 
  title={Game-Based Multi-UAV Dynamic Collaborative with Energy-Efficient Hierarchical Information Sharing for Mobile Crowdsensing}, 
  year={2025},
  volume={},
  number={},
  pages={1-16},
  publisher={IEEE}
  }

@article{Zhang2025An,
  author={Zhang, Jixian and Chen, Peng and Yang, Xuelin and Wu, Hao and Li, Weidong},
  journal={IEEE Trans. Mobile Comput.}, 
  title={An Optimal Reverse Affine Maximizer Auction Mechanism for Task Allocation in Mobile Crowdsensing}, 
  year={2025},
  volume={24},
  number={8},
  pages={7475-7488},
  publisher={IEEE}
  }

@inproceedings{shokri2012protecting,
  title={Protecting location privacy: optimal strategy against localization attacks},
  author={Shokri, Reza and Theodorakopoulos, George and Troncoso, Carmela and Hubaux, Jean-Pierre and Le Boudec, Jean-Yves},
  booktitle={Proc. the 2012 ACM Conf. Comput. commun. security},
  pages={617--627},
  year={2012}
}

@article{wang2019mobile,
  title={Mobile crowdsourcing task allocation with differential-and-distortion geo-obfuscation},
  author={Wang, Leye and Yang, Dingqi and Han, Xiao and Zhang, Daqing and Ma, Xiaojuan},
  journal={IEEE Trans. Depend. Sec. Comput.},
  volume={18},
  number={2},
  pages={967--981},
  year={2019},
  publisher={IEEE}
}

@inproceedings{yu2017dynamic,
  title={Dynamic Differential Location Privacy with Personalized Error Bounds.},
  author={Yu, Lei and Liu, Ling and Pu, Calton},
  booktitle={NDSS},
  volume={17},
  pages={1--15},
  year={2017}
}

@article{warner1965randomized,
  title={Randomized response: A survey technique for eliminating evasive answer bias},
  author={Warner, Stanley L},
  journal={J. the Amer. statistical Assoc.},
  volume={60},
  number={309},
  pages={63--69},
  year={1965},
  publisher={Taylor \& Francis}
}

@article{dong2022gaussian,
  title={Gaussian differential privacy},
  author={Dong, Jinshuo and Roth, Aaron and Su, Weijie J},
  journal={J. the Royal Statistical Soc. Series B: Statistical Methodology},
  volume={84},
  number={1},
  pages={3--37},
  year={2022},
  publisher={Oxford University Press}
}

@inproceedings{wang2016using,
  title={Using randomized response for differential privacy preserving data collection.},
  author={Wang, Yue and Wu, Xintao and Hu, Donghui},
  booktitle={EDBT/ICDT Workshops},
  volume={1558},
  pages={0090--6778},
  year={2016}
}

@article{monderer1996potential,
  title={Potential games},
  author={Monderer, Dov and Shapley, Lloyd S},
  journal={Games and econ. behavior},
  volume={14},
  number={1},
  pages={124--143},
  year={1996},
  publisher={Elsevier}
}

@article{ding2021potential,
  title={A potential game theoretic approach to computation offloading strategy optimization in end-edge-cloud computing},
  author={Ding, Yan and Li, Kenli and Liu, Chubo and Li, Keqin},
  journal={IEEE Trans. Parallel Distrib. Syst.},
  volume={33},
  number={6},
  pages={1503--1519},
  year={2021},
  publisher={IEEE}
}

@article{wang2022cooperative,
  title={Cooperative and competitive multi-agent systems: From optimization to games},
  author={Wang, Jianrui and Hong, Yitian and Wang, Jiali and Xu, Jiapeng and Tang, Yang and Han, Qing-Long and Kurths, J{\"u}rgen},
  journal={IEEE/CAA J. Autom. Sinica},
  volume={9},
  number={5},
  pages={763--783},
  year={2022},
  publisher={IEEE}
}

@article{konda2024optimal,
  title={Optimal utility design of greedy algorithms in resource allocation games},
  author={Konda, Rohit and Chandan, Rahul and Grimsman, David and Marden, Jason R},
  journal={IEEE Trans. Autom. Control},
  volume={69},
  number={10},
  pages={6592--6604},
  year={2024},
  publisher={IEEE}
}

@article{huang2024distributed,
  title={Distributed Nash equilibrium seeking for multicluster aggregative game of Euler--Lagrange systems with coupled constraints},
  author={Huang, Yi and Meng, Ziyang and Sun, Jian},
  journal={IEEE Trans. Cybern.},
  volume={54},
  number={10},
  pages={5672--5683},
  year={2024},
  publisher={IEEE}
}

@article{zhu2022nash,
  title={Nash equilibrium in iterated multiplayer games under asynchronous best-response dynamics},
  author={Zhu, Yuying and Xia, Chengyi and Chen, Zengqiang},
  journal={IEEE Trans. Autom. Control},
  volume={68},
  number={9},
  pages={5798--5805},
  year={2022},
  publisher={IEEE}
}

@misc{chicago_taxi_trips_2013,
  title        = {Taxi Trips 2013},
  howpublished = {[Online]. Available: https://data.cityofchicago.org/Transportation/Taxi-Trips-2013/6h2x-drp2},
  author       = {{City of Chicago}},
  year         = {2013}
}

@article{qi2026Bank,
  author={Qi, Houyi and Liwang, Minghui and Hosseinalipour, Seyyedali and Fu, Liqun and Zou, Sai and Ni, Wei},
  journal={IEEE J. Sel. Areas Commun.}, 
  title={Future Resource Bank for ISAC: Achieving Fast and Stable Win-Win Matching for Both Individuals and Coalitions}, 
  year={2026},
  volume={44},
  number={},
  pages={513-530},
 publisher={IEEE}
}

@techreport{3GPP2022,
  author       = {{3GPP}},
  title        = {{5G; Study on Scenarios and Requirements for Next Generation Access Technologies}},
  institution  = {{3GPP}},
  number       = {{TR 38.913 V17.0.0}},
  type         = {{Technical Report}},
  year         = {2022},
  month        = may,
  note         = {{Release 17}}
}

@article{yi2020Multi-User,
  author={Yi, Changyan and Cai, Jun and Su, Zhou},
  journal={IEEE Trans. Mobile Comput.}, 
  title={A Multi-User Mobile Computation Offloading and Transmission Scheduling Mechanism for Delay-Sensitive Applications}, 
  year={2020},
  volume={19},
  number={1},
  pages={29-43},
  publisher={IEEE}
}

@article{zhang2025utility,
  title={A Utility-Optimal Reverse Posted Pricing Mechanism for Online Mobile Crowdsensing Task Allocation},
  author={Zhang, Jixian and Yang, Xuelin and Chen, Peng and Wang, Zhemin and Li, Weidong and He, Zhenli and Li, Keqin},
  journal={IEEE Trans. Serv. Comput.},
  year={2025},
  volume={18},
  number={5},
  pages={2588-2601},
}

@ARTICLE{tong2024privacy,
	author={Tong, Fei and Zhou, Yuanhang and Wang, Kaiming and Cheng, Guang and Niu, Jianyu and He, Shibo},
	journal={IEEE Trans. Depend. Sec. Comput.}, 
	title={A Privacy-Preserving Incentive Mechanism for Mobile Crowdsensing Based on Blockchain}, 
	year={2024},
	volume={21},
	number={6},
	pages={5071-5085}}

@ARTICLE{xu2023gdrl,
	author={Xu, Chenghao and Song, Wei},
	journal={IEEE Trans. Netw. Sci. Eng.}, 
	title={Intelligent Task Allocation for Mobile Crowdsensing With Graph Attention Network and Deep Reinforcement Learning}, 
	year={2023},
	volume={10},
	number={2},
	pages={1032-1048}
	}
\end{spacing}

\newpage
\clearpage
\appendices
\setcounter{equation}{49}
\setcounter{table}{2}
\section{Key Notations}
Key notations in this paper are summarized in Table 3.

\begin{table*}[b!]
	{\footnotesize
		\caption{\footnotesize{Key notations}}
		\begin{center}
			\begin{tabular}{|l|l|}
				\hline
				\multicolumn{1}{|l|}{\textbf{Notation}} & \multicolumn{1}{l|}{\textbf{Explanation}} \\ \hline
				$\bm{\mathcal{S}},\,\bm{\mathcal{W}},\,\bm{\mathcal{S}}^{\mathrm{on}},\,\bm{\mathcal{W}}^{\mathrm{on}}$ & \makecell[l]{Task set and mobile-worker set in the MCS system; tasks with execution-time quality deficits and \\workers not locked by offline contracts for online temporary recruitment} \\ \hline
				$s_i,\,w_j$ & The $i$-th task in $\bm{\mathcal{S}}$ and the $j$-th worker in $\bm{\mathcal{W}}$ \\ \hline
				$\{loc_i^{\mathrm{S}},\,B_i,\,Q_i^{\mathrm{D}}\}$ & Location, budget, and quality-demand threshold of task $s_i$ \\ \hline
				$\{loc_j^{\mathrm{W}},\,\varepsilon_j,\,\theta_j,\,\pi_j\}$ & \makecell[l]{Location, personalized/trial privacy-budget parameter, sensing capability, and arrival probability\\ of worker $w_j$} \\ \hline
				$\alpha_j\sim\mathbf{B}(\pi_j)$ & Online availability indicator of worker $w_j$ (arrival/dropout uncertainty) \\ \hline
				$x_{i,j}\in\{0,1\},\,\bm{x}$ & Offline assignment indicator and the assignment matrix \\ \hline
				$a_i,\,\bm{a},\,y_{i,j}(\bm{a})$ & Task $s_i$'s worker-set strategy, joint profile, and induced selection indicator \\ \hline
				$\sigma_{i,j}^2,\,q_{i,j}$ & Effective error variance and single-shot sensing quality (precision) of $w_j$ for $s_i$ \\ \hline
				$n_i,\,\zeta_i,\,Q_i$ & \makecell[l]{Realized number of participants for $s_i$, redundancy factor, and redundancy-aware aggregated quality} \\ \hline
				$p_{i,j}$ & Payment from the platform/task $s_i$ to worker $w_j$ \\ \hline
				$c^\mathrm{W}_{i,j},\,c^{\mathrm{exe}}_{i,j},\,c^{\mathrm{priv}}_j$ & \makecell[l]{Worker cost, execution cost (e.g., travel), and privacy cost induced by $\varepsilon_j$} \\ \hline
				$\mu_j,\,\lambda_j,\,\omega_3$ & \makecell[l]{Execution-cost coefficient, privacy-cost scaling coefficient, and quality-to-utility mapping factor} \\ \hline
				$u_j^\mathrm{W},\,u_i^\mathrm{S},\,\mathbb{SW},\,\mathbb{SW}^{\mathrm{on}}$ & \makecell[l]{Worker/task utilities, realized practical SW under a worker-arrival realization, and incremental online SW} \\ \hline
				$\Phi^{\mathrm{off}}(\bm{a})$ & Exact offline potential function (expected SW over stochastic worker arrivals) \\ \hline
				$U_i(\bm{a})$ & Task payoff defined by marginal contribution to $\Phi^{\mathrm{off}}(\bm{a})$ \\ \hline
				$\mathbf{b}_j,\,\tilde{\mathbf{b}}_j,\,\tilde{\mathbf{b}}^{\mathrm{perm}}_j(e)$ & True intent vector, perturbed report, and epoch-stable memoized report \\ \hline
				$f(\cdot\mid\cdot),\,\varepsilon_j^\star$ & IPM and PRIMER-calibrated personalized privacy budget \\ \hline
				$Q^{\mathrm{loss}}_j(\phi,f,\Delta_j),\,\Delta_j(\tilde{\mathbf{b}}_j,\mathbf{b}_j),\,\gamma_j$ & \makecell[l]{EIRD, weighted Hamming distortion, and worker-specific distortion weight} \\ \hline
				$\phi(\mathbf{b}_j),\,\xi_j,\,\beta^0,\,h(\tilde{\mathbf{b}}_j)$ & \makecell[l]{Prior of intents, one-snapshot eIE, eIE threshold, and auxiliary lower-bound function} \\ \hline
				$R_i^{\mathrm{Qual}},\,R_j^{\mathrm{Pref}},\,\rho_1,\,\rho_2,\,\mathrm{mis}_j(\bm{x},\mathbf{b}_j)$ & \makecell[l]{Quality-violation risk, intent-mismatch risk, risk thresholds, and mismatch ratio function} \\ \hline
				$\mathcal{A}^{\mathrm{uni}}_i(t),\,a_i^{\mathrm{br}}(t),\,\Delta_i(t)$ & \makecell[l]{Unilateral feasible strategy set of task $s_i$ under current strategies of other tasks,\\ exact feasible best response, and exact potential gain (ASPIRE-Off)} \\ \hline
				$N_{i,j},\,t^{\mathrm{U}}_{i,j},\,t^{\mathrm{D}}_{i,j},\,e^{\mathrm{W}}_{j},\,e^{\mathrm{S}}_{i}$ & \makecell[l]{Interaction count on pair $(s_i,w_j)$, uplink/downlink latency, and transmission powers of worker \\and task/platform} \\ \hline
			\end{tabular}
		\end{center}
	}
\end{table*}

\section{Online Stage: Potential-Game-based Temporary Recruitment}
\label{app:online}
The offline pre-plan $\bm{y}(\bm{a}^\star)$ serves as an execution guideline by locking a set of long-term contracts for each task.
Nevertheless, due to realized arrivals/dropouts and mobility perturbations, the actually delivered quality may deviate from the offline expectation, leading to execution-time quality deficits for some tasks.
To mitigate such deficits under strict interaction limits, iParts activates an online \emph{temporary recruitment} procedure that \textit{(i)} only involves the unmet-demand tasks and an idle worker pool (i.e., workers not locked by offline contracts), and \textit{(ii)} follows a potential-game-based, low-overhead update consistent with the offline stage.

\subsection{Basic Modeling for Online Temporary Recruitment}
Given the offline profile $\bm{y}(\bm{a}^\star)$, let $\bm{\mathcal{W}}_i^{\mathrm{off}}\subseteq\bm{\mathcal{W}}$ denote the set of workers pre-contracted to task $s_i$.
During online execution, each worker $w_j$ arrives with indicator $\alpha_j\in\{0,1\}$.
Thus, the realized set of \emph{arrived} offline workers for task $s_i$ is
\begin{equation}
	\bm{\mathcal{W}}_{i,\mathrm{arr}}^{\mathrm{off}}
	=
	\{\,w_j\in \bm{\mathcal{W}}_i^{\mathrm{off}} \mid \alpha_j=1 \,\},
~~
	n_i^{\mathrm{off}}
	=
	|\bm{\mathcal{W}}_{i,\mathrm{arr}}^{\mathrm{off}}|.
	\label{eq:Woff_arrived_app}
\end{equation}

\noindent\textit{(i) Baseline realized quality from offline contracts.}
Under the redundancy-aware aggregation in \eqref{eq:Qi_redundancy}, the delivered (baseline) quality contributed by the arrived offline workers is
\begin{equation}
	Q_i^{\mathrm{base}}
	\triangleq
	\frac{\sum_{w_j\in \bm{\mathcal{W}}_{i,\mathrm{arr}}^{\mathrm{off}}} q_{i,j}}
	{1+\big(n_i^{\mathrm{off}}-1\big)\zeta_i}.
	\label{eq:Qi_base_app}
\end{equation}
If $Q_i^{\mathrm{base}}<Q_i^{D}$, task $s_i$ exhibits an execution-time quality deficit and is considered for online remedy.
Accordingly, we define the set of unmet-demand tasks as
\begin{equation}
	\bm{\mathcal{S}}^{\mathrm{on}}
	=
	\{\, s_i\in\bm{\mathcal{S}} \mid Q_i^{\mathrm{base}}<Q_i^{D}\,\}.
	\label{eq:S_on_def_app}
\end{equation}

\noindent\textit{(ii) Idle worker for temporary recruitment.}
To keep the online procedure lightweight, we restrict recruitment to workers that are \emph{not locked} by offline contracts.
Specifically, the idle worker set is defined as
\begin{equation}
	\bm{\mathcal{W}}^{\mathrm{on}}
=
	\{\, w_j\in\bm{\mathcal{W}}\mid \sum_{s_i\in\bm{\mathcal{S}}}y_{i,j}(\bm{a}^\star)=0 \,\}.
	\label{eq:W_on_def_app}
\end{equation}
Workers in $\bm{\mathcal{W}}^{\mathrm{on}}$ are eligible for temporary recruitment only if they pass the same acceptance screening as in the offline stage, i.e., perturbed-intent visibility and individual rationality.

\noindent\textit{(iii) Remaining budgets.}
Since payments to the arrived offline workers consume part of the task budget, the remaining budget available for online remedy is
\begin{equation}
	\bar B_i
	=B_i-\sum_{w_j\in \bm{\mathcal{W}}_{i,\mathrm{arr}}^{\mathrm{off}}} p_{i,j}
	\label{eq:residual_budget_app}
\end{equation}

\subsection{Joint Optimization for Online Temporary Recruitment}
In the online stage, each task $s_i\in\bm{\mathcal{S}}^{\mathrm{on}}$ recruits workers from $\bm{\mathcal{W}}^{\mathrm{on}}$ to reduce the quality gap while respecting worker exclusivity and remaining budgets.
Let $x^{\mathrm{on}}_{i,j}\in\{0,1\}$ denote whether idle worker $w_j\in \bm{\mathcal{W}}^{\mathrm{on}}$ is temporarily recruited by task $s_i\in\bm{\mathcal{S}}^{\mathrm{on}}$.
The online utilities and the incremental social welfare $\mathbb{SW}^{\mathrm{on}}$ follow \eqref{eq:worker_utility_online_spot}--\eqref{eq:SW_online_spot}.

To characterize the final quality after adding online recruits, we define the final realized quality of task $s_i$ as
\begin{equation}
	Q_i^{\mathrm{fin}}
	=
	\frac{
		\sum_{w_j\in \bm{\mathcal{W}}_{i,\mathrm{arr}}^{\mathrm{off}}} q_{i,j}
		+
		\sum_{w_j\in \bm{\mathcal{W}}^{\mathrm{on}}} x^{\mathrm{on}}_{i,j} q_{i,j}
	}{
		1+\Big(\big(n_i^{\mathrm{off}}+n_i^{\mathrm{on}}\big)-1\Big)\zeta_i
	},
	\label{eq:Qi_final_app}
\end{equation}
where $ n_i^{\mathrm{on}}= \sum_{w_j\in\bm{\mathcal{W}}^{\mathrm{on}}} x^{\mathrm{on}}_{i,j}$. Accordingly, the online temporary recruitment problem is formulated as
\begin{align}
	\bm{\mathcal{F}^{\mathrm{on}}}:~&
	\underset{\bm{x}^{\mathrm{on}}}{\max}\quad \mathbb{SW}^{\mathrm{on}}
	\label{ProOn_app}
	\\
	\text{s.t.}\quad
	&\sum_{s_i\in\bm{\mathcal{S}}^{\mathrm{on}}} x^{\mathrm{on}}_{i,j}\le 1,\ \forall w_j\in \bm{\mathcal{W}}^{\mathrm{on}}
	\label{eq:on_excl_app}\tag{56a}
	\\
	&\sum_{w_j\in\bm{\mathcal{W}}^{\mathrm{on}}} x^{\mathrm{on}}_{i,j} p_{i,j}\le \bar B_i,\ \forall s_i\in \bm{\mathcal{S}}^{\mathrm{on}}
	\label{eq:on_budget_app}\tag{56b}
	\\
	& p_{i,j}\ge c^\mathrm{W}_{i,j},\ \forall s_i\in\bm{\mathcal{S}}^{\mathrm{on}},\ \forall w_j\in\bm{\mathcal{W}}^{\mathrm{on}}
	\label{eq:on_IR_app}\tag{56c}
	\\
	& Q_i^{\mathrm{fin}} \ge Q_i^D,\ \forall s_i\in \bm{\mathcal{S}}^{\mathrm{on}}
	\label{eq:on_quality_app}\tag{56d}\\
	& x^{\mathrm{on}}_{i,j}=0,\ \forall (i,j)\ \text{with}\ \tilde b_{i,j}=0,
	\label{eq:on_intent_screen_app}\tag{56e}
\end{align}
where \eqref{eq:on_excl_app} enforces idle-worker exclusivity, \eqref{eq:on_budget_app} enforces residual-budget feasibility, \eqref{eq:on_IR_app} ensures individual rationality, and \eqref{eq:on_quality_app} specifies the quality-restoration requirement of the online remedy.
In ASPIRE-On, since the online process starts from realized quality deficits, \eqref{eq:on_quality_app} is used as an acceptance-screening condition for temporary updates rather than as a requirement on the initial all-empty recruitment profile.
Constraint \eqref{eq:on_intent_screen_app} restricts recruitment to workers whose perturbed intent reports indicate willingness.
Note that this screening does not consume additional mechanism-level LDP budget, since $\tilde b_{i,j}$ is generated by MIRROR and reused as an eligibility signal (post-processing invariance applies).
Next, we reformulate $\bm{\mathcal{F}^{\mathrm{on}}}$ as a task-dominated exact potential game in the online stage, and develop ASPIRE-On as a DP-assisted potential-improvement procedure for bounded-round online remediation.

\subsection{Online Potential-Game Formulation}
To enable low-overhead distributed remedy, we cast $\bm{\mathcal{F}^{\mathrm{on}}}$ as a task-dominated non-cooperative game.
As in the offline stage, each task selects a subset of available idle workers, while feasibility is enforced at the joint system level.
\subsubsection{Game formulation}
During online execution, only a subset of tasks may experience quality deficits (i.e., $\bm{\mathcal{S}}^{\mathrm{on}}$), and the platform recruits additional workers from the idle pool $\bm{\mathcal{W}}^{\mathrm{on}}$ toward satisfying $Q_i^{\mathrm{fin}}\ge Q_i^{D}$ under remaining budgets $\{\bar B_i\}$.
Although the online stage is designed to be \textit{lightweight}, tasks are still \emph{coupled} through the shared idle worker pool: since each idle worker can be temporarily recruited by at most one task, the recruitment decision of a task immediately restricts the feasible choices of other tasks, thereby creating externalities.
Therefore, we model the online temporary recruitment as a task-dominated non-cooperative game:
\begin{equation}
	\mathcal{G}^{\mathrm{on}}
	=
	\Big(
	\bm{\mathcal{S}}^{\mathrm{on}},\
	\{\mathcal{A}^{\mathrm{on}}_i\}_{s_i\in\bm{\mathcal{S}}^{\mathrm{on}}},\
	\{U_i^{\mathrm{on}}\}_{s_i\in\bm{\mathcal{S}}^{\mathrm{on}}}
	\Big),
	\label{eq:Game_task_on_def_app}
\end{equation}
where the key elements are specified as follows.

\vspace{0.2em}
\noindent$\bullet$ \textbf{Players.}
Tasks in $\bm{\mathcal{S}}^{\mathrm{on}}$ act as players.
Each task $s_i\in\bm{\mathcal{S}}^{\mathrm{on}}$ selects a subset of idle workers to perform temporary recruitment.

\vspace{0.2em}
\noindent$\bullet$ \textbf{Worker-side acceptance filtering.}
An idle worker $w_j\in\bm{\mathcal{W}}^{\mathrm{on}}$ can be considered by task $s_i$ only if it satisfies:
\emph{(i) perturbed intent visibility}: $\tilde b_{i,j}=1$; and
\emph{(ii) individual rationality}: $p_{i,j}\ge c^\mathrm{W}_{i,j}$.
Accordingly, the online candidate set of task $s_i$ is
\begin{equation}{
	\begin{aligned}\small
	\tilde{\bm{\mathcal{W}}}^{\mathrm{on}}_i
	\triangleq
	\Big\{
	w_j\in\bm{\mathcal{W}}^{\mathrm{on}}\ \big|\
	\tilde b_{i,j}=1,\ p_{i,j}\ge c^\mathrm{W}_{i,j},\ \varepsilon_j^\star\neq \textsf{INFEASIBLE}
	\Big\}.
\end{aligned}}
	\label{eq:candidate_workers_on_app}
\end{equation}
Namely, $s_i$ can recruit only from $\tilde{\bm{\mathcal{W}}}^{\mathrm{on}}_i$.

\vspace{0.2em}
\noindent$\bullet$ \textbf{Strategy space.}
The strategy of task $s_i$ is selecting a temporary recruitment set from its candidate pool under the remaining budget $\bar B_i$:
\begin{equation}
	\begin{aligned}
			a_i^{\mathrm{on}} \in \mathcal{A}_i^{\mathrm{on}}
		=
		\Big\{
		\bm{\mathcal{W}}_i^{\mathrm{on}}\subseteq \tilde{\bm{\mathcal{W}}}^{\mathrm{on}}_i\ \big|\
		\sum_{w_j\in \bm{\mathcal{W}}_i^{\mathrm{on}}} p_{i,j}\le \bar B_i
		\Big\}.
	\end{aligned}
	\label{eq:Aion_app}
\end{equation}
Let $\bm{a}^{\mathrm{on}}=(a_1^{\mathrm{on}},\ldots,a_{|\bm{\mathcal{S}}^{\mathrm{on}}|}^{\mathrm{on}})$ be the joint strategy profile.
It induces the binary recruitment indicator
\begin{equation}{
	\begin{aligned}
			x_{i,j}^{\mathrm{on}} = y_{i,j}^{\mathrm{on}}(\bm{a}^{\mathrm{on}})
		=
		\mathbbm{1}\{w_j\in a_i^{\mathrm{on}}\},
		~~
		\bm{y}^{\mathrm{on}}(\bm{a}^{\mathrm{on}})=[y_{i,j}^{\mathrm{on}}(\bm{a}^{\mathrm{on}})].
	\end{aligned}}
	\label{eq:y_from_a_on_app}
\end{equation}

\vspace{0.2em}
\noindent$\bullet$ \textbf{Operational feasible set and restoration-screened deviations.}
During online remediation, the platform starts from realized quality deficits.
Therefore, the all-empty temporary recruitment profile should be operationally feasible even though it may not yet meet all unmet quality demands.
We first define the operational feasible set as
\begin{equation}
	\begin{aligned}
		\mathcal{A}^{\mathrm{on,op}}
		\triangleq
		\Big\{
		\bm a^{\mathrm{on}}\in
		\prod_{s_i\in\bm{\mathcal{S}}^{\mathrm{on}}}
		\mathcal{A}^{\mathrm{on}}_i
		\ \Big|\
		&\sum_{s_i\in\bm{\mathcal{S}}^{\mathrm{on}}}
		y^{\mathrm{on}}_{i,j}(\bm a^{\mathrm{on}})
		\le 1,\\
		&\forall w_j\in\bm{\mathcal{W}}^{\mathrm{on}}
		\Big\}.
	\end{aligned}
	\label{eq:A_on_op}
\end{equation}
where the local strategy space $\mathcal{A}^{\mathrm{on}}_i$ already embeds the remaining-budget constraint and the perturbed-intent/individual-rationality screening, while \eqref{eq:A_on_op} enforces idle-worker exclusivity across tasks.

The quality-restoration requirement is then imposed on accepted unilateral remedial moves.
Given $\bm a^{\mathrm{on}}_{-i}$, the restoration-screened unilateral set of task $s_i$ is defined as
\begin{equation}
	\begin{aligned}
		\mathcal{R}^{\mathrm{on}}_i(\bm a^{\mathrm{on}}_{-i})
		\triangleq
		\Big\{
		a^{\mathrm{on}}_i\in\mathcal{A}^{\mathrm{on}}_i
		\ \Big|\
		&(a^{\mathrm{on}}_i,\bm a^{\mathrm{on}}_{-i})\in
		\mathcal{A}^{\mathrm{on,op}},
		\\
		&Q_i^{\mathrm{fin}}(a^{\mathrm{on}}_i,\bm a^{\mathrm{on}}_{-i})\ge Q_i^D
		\Big\}.
	\end{aligned}
	\label{eq:R_on_i}
\end{equation}
Thus, operational feasibility keeps the online recruitment state well-defined, while \eqref{eq:R_on_i} screens whether a candidate update satisfies the updated-task quality requirement.

\vspace{0.2em}
\noindent$\bullet$ \textbf{Online potential function.}
Since online availability is treated as realized, we define the online potential function directly by the online SW:
\begin{equation}
	\Phi^{\mathrm{on}}(\bm{a}^{\mathrm{on}})
	=
	\mathbb{SW}^{\mathrm{on}}\big(\bm{y}^{\mathrm{on}}(\bm{a}^{\mathrm{on}})\big),
	\label{eq:Phi_on_app}
\end{equation}
where $\mathbb{SW}^{\mathrm{on}}$ is given in \eqref{eq:SW_online_spot}.
Similar to the offline stage, payments cancel out when aggregating task and worker utilities in $\mathbb{SW}^{\mathrm{on}}$; hence $\Phi^{\mathrm{on}}$ can be equivalently expressed in a ``quality gain minus worker costs'' form, which facilitates incremental evaluation.

\vspace{0.2em}
\noindent$\bullet$ \textbf{Task payoff: marginal contribution.}
To align unilateral task updates with the global objective, we define each task's payoff as its marginal contribution to $\Phi^{\mathrm{on}}$:
\begin{equation}
	U_i^{\mathrm{on}}(\bm{a}^{\mathrm{on}})
	=
	\Phi^{\mathrm{on}}(\bm{a}^{\mathrm{on}})
	-
	\Phi^{\mathrm{on}}(\varnothing,\bm{a}^{\mathrm{on}}_{-i}),
	\qquad \forall s_i\in\bm{\mathcal{S}}^{\mathrm{on}},
	\label{eq:Ui_on_marginal_app}
\end{equation}
where $(\varnothing,\bm{a}^{\mathrm{on}}_{-i})$ means that task $s_i$ recruits no idle workers while other tasks keep unchanged.

\vspace{0.2em}
\noindent
For the underlying exact online game, we use the following constrained Nash equilibrium (NE) notion.

\begin{Defn}[Constrained NE of the operational online game]
	\label{def:on_constrained_ne_app}
	A joint strategy $\bm a^{\mathrm{on}\star}\in\mathcal{A}^{\mathrm{on,op}}$ is a pure-strategy constrained NE of the operational online game if, for any task $s_i\in\bm{\mathcal{S}}^{\mathrm{on}}$,
	\begin{equation}
		\begin{aligned}
				U_i^{\mathrm{on}}&(a_i^{\mathrm{on}\star},\bm{a}^{\mathrm{on}\star}_{-i})
			\ge
			U_i^{\mathrm{on}}(a_i^{\mathrm{on}},\bm{a}^{\mathrm{on}\star}_{-i}),\forall a_i^{\mathrm{on}}
			\\&
			\text{s.t.}\ (a_i^{\mathrm{on}},\bm{a}^{\mathrm{on}\star}_{-i})\in\mathcal{A}^{\mathrm{on,op}}.
		\end{aligned}
		\label{eq:on_ne_def_app}
	\end{equation}
\end{Defn}
This NE notion characterizes unilateral optimality within the operational online state space; ASPIRE-On further applies restoration screening when accepting remedial updates.

\subsubsection{Potential property of the online game}
We show that $\mathcal{G}^{\mathrm{on}}$ forms an exact potential game on $\mathcal{A}^{\mathrm{on,op}}$, which supports the constrained-NE characterization and FIP under exact feasible-improvement dynamics of the operational online game.

\begin{Defn}[Exact potential game]
	On the operational feasible set $\mathcal{A}^{\mathrm{on,op}}$, if there exists a scalar function $P^{\mathrm{on}}(\bm{a}^{\mathrm{on}})$ such that for any task $s_i\in\bm{\mathcal{S}}^{\mathrm{on}}$ and any two operationally feasible joint strategies
	$\bm{a}^{\mathrm{on}}=(a_i^{\mathrm{on}},\bm{a}^{\mathrm{on}}_{-i})\in\mathcal{A}^{\mathrm{on,op}}$ and
	$\bm{a}^{\mathrm{on}\prime}=(a_i^{\mathrm{on}\prime},\bm{a}^{\mathrm{on}}_{-i})\in\mathcal{A}^{\mathrm{on,op}}$,
	\begin{equation}
		U_i^{\mathrm{on}}(\bm{a}^{\mathrm{on}\prime})-U_i^{\mathrm{on}}(\bm{a}^{\mathrm{on}})
		=
		P^{\mathrm{on}}(\bm{a}^{\mathrm{on}\prime})-P^{\mathrm{on}}(\bm{a}^{\mathrm{on}}),
	\end{equation}
	then the game is an exact potential game and $P^{\mathrm{on}}(\cdot)$ is a potential function.
\end{Defn}

\begin{thm}[Exact potential property of the operational online game]
	\label{thm:on_exact_potential_app}
	On the operational feasible set $\mathcal{A}^{\mathrm{on,op}}$, game $\mathcal{G}^{\mathrm{on}}$ is an exact potential game with potential function
	\begin{equation}
		P^{\mathrm{on}}(\bm{a}^{\mathrm{on}})=\Phi^{\mathrm{on}}(\bm{a}^{\mathrm{on}}).
	\end{equation}
	Hence, the operational online game admits at least one pure-strategy constrained NE and satisfies the finite improvement property under exact feasible unilateral improvement dynamics.
\end{thm}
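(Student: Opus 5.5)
The plan mirrors the proof of Theorem~\ref{thm:task_exact_potential_en}, adapted to the operational set $\mathcal{A}^{\mathrm{on,op}}$. First I verify the exact potential identity. Fix a task $s_i\in\bm{\mathcal{S}}^{\mathrm{on}}$ and two profiles $\bm a^{\mathrm{on}}=(a_i^{\mathrm{on}},\bm a^{\mathrm{on}}_{-i})$ and $\bm a^{\mathrm{on}\prime}=(a_i^{\mathrm{on}\prime},\bm a^{\mathrm{on}}_{-i})$, both in $\mathcal{A}^{\mathrm{on,op}}$. I substitute the marginal-contribution payoff \eqref{eq:Ui_on_marginal_app} into $U_i^{\mathrm{on}}(\bm a^{\mathrm{on}\prime})-U_i^{\mathrm{on}}(\bm a^{\mathrm{on}})$. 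The reference term $\Phi^{\mathrm{on}}(\varnothing,\bm a^{\mathrm{on}}_{-i})$ depends only on $\bm a^{\mathrm{on}}_{-i}$, which both profiles share, so it cancels. What remains is $\Phi^{\mathrm{on}}(\bm a^{\mathrm{on}\prime})-\Phi^{\mathrm{on}}(\bm a^{\mathrm{on}})$, and this gives the claim with $P^{\mathrm{on}}=\Phi^{\mathrm{on}}$.

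One point needs a short check: the reference profile $(\varnothing,\bm a^{\mathrm{on}}_{-i})$ must be well defined. It is, because $\varnothing\in\mathcal{A}^{\mathrm{on}}_i$ (its total payment is $0\le\bar B_i$). Removing $s_i$'s recruits cannot violate exclusivity, so $(\varnothing,\bm a^{\mathrm{on}}_{-i})\in\mathcal{A}^{\mathrm{on,op}}$. $\Phi^{\mathrm{on}}$ is defined for every profile through \eqref{eq:SW_online_spot}, so the cancellation is legitimate.

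For \emph{existence}, I argue finiteness and nonemptiness. $\bm{\mathcal{W}}^{\mathrm{on}}$ and $\bm{\mathcal{S}}^{\mathrm{on}}$ are finite, and each $\mathcal{A}^{\mathrm{on}}_i$ is a family of subsets of a finite candidate pool, so $\mathcal{A}^{\mathrm{on,op}}$ is finite. The all-empty profile lies in it, so it is nonempty. Unlike Theorem~\ref{thm:task_exact_potential_en}, no nonemptiness assumption is needed, because the quality requirement \eqref{eq:on_quality_app} is excluded from $\mathcal{A}^{\mathrm{on,op}}$ and enforced only through the screening set \eqref{eq:R_on_i}. Hence $\Phi^{\mathrm{on}}$ attains a maximum at some $\bm a^{\mathrm{on}\star}\in\mathcal{A}^{\mathrm{on,op}}$. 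Suppose some task had a deviation with $(a_i^{\mathrm{on}},\bm a^{\mathrm{on}\star}_{-i})\in\mathcal{A}^{\mathrm{on,op}}$ that strictly increased $U_i^{\mathrm{on}}$. By the identity just shown, it would strictly increase $\Phi^{\mathrm{on}}$, which contradicts maximality. So $\bm a^{\mathrm{on}\star}$ satisfies Definition~\ref{def:on_constrained_ne_app}.

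For the \emph{FIP}, consider any sequence of feasible unilateral strict improvements within $\mathcal{A}^{\mathrm{on,op}}$. By the identity, $\Phi^{\mathrm{on}}$ strictly increases at each step, so no profile can repeat. Since the set is finite, the sequence has length at most $|\mathcal{A}^{\mathrm{on,op}}|-1$. Its terminal profile admits no improving feasible deviation, which means it is a constrained NE. The same argument applies to restricted dynamics that accept only moves in $\mathcal{R}^{\mathrm{on}}_i$, since these are a subset of feasible improvements.

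The main obstacle is not algebraic; it is keeping the domains consistent. The NE must be defined over $\mathcal{A}^{\mathrm{on,op}}$ rather than over the restoration-screened sets. Those sets depend on $\bm a^{\mathrm{on}}_{-i}$, may be empty, and would break both nonemptiness and the guarantee that a maximizer of $\Phi^{\mathrm{on}}$ is an equilibrium of the screened game. I will therefore state explicitly that the NE and FIP claims refer to the operational game, and that restoration screening only narrows which improvements are accepted.
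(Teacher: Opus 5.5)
Your proof is correct and follows the same route as the paper. The common reference term $\Phi^{\mathrm{on}}(\varnothing,\bm a^{\mathrm{on}}_{-i})$ cancels to give the exact potential identity on $\mathcal{A}^{\mathrm{on,op}}$; finiteness then yields a potential maximizer that is a constrained NE, and strict potential ascent along feasible unilateral improvements gives the FIP. You also add checks the paper leaves implicit: $(\varnothing,\bm a^{\mathrm{on}}_{-i})$ and the all-empty profile lie in $\mathcal{A}^{\mathrm{on,op}}$, which uses $\bar B_i\ge 0$ from the offline budget feasibility of $\bm a^\star$ and nonnegative payments, and restoration screening only shrinks the set of accepted moves.
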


\begin{proof}
	For any task $s_i\in\bm{\mathcal{S}}^{\mathrm{on}}$ and any two operationally feasible joint strategies
	$\bm{a}^{\mathrm{on}}=(a_i^{\mathrm{on}},\bm{a}^{\mathrm{on}}_{-i})\in\mathcal{A}^{\mathrm{on,op}}$ and
	$\bm{a}^{\mathrm{on}\prime}=(a_i^{\mathrm{on}\prime},\bm{a}^{\mathrm{on}}_{-i})\in\mathcal{A}^{\mathrm{on,op}}$,
	by \eqref{eq:Ui_on_marginal_app} we have
	\begin{equation}
		\begin{aligned}
			&U_i^{\mathrm{on}}(\bm{a}^{\mathrm{on}\prime})-U_i^{\mathrm{on}}(\bm{a}^{\mathrm{on}})
			=
			\Big(\Phi^{\mathrm{on}}(\bm{a}^{\mathrm{on}\prime})-\Phi^{\mathrm{on}}(\varnothing,\bm{a}^{\mathrm{on}}_{-i})\Big)
			\\
			&-
			\Big(\Phi^{\mathrm{on}}(\bm{a}^{\mathrm{on}})-\Phi^{\mathrm{on}}(\varnothing,\bm{a}^{\mathrm{on}}_{-i})\Big)
			=
			\Phi^{\mathrm{on}}(\bm{a}^{\mathrm{on}\prime})-\Phi^{\mathrm{on}}(\bm{a}^{\mathrm{on}})
			\\
			&=
			P^{\mathrm{on}}(\bm{a}^{\mathrm{on}\prime})-P^{\mathrm{on}}(\bm{a}^{\mathrm{on}}),
		\end{aligned}
	\end{equation}
	which verifies the exact potential condition. Therefore, $\mathcal{G}^{\mathrm{on}}$ is an exact potential game on $\mathcal{A}^{\mathrm{on,op}}$ with $P^{\mathrm{on}}=\Phi^{\mathrm{on}}$.
	
	Moreover, since each $\mathcal{A}_i^{\mathrm{on}}$ and $\mathcal{A}^{\mathrm{on,op}}$ are finite, the finite operational game admits a potential maximizer, which corresponds to a constrained NE of the underlying exact game.
	Under exact asynchronous feasible better-response dynamics, the potential function strictly increases along every accepted update, and thus the FIP holds over the finite operational feasible set.
\end{proof}
The above result establishes an exact alignment between unilateral feasible recruitment improvements and the increase of global online welfare, supporting low-overhead remedial coordination via potential-driven asynchronous feasible updates.

\subsubsection{ASPIRE-On: asynchronous self-organized potential improvement for online remedy}
Theorem~\ref{thm:on_exact_potential_app} provides the potential-game basis for online remedial coordination over the operational feasible set.
To keep online remediation lightweight, we propose \textit{ASPIRE-On} (Alg.~\ref{Alg:TaskDPAsyncOn}), a DP-assisted potential-improvement procedure that mitigates execution-time quality deficits using only the unmet-demand task set $\bm{\mathcal{S}}^{\mathrm{on}}$ and the idle worker pool $\bm{\mathcal{W}}^{\mathrm{on}}$.
Unlike ASPIRE-Off, ASPIRE-On does not enumerate the entire operational feasible set.
Instead, it uses \textsc{KnapsackDP} to identify a promising temporary recruitment set and accepts it only after operational feasibility, quality-restoration screening, and exact potential-improvement verification.
Accordingly, ASPIRE-On provides operational feasibility preservation, finite termination, and monotonic potential improvement for bounded-round online remediation.

\begin{algorithm}[t!]
	{\scriptsize \setstretch{0.45}
		\caption{Proposed ASPIRE-On}
		\label{Alg:TaskDPAsyncOn}
		\LinesNumbered
		
		\textbf{Input:}
		online-demand task set $\bm{\mathcal{S}}^{\mathrm{on}}$, idle worker set $\bm{\mathcal{W}}^{\mathrm{on}}$;
		candidate sets $\{\tilde{\bm{\mathcal{W}}}^{\mathrm{on}}_i\}$;
		payments $\bm p=[p_{i,j}]$, remaining budgets $\{\bar B_i\}$;
		quality parameters $\{q_{i,j}\}$, redundancy factors $\{\zeta_i\}$, baseline qualities $\{Q_i^{\mathrm{base}}\}$, thresholds $\{Q_i^D\}$;
		improvement threshold $\varepsilon^{\prime}$, max rounds $R_{\max}$.
		
		\textbf{Initialization:}
		$r\leftarrow 0$; initialize an operationally feasible online profile $\bm a^{\mathrm{on}}(0)=\{a_i^{\mathrm{on}}(0)\}\in\mathcal{A}^{\mathrm{on,op}}$ (e.g., $a_i^{\mathrm{on}}(0)=\varnothing$);
		set $y^{\mathrm{on}}_{i,j}(0)=\mathbbm{1}\{w_j\in a_i^{\mathrm{on}}(0)\}$;
		$\bm{\mathcal{W}}^{\mathrm{idle}}_{\mathrm{on}}(0)\leftarrow \{w_j\in\bm{\mathcal{W}}^{\mathrm{on}}\mid \sum_{s_i\in\bm{\mathcal{S}}^{\mathrm{on}}} y^{\mathrm{on}}_{i,j}(0)=0\}$.
		
		\While{$r<R_{\max}$}{
			Platform broadcasts the current online profile $\bm a^{\mathrm{on}}(r)$ (or equivalently $\bm y^{\mathrm{on}}(r)$) and the idle worker set $\bm{\mathcal{W}}^{\mathrm{idle}}_{\mathrm{on}}(r)$.
			
			\For{$\forall s_i\in\bm{\mathcal{S}}^{\mathrm{on}}$}{
				$\bm{\mathcal{W}}^{\mathrm{avail,on}}_i(r)\leftarrow
				\big(a_i^{\mathrm{on}}(r)\cup \bm{\mathcal{W}}^{\mathrm{idle}}_{\mathrm{on}}(r)\big)\cap \tilde{\bm{\mathcal{W}}}^{\mathrm{on}}_i$;
				
				$a_{i}^{\mathrm{cand,on}}(r)\leftarrow
				\textsc{KnapsackDP}\big(\bm{\mathcal{W}}^{\mathrm{avail,on}}_i(r),\{p_{i,j}\},\{v_{i,j}^{\mathrm{on}}\},\bar B_i\big)$;
				
				$\bm a^{\mathrm{on}\prime}_i(r)\leftarrow (a_{i}^{\mathrm{cand,on}}(r),\bm a^{\mathrm{on}}_{-i}(r))$;
				
				\If{$a_{i}^{\mathrm{cand,on}}(r)$ passes operational feasibility and restoration screening in \eqref{eq:A_on_op}--\eqref{eq:R_on_i}}{
					compute $\Phi^{\mathrm{on}}(\bm a^{\mathrm{on}\prime}_i(r))$ and $\Phi^{\mathrm{on}}(\bm a^{\mathrm{on}}(r))$ exactly by \eqref{eq:Phi_on_app};
					
					$\Delta_i^{\mathrm{on}}(r)\leftarrow \Phi^{\mathrm{on}}(\bm a^{\mathrm{on}\prime}_i(r))-\Phi^{\mathrm{on}}(\bm a^{\mathrm{on}}(r))$;
				}\Else{
					$\Delta_i^{\mathrm{on}}(r)\leftarrow 0$;
				}
			}
			
			$\mathcal{I}^{\mathrm{on}}(r)\leftarrow \{i\in\bm{\mathcal{S}}^{\mathrm{on}}\mid \Delta_i^{\mathrm{on}}(r)>\varepsilon^\prime\}$;
			
			\If{$\mathcal{I}^{\mathrm{on}}(r)=\varnothing$}{
				\textbf{break};
			}
			
			select one task index $i^\star\in\mathcal{I}^{\mathrm{on}}(r)$ (e.g., $\arg\max_{i\in\mathcal{I}^{\mathrm{on}}(r)}\Delta_i^{\mathrm{on}}(r)$);
			
			$a_{i^\star}^{\mathrm{on}}(r+1)\leftarrow a_{i^\star}^{\mathrm{cand,on}}(r)$;
			
			\For{$\forall i\in\bm{\mathcal{S}}^{\mathrm{on}},\, i\neq i^\star$}{
				$a_{i}^{\mathrm{on}}(r+1)\leftarrow a_{i}^{\mathrm{on}}(r)$;
			}
			
			update $y^{\mathrm{on}}_{i,j}(r+1)=\mathbbm{1}\{w_j\in a_i^{\mathrm{on}}(r+1)\}$;
			
			$\bm{\mathcal{W}}^{\mathrm{idle}}_{\mathrm{on}}(r+1)\leftarrow \{w_j\in\bm{\mathcal{W}}^{\mathrm{on}}\mid \sum_{s_i\in\bm{\mathcal{S}}^{\mathrm{on}}} y^{\mathrm{on}}_{i,j}(r+1)=0\}$;
			
			$r\leftarrow r+1$;
		}
		
		\textbf{Return:} $\bm a^{\mathrm{on}\star}=\bm a^{\mathrm{on}}(r)$ and the induced decision $\bm y^{\mathrm{on}}(\bm a^{\mathrm{on}\star})$.
	}
\end{algorithm}

\vspace{0.25em}
\noindent\textbf{Step 1. Online candidate construction and operational initialization} (line 2, Alg.~\ref{Alg:TaskDPAsyncOn}):
Using the same acceptance screening as in the offline stage, the platform constructs the online candidate set $\tilde{\bm{\mathcal{W}}}^{\mathrm{on}}_i$ for each unmet-demand task $s_i$ as in \eqref{eq:candidate_workers_on_app}.
It then initializes an operationally feasible online recruitment profile $\bm a^{\mathrm{on}}(0)\in\mathcal{A}^{\mathrm{on,op}}$ (e.g., all-empty), derives $y^{\mathrm{on}}_{i,j}(0)$, and forms the initial idle set $\bm{\mathcal{W}}^{\mathrm{idle}}_{\mathrm{on}}(0)$.

\vspace{0.35em}
\noindent\textbf{Step 2. Round-wise broadcast and available-set formation} (lines 3-4, Alg.~\ref{Alg:TaskDPAsyncOn}):
We discretize online remediation into rounds $r=0,1,2,\ldots$.
At each round, the platform broadcasts only aggregate state information, including the current profile $\bm a^{\mathrm{on}}(r)$ (equivalently $\bm y^{\mathrm{on}}(r)$) and the current idle set $\bm{\mathcal{W}}^{\mathrm{idle}}_{\mathrm{on}}(r)$.
Each task $s_i$ then constructs its available set
\[
\bm{\mathcal{W}}^{\mathrm{avail,on}}_i(r)
=
\big(a_i^{\mathrm{on}}(r)\cup \bm{\mathcal{W}}^{\mathrm{idle}}_{\mathrm{on}}(r)\big)\cap \tilde{\bm{\mathcal{W}}}^{\mathrm{on}}_i,
\]
meaning that $s_i$ can only (re)select workers already recruited by itself or recruit from the currently idle pool, while still respecting the perturbed-intent candidate constraint.

\vspace{0.35em}
\noindent\textbf{Step 3. DP-based candidate construction as a surrogate potential-gain search} (lines 6-8, Alg.~\ref{Alg:TaskDPAsyncOn}):
Selecting a subset of idle workers from $\bm{\mathcal{W}}^{\mathrm{avail,on}}_i(r)$ under the residual budget $\bar B_i$ admits a 0--1 knapsack structure.
We thus run \textsc{KnapsackDP} to generate a candidate set $a_{i}^{\mathrm{cand,on}}(r)$.
Specifically, each candidate worker $w_j$ is assigned a DP weight and value as
\begin{equation}
	w^{\mathrm{on}}_{i,j}=p_{i,j},\qquad 
v^{\mathrm{on}}_{i,j}=\omega_3 q_{i,j}-c^{\mathrm{W}}_{i,j},
\end{equation}
where $v^{\mathrm{on}}_{i,j}$ approximates the marginal online utility gain (quality gain minus worker cost).
We emphasize that the DP objective is used only to efficiently search a promising subset under the budget,
and does not replace the exact potential evaluation in Step~4.

\vspace{0.35em}
\noindent\textbf{Step 4. Feasibility screening and strict potential-improvement verification} (lines 9-13, Alg.~\ref{Alg:TaskDPAsyncOn}):
The candidate is first checked against the operational feasible set $\mathcal{A}^{\mathrm{on,op}}$ and restoration-screened unilateral set $\mathcal{R}^{\mathrm{on}}_i(\bm a^{\mathrm{on}}_{-i})$.
Only candidates that preserve idle-worker exclusivity, satisfy the remaining budget, and pass the updated-task quality-restoration screening are further evaluated by exact online potential.
Hence, the platform forms $\bm a_i^{\mathrm{on}\prime}(r)=(a_{i}^{\mathrm{cand,on}}(r),\bm a^{\mathrm{on}}_{-i}(r))$ and computes
\begin{equation}
	\Delta_i^{\mathrm{on}}(r)=\Phi^{\mathrm{on}}(\bm a_i^{\mathrm{on}\prime}(r))-\Phi^{\mathrm{on}}(\bm a^{\mathrm{on}}(r)).
\end{equation}
A candidate is considered effective only when $\Delta_i^{\mathrm{on}}(r)>\varepsilon'$; thus, every accepted update strictly increases $\Phi^{\mathrm{on}}$, yielding monotonic potential improvement.

\vspace{0.35em}
\noindent\textbf{Step 5. Asynchronous permission control and single-task unilateral execution} (lines 17-21, Alg.~\ref{Alg:TaskDPAsyncOn}):
Let $\mathcal{I}^{\mathrm{on}}(r)=\{i\in\bm{\mathcal{S}}^{\mathrm{on}}\mid \Delta_i^{\mathrm{on}}(r)>\varepsilon'\}$.
To prevent simultaneous contention for the same idle worker, the platform enforces asynchronous updates:
at most one task $i^\star\in\mathcal{I}^{\mathrm{on}}(r)$ is selected per round (e.g., max-gain) to execute the update
$a_{i^\star}^{\mathrm{on}}(r{+}1)\leftarrow a_{i^\star}^{\mathrm{cand,on}}(r)$,
while all other tasks keep unchanged.
The platform then refreshes $\bm y^{\mathrm{on}}(r{+}1)$ and $\bm{\mathcal{W}}^{\mathrm{idle}}_{\mathrm{on}}(r{+}1)$ accordingly.

\vspace{0.35em}
\noindent\textbf{Step 6. Termination and remedial output} (lines 3-24, Alg.~\ref{Alg:TaskDPAsyncOn}):
If $\mathcal{I}^{\mathrm{on}}(r)=\varnothing$, no verified DP-assisted remedial candidate yields a potential gain larger than $\varepsilon'$, and ASPIRE-On terminates early.
Otherwise, the process continues until reaching $R_{\max}$ rounds.
Since each accepted update remains in $\mathcal{A}^{\mathrm{on,op}}$, satisfies the updated-task restoration screening, and strictly improves $\Phi^{\mathrm{on}}$, ASPIRE-On returns an operationally feasible temporary recruitment profile with monotonic potential improvement.

\section{Supplementary Proofs of Key Properties}
\label{app:proofs}

This appendix provides proof details for the key theoretical results used throughout iParts. 
We first verify the \emph{mechanism-level} privacy guarantee of MIRROR under $\varepsilon_j$-personalized LDP, then show that platform-side quantities derived from perturbed reports enjoy \emph{mechanism-level post-processing invariance}. 
Next, we formalize why memoization neutralizes the adversary’s multi-snapshot frequency gain. 
Finally, we prove the finite termination and feasibility guarantees of PRIMER and ASPIRE-Off, which together ensure that iParts remains deployable under explicit privacy, budget, and risk constraints.

\subsection{Personalized LDP Guarantee of MIRROR via Randomized Response}
\label{app:proofs:ldp_mirror}

We start from the elementary building block: the binary RR used in MIRROR. 
After proving $\varepsilon$-LDP for a single binary intent entry, we extend the result to the full intent vector under the one-entry adjacency notion in Definition~\ref{def:personalized_ldp_intent}.

\begin{lem}[Binary RR satisfies $\varepsilon$-LDP]
	\label{lem:rr_ldp_single}
	Let $b\in\{0,1\}$ and $\tilde b\in\{0,1\}$ be generated by $\textsc{RR}(b,\varepsilon)$ in \eqref{eq:RR_def}. 
	Then for any two adjacent inputs $b^{(1)}\neq b^{(2)}$ and any output $\tilde b\in\{0,1\}$,
	\begin{equation}
		\frac{\Pr(\tilde b \mid b^{(1)})}{\Pr(\tilde b \mid b^{(2)})}\le e^{\varepsilon}.
	\end{equation}
\end{lem}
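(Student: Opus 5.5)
The approach is direct case enumeration over the four input–output combinations of the binary randomized response in \eqref{eq:RR_def}. Since $b^{(1)}\neq b^{(2)}$ and both lie in $\{0,1\}$, we have $b^{(2)}=1-b^{(1)}$. For any output $\tilde b\in\{0,1\}$, exactly one of the two inputs coincides with $\tilde b$, so we only need two cases.

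\emph{Case 1:} $\tilde b=b^{(1)}$. Then $\Pr(\tilde b\mid b^{(1)})=\frac{e^{\varepsilon}}{e^{\varepsilon}+1}$ and $\Pr(\tilde b\mid b^{(2)})=\frac{1}{e^{\varepsilon}+1}$. The common normalizer $e^{\varepsilon}+1$ cancels, so the ratio equals exactly $e^{\varepsilon}$.

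\emph{Case 2:} $\tilde b=b^{(2)}$. The roles swap, and the ratio becomes $\frac{1/(e^{\varepsilon}+1)}{e^{\varepsilon}/(e^{\varepsilon}+1)}=e^{-\varepsilon}$. This is at most $e^{\varepsilon}$ because $\varepsilon>0$.

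Taking the maximum over both cases gives the stated bound, with equality attained in Case 1. This confirms that the RR parameterization in \eqref{eq:RR_def} is tight for budget $\varepsilon$.

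There is no genuine obstacle. The only points to handle carefully are that both conditional probabilities are strictly positive, so the ratio is well defined, and that $\varepsilon>0$ as assumed in \eqref{eq:RR_def}.

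For later use, when extending to the full vector under Definition~\ref{def:personalized_ldp_intent}, I would note the structure of MIRROR. It perturbs entries independently, so $f(\tilde{\mathbf b}^{*}\mid\mathbf b_j)$ factorizes over tasks. For adjacent vectors, all factors cancel except the one differing coordinate, and that remaining ratio is bounded by this lemma with $\varepsilon=\varepsilon_j^\star$.
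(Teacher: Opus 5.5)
Your proof is correct and follows the paper's argument: both enumerate the two possible outputs and find a likelihood ratio of $e^{\varepsilon}$ when the output matches the numerator's input, and $e^{-\varepsilon}\le e^{\varepsilon}$ otherwise. The only cosmetic difference is that you label the cases by which input the output coincides with, whereas the paper fixes $b^{(1)}=1$, $b^{(2)}=0$ by symmetry and then enumerates $\tilde b\in\{0,1\}$.
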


\begin{proof}
	By symmetry of the binary domain, it suffices to consider the adjacent pair
	$b^{(1)}=1$ and $b^{(2)}=0$ (the reverse case is identical).
	We next verify the LDP likelihood-ratio bound for the two possible outputs $\tilde b\in\{0,1\}$.
	
	\noindent\textbf{Case 1: $\tilde b=1$.}
	Using \eqref{eq:RR_def}, we have
	\begin{equation}
		\frac{\Pr(\tilde b=1\mid b=1)}{\Pr(\tilde b=1\mid b=0)}
		=
		\frac{\frac{e^{\varepsilon}}{e^{\varepsilon}+1}}{\frac{1}{e^{\varepsilon}+1}}
		=
		e^{\varepsilon}.
	\end{equation}
	
	\noindent\textbf{Case 2: $\tilde b=0$.}
	Similarly, by \eqref{eq:RR_def},
	\begin{equation}
		\frac{\Pr(\tilde b=0\mid b=1)}{\Pr(\tilde b=0\mid b=0)}
		=
		\frac{\frac{1}{e^{\varepsilon}+1}}{\frac{e^{\varepsilon}}{e^{\varepsilon}+1}}
		=
		e^{-\varepsilon}
		\le
		e^{\varepsilon}.
	\end{equation}
	
	Since the likelihood ratio is bounded by $e^{\varepsilon}$ in both exhaustive cases,
	the randomized response mechanism satisfies $\varepsilon$-LDP for a single binary entry.
\end{proof}

\begin{lem}[Vector-level $\varepsilon_j$-personalized LDP via entry-wise randomized response]
	\label{lem:rr_ldp_vector}
	Let $\mathbf b_j\in\{0,1\}^{|\bm{\mathcal{S}}|}$ be worker $w_j$'s intent vector, and let $\tilde{\mathbf b}_j$ be generated by applying $\textsc{RR}(\cdot,\varepsilon_j)$ independently to each entry (as in Alg.~\ref{Alg:MSIPM}). 
	Under the adjacency notion in Definition~\ref{def:personalized_ldp_intent} (i.e., $\|\mathbf b_j^{(1)}-\mathbf b_j^{(2)}\|_0=1$), the resulting mechanism satisfies $\varepsilon_j$-personalized LDP: for any adjacent $\mathbf b_j^{(1)},\mathbf b_j^{(2)}$ and any $\tilde{\mathbf b}^\ast\in\tilde{\mathcal{B}}$,
	\begin{equation}
		\frac{\Pr(\tilde{\mathbf b}^\ast\mid \mathbf b_j^{(1)})}{\Pr(\tilde{\mathbf b}^\ast\mid \mathbf b_j^{(2)})}\le e^{\varepsilon_j}.
	\end{equation}
\end{lem}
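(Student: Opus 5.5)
The plan is to reduce the vector-level bound to Lemma~\ref{lem:rr_ldp_single} by exploiting the product structure created by independent entry-wise randomization. Because MIRROR applies $\textsc{RR}(\cdot,\varepsilon_j)$ independently to each coordinate, the output distribution factorizes for any input $\mathbf b_j$ and any output $\tilde{\mathbf b}^\ast$:
\begin{equation*}
\Pr(\tilde{\mathbf b}^\ast\mid \mathbf b_j)=\prod_{s_i\in\bm{\mathcal{S}}}\Pr(\tilde b^\ast_{i}\mid b_{i,j}).
\end{equation*}
The first step is to state this factorization explicitly, with the conditional independence justified by the loop in Alg.~\ref{Alg:MSIPM}. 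I will also note that every factor is strictly positive, because $1/(e^{\varepsilon_j}+1)>0$. This guarantees that the likelihood ratio is well defined for every output.

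Next, I fix adjacent $\mathbf b_j^{(1)},\mathbf b_j^{(2)}$ satisfying $\|\mathbf b_j^{(1)}-\mathbf b_j^{(2)}\|_0=1$, and let $s_k$ be the single task dimension where they differ. I form the ratio of the two product expressions. For every $i\neq k$ the inputs agree, $b^{(1)}_{i,j}=b^{(2)}_{i,j}$, so the corresponding factors are identical and cancel. The ratio therefore collapses to
\begin{equation*}
\frac{\Pr(\tilde b^\ast_{k}\mid b^{(1)}_{k,j})}{\Pr(\tilde b^\ast_{k}\mid b^{(2)}_{k,j})}.
\end{equation*}
Since $b^{(1)}_{k,j}\neq b^{(2)}_{k,j}$, Lemma~\ref{lem:rr_ldp_single} applies with $\varepsilon=\varepsilon_j$ and bounds this ratio by $e^{\varepsilon_j}$, for either value of $\tilde b^\ast_k$. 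Because the output $\tilde{\mathbf b}^\ast\in\tilde{\mathcal B}=\{0,1\}^{|\bm{\mathcal S}|}$ and the adjacent pair were arbitrary, this establishes Definition~\ref{def:personalized_ldp_intent}.

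No step here is hard. The only point that needs care is the factorization together with the cancellation of the non-differing coordinates. I will make two remarks to close the argument.

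\emph{Calibrated budget.} The budget actually used by MIRROR is the calibrated $\varepsilon_j^\star$ returned by PRIMER, so the guarantee holds with $\varepsilon_j=\varepsilon_j^\star$.

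\emph{Memoization.} Within an epoch, the report is a deterministic reuse of a single sample. Hence the view across any number of rounds is a function of one draw of $\tilde{\mathbf b}^{\mathrm{perm}}_j(e)$, and the per-epoch guarantee is unchanged.
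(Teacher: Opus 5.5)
Your proposal is correct and follows essentially the same route as the paper: factorize the output likelihood using the independence of the entry-wise RR, cancel the factors on the coordinates where the two inputs agree, and bound the remaining single-entry ratio by Lemma~\ref{lem:rr_ldp_single}. Your extra remarks are sound additions. The paper leaves the strict-positivity point implicit, and it covers the calibrated-budget and memoization points in the text surrounding the lemma.
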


\begin{proof}
	Consider any two adjacent intent vectors $\mathbf b_j^{(1)}$ and $\mathbf b_j^{(2)}$ that differ in exactly one task dimension, denoted by $s_{i^\circ}$, and coincide on all remaining dimensions.
	Since the randomized response is applied independently across entries, the conditional likelihood of observing a specific output vector $\tilde{\mathbf b}^\ast$ factorizes as
$	\Pr(\tilde{\mathbf b}^\ast\mid \mathbf b_j)
	=
	\prod_{i\in\bm{\mathcal{S}}}\Pr(\tilde b^\ast_{i,j}\mid b_{i,j}).$

	Taking the likelihood ratio under $\mathbf b_j^{(1)}$ and $\mathbf b_j^{(2)}$, all terms corresponding to the identical dimensions cancel out, leaving only the single differing entry:
\begin{equation}
		\frac{\Pr(\tilde{\mathbf b}^\ast\mid \mathbf b_j^{(1)})}{\Pr(\tilde{\mathbf b}^\ast\mid \mathbf b_j^{(2)})}
	=
	\frac{\Pr(\tilde b^\ast_{i^\circ,j}\mid b^{(1)}_{i^\circ,j})}{\Pr(\tilde b^\ast_{i^\circ,j}\mid b^{(2)}_{i^\circ,j})}.
\end{equation}

	By Lemma~\ref{lem:rr_ldp_single}, the above single-entry ratio is upper bounded by $e^{\varepsilon_j}$ for any $\tilde b^\ast_{i^\circ,j}\in\{0,1\}$. 
	Therefore, the entire vector-level ratio can also be bounded by $e^{\varepsilon_j}$.
\end{proof}

In MIRROR (i.e., Alg.~\ref{Alg:MSIPM}), worker $w_j$ first samples an epoch-stable report $\tilde{\mathbf b}^{\mathrm{perm}}_j(e)$ via entry-wise RR, which satisfies $\varepsilon_j$-personalized LDP by Lemma~\ref{lem:rr_ldp_vector}. 
All subsequent within-epoch transmissions simply output the same already-randomized $\tilde{\mathbf b}^{\mathrm{perm}}_j(e)$.
Hence, each observable report remains distributed exactly as the original LDP mechanism output, and the per-round distinguishability bound in \eqref{eq:personalized_ldp_def_rev} is preserved unchanged.

\subsection{Post-processing Invariance of $\varepsilon_j$-Personalized LDP}
\label{app:proofs:post_processing}

After MIRROR outputs the perturbed intent report $\tilde{\mathbf b}_j$, the platform uses $\tilde{\mathbf b}_j$ together with public system parameters and protocol-defined execution states to construct candidate sets, compute the offline pre-plan (RAPCoD/ASPIRE-Off), and trigger online temporary recruitment.
We show that the intent-report-dependent components of these computations do not consume additional LDP budget beyond $\varepsilon_j$.

\begin{lem}[Post-processing invariance for $\varepsilon_j$-personalized LDP]
	\label{lem:post_processing}
	Let $f$ be an $\varepsilon_j$-personalized LDP mechanism that maps $\mathbf b_j$ to $\tilde{\mathbf b}_j$.
	For any (possibly randomized) mapping $g$ defined on the output space of $f$, the composed mechanism $g\circ f$
	is also $\varepsilon_j$-personalized LDP.
\end{lem}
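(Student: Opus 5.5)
The plan is the standard post-processing argument, carried out directly on the likelihood-ratio form of Definition~\ref{def:personalized_ldp_intent}. Since every output space here is finite ($\tilde{\mathcal{B}}=\{0,1\}^{|\bm{\mathcal{S}}|}$, and every platform-side quantity is a finite object such as candidate sets, pre-plan profiles, or online recruitment profiles), I will treat $g$ as a Markov kernel $K(z\mid \tilde{\mathbf b})$ from $\tilde{\mathcal{B}}$ to a countable output space $\mathcal{Z}$. A deterministic $g$ is the special case of a $0$--$1$ kernel. The key modelling point is that the internal randomness of $g$ must be independent of the true intent $\mathbf b_j$ given $\tilde{\mathbf b}_j$. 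Public parameters and protocol states enter $g$ only as fixed inputs, which is exactly the sense in which the platform computes only from $\tilde{\mathbf b}_j$.

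\textbf{Step 1 (write the composed likelihood).} For any input $\mathbf b_j$ and any output $z\in\mathcal{Z}$, I will write
\[
\Pr\big((g\circ f)(\mathbf b_j)=z\big)=\sum_{\tilde{\mathbf b}\in\tilde{\mathcal{B}}} K(z\mid \tilde{\mathbf b})\, f(\tilde{\mathbf b}\mid \mathbf b_j).
\]

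\textbf{Step 2 (termwise domination).} Fix adjacent inputs $\mathbf b_j^{(1)},\mathbf b_j^{(2)}$. By the $\varepsilon_j$-personalized LDP of $f$ (for MIRROR this is Lemma~\ref{lem:rr_ldp_vector}), we have $f(\tilde{\mathbf b}\mid \mathbf b_j^{(1)})\le e^{\varepsilon_j} f(\tilde{\mathbf b}\mid \mathbf b_j^{(2)})$ for every $\tilde{\mathbf b}$. Multiplying by the nonnegative weight $K(z\mid\tilde{\mathbf b})$ and summing over $\tilde{\mathbf b}$ gives
\[
\Pr\big((g\circ f)(\mathbf b_j^{(1)})=z\big)\le e^{\varepsilon_j}\Pr\big((g\circ f)(\mathbf b_j^{(2)})=z\big).
\]
This is the required ratio bound whenever the denominator is positive.

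\textbf{Step 3 (zero-denominator case).} If the denominator is zero, then every term $K(z\mid\tilde{\mathbf b})f(\tilde{\mathbf b}\mid \mathbf b_j^{(2)})$ vanishes. By Step 2 the numerator therefore vanishes too, so the bound holds in its multiplicative form. For RR this case never arises, because all of its probabilities are strictly positive.

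\textbf{Step 4 (extensions).} If needed, the same computation with sums over events $E\subseteq\mathcal{Z}$ gives the event-level version. Applying the lemma at each step shows that the whole pipeline (candidate construction, ASPIRE-Off, ASPIRE-On screening) is one composite kernel of $\tilde{\mathbf b}_j$, so the guarantee carries over to it.

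The main obstacle is not the inequality but justifying that $g$ really is a kernel acting on $\tilde{\mathbf b}_j$ alone. If platform outputs also depend on other data correlated with $\mathbf b_j$ (for example, realized worker behaviour), the lemma would not apply. I would make this explicit by stating that such inputs are public, or are independent of $\mathbf b_j$ conditional on $\tilde{\mathbf b}_j$.
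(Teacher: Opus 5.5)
Your proposal is correct and is essentially the paper's argument: expand $\Pr(o\mid\mathbf b_j)$ by total probability over $\tilde{\mathbf b}$, then invoke the pointwise $\varepsilon_j$-LDP bound of $f$. The paper bounds the ratio of the two weighted sums by $\max_{\tilde{\mathbf b}}$ of the likelihood ratios; your version sums the termwise domination $f(\tilde{\mathbf b}\mid\mathbf b_j^{(1)})\le e^{\varepsilon_j}f(\tilde{\mathbf b}\mid\mathbf b_j^{(2)})$ against $K(z\mid\tilde{\mathbf b})$, which is the same inequality and also covers the zero-denominator case.
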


\begin{proof}
	Consider any two adjacent inputs $\mathbf b_j^{(1)},\mathbf b_j^{(2)}$ and any output $o$ in the range of $g$.
	By the law of total probability over $\tilde{\mathbf b}$, we have
\begin{equation}
		\Pr(o\mid \mathbf b_j)
	=
	\sum_{\tilde{\mathbf b}} \Pr(o\mid \tilde{\mathbf b})\,\Pr(\tilde{\mathbf b}\mid \mathbf b_j).
\end{equation}
	Thus, we have
\begin{equation}
	\begin{aligned}
		\frac{\Pr(o\mid \mathbf b_j^{(1)})}{\Pr(o\mid \mathbf b_j^{(2)})}
	&	=
		\frac{\sum_{\tilde{\mathbf b}} \Pr(o\mid \tilde{\mathbf b})\Pr(\tilde{\mathbf b}\mid \mathbf b_j^{(1)})}
		{\sum_{\tilde{\mathbf b}} \Pr(o\mid \tilde{\mathbf b})\Pr(\tilde{\mathbf b}\mid \mathbf b_j^{(2)})}
		\\&\le
		\max_{\tilde{\mathbf b}}
		\frac{\Pr(\tilde{\mathbf b}\mid \mathbf b_j^{(1)})}{\Pr(\tilde{\mathbf b}\mid \mathbf b_j^{(2)})}
		\le
		e^{\varepsilon_j},
	\end{aligned}
\end{equation}
	where the last inequality follows directly from the $\varepsilon_j$-personalized LDP guarantee of $f$
	in \eqref{eq:personalized_ldp_def_rev}.
\end{proof}

\noindent
Lemma~\ref{lem:post_processing} implies that the intent-report-dependent components of candidate filtering, the RAPCoD/ASPIRE-Off contract profile, and online temporary recruitment are post-processing of MIRROR outputs. Therefore, these components do not consume additional mechanism-level LDP budget beyond the original $\varepsilon_j$.
\subsection{Mitigating Multi-snapshot Frequency Attacks via Memoization}
\label{app:proofs:memoization}

We next clarify why MIRROR remains robust under repeated participation.
The key point is that, within each memo-epoch, the worker reuses an epoch-stable perturbed intent vector.
Hence, an adversary observing multiple rounds within the same epoch does \emph{not} obtain additional independent samples,
and frequency averaging cannot further improve inference accuracy.

\begin{Prop}[No frequency-averaging gain within a memo-epoch]
	\label{Prop:memo_no_variance_reduction}
	Fix any memo-epoch $e$.
	Under Alg.~\ref{Alg:MSIPM}, the reported sequence
	$\{\tilde{\mathbf b}_j^{(1)},\ldots,\tilde{\mathbf b}_j^{(T)}\}$ within epoch $e$ satisfies
	\begin{equation}
		\tilde{\mathbf b}_j^{(1)}=\cdots=\tilde{\mathbf b}_j^{(T)}=\tilde{\mathbf b}^{\mathrm{perm}}_j(e).
	\end{equation}
	Consequently, for any adversary that relies on empirical frequency statistics (e.g., \eqref{eq:freq_stat_intent_rev} and \eqref{eq:majority_inference_intent_rev}),
	all $T$ observations within epoch $e$ are equivalent to observing a \emph{single} perturbed report.
\end{Prop}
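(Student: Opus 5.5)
The plan has two parts: the equality of reports is a direct reading of Alg.~\ref{Alg:MSIPM}, and the ``single report'' consequence is a sufficiency/post-processing argument.

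\emph{Equality of reports.} Within epoch $e$, the branch in lines 4--7 of Alg.~\ref{Alg:MSIPM} runs only at the first reporting instance, since that is when $\tilde{\mathbf b}^{\mathrm{perm}}_j(e)$ is not yet stored. After that the cached vector is never modified. Every round $\tau\in\{1,\ldots,T\}$ executes line 8, which sets $\tilde{\mathbf b}_j^{(\tau)}\leftarrow\tilde{\mathbf b}^{\mathrm{perm}}_j(e)$. A short induction on $\tau$ then gives $\tilde{\mathbf b}_j^{(\tau)}=\tilde{\mathbf b}^{\mathrm{perm}}_j(e)$ for all $\tau$ almost surely. Refresh only happens at the transition to epoch $e{+}1$, so it does not interfere inside the epoch.

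\emph{Frequency statistics.} Substituting into \eqref{eq:freq_stat_intent_rev}, we get $F_{i,j}=\frac{1}{T}\sum_{\tau}\tilde b^{\mathrm{perm}}_{i,j}(e)=\tilde b^{\mathrm{perm}}_{i,j}(e)$ for every $s_i$ and every $T$. So $F_{i,j}\in\{0,1\}$, and its distribution is exactly the RR law \eqref{eq:RR_def}, which does not depend on $T$. In particular, $\mathrm{Var}(F_{i,j})=\frac{e^{\varepsilon_j^\star}}{(e^{\varepsilon_j^\star}+1)^2}$ does not shrink with $T$, whereas under i.i.d.\ perturbation it would scale as $1/T$. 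The threshold rule \eqref{eq:majority_inference_intent_rev} then returns $\hat b_{i,j}=\tilde b^{\mathrm{perm}}_{i,j}(e)$, which is exactly what a one-report adversary would output.

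\emph{General frequency-based adversary.} For any adversary whose estimate is a (possibly randomized) function $g$ of the observation sequence, the sequence is the deterministic image $\tilde{\mathbf b}\mapsto(\tilde{\mathbf b},\ldots,\tilde{\mathbf b})$ of the single report. Hence $g(\tilde{\mathbf b}_j^{(1:T)})=g'(\tilde{\mathbf b}^{\mathrm{perm}}_j(e))$ for an induced map $g'$. By Lemma~\ref{lem:post_processing}, the $T$-round view is $\varepsilon_j^\star$-personalized LDP with the same budget as one report (Lemma~\ref{lem:rr_ldp_vector}). For the same reason, the Bayesian posterior \eqref{eq:posterior_intent_rev} conditioned on all $T$ observations equals the posterior conditioned on one. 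It follows that the optimal eIE \eqref{eq:inference_error_intent_rev} is unchanged.

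\emph{Where care is needed.} The main subtlety is interpreting ``equivalent to a single report'': I will state it as equality in distribution of the adversary's sufficient statistic. The claim is also restricted to within an epoch, because cross-epoch refreshes are independent and are explicitly outside the proposition.
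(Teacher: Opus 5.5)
Your proposal is correct and follows the same route as the paper. It reads off from Alg.~\ref{Alg:MSIPM} that every within-epoch report equals $\tilde{\mathbf b}^{\mathrm{perm}}_j(e)$, substitutes this into \eqref{eq:freq_stat_intent_rev} to get $F_{i,j}=\tilde b^{\mathrm{perm}}_{i,j}(e)$ independently of $T$, and concludes that the threshold rule \eqref{eq:majority_inference_intent_rev} is determined by the single report. Your extra step is valid and goes beyond the paper, which treats only frequency statistics: you view the $T$-round transcript as the deterministic image $\tilde{\mathbf b}\mapsto(\tilde{\mathbf b},\ldots,\tilde{\mathbf b})$, so Lemma~\ref{lem:post_processing} and posterior equality extend the claim to arbitrary adversaries and leave the optimal eIE unchanged.
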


\begin{proof}
	By Alg.~\ref{Alg:MSIPM}, once $\tilde{\mathbf b}^{\mathrm{perm}}_j(e)$ is generated and stored, the worker reports
	$\tilde{\mathbf b}_j^{(\tau)} \leftarrow \tilde{\mathbf b}^{\mathrm{perm}}_j(e)$
	for every round $\tau$ within the same epoch $e$.
	Therefore, the observed sequence can stay constant.
	
	For any task $s_i$, the empirical frequency in \eqref{eq:freq_stat_intent_rev} reduces to
$	F_{i,j}=\frac{1}{T}\sum_{\tau=1}^T \tilde b^{(\tau)}_{i,j}
	=
	\tilde b^{\mathrm{perm}}_{i,j}(e)$,
	which is independent of $T$.
	Hence, repeated observations within epoch $e$ do not create additional samples for averaging, and cannot yield any frequency-based accuracy gain.
	As a result, any frequency-threshold inference rule (including \eqref{eq:majority_inference_intent_rev}) is fully determined by the single epoch-stable report
	$\tilde{\mathbf b}^{\mathrm{perm}}_j(e)$.
\end{proof}

Proposition~\ref{Prop:memo_no_variance_reduction} directly explains why the MSR of iParts remains low and stable as the number of snapshots increases:
unlike i.i.d. per-round perturbation, MIRROR prevents the adversary from accumulating independent evidence within a memo-epoch.

\subsection{Finite Termination and Constraint Satisfaction of PRIMER}
\label{app:proofs:primer}

We next establish two basic properties of PRIMER.
First, the budget scan terminates after a finite number of steps.
Second, any returned $\varepsilon_j^\star$ is guaranteed to satisfy the distortion constraint \eqref{equ.25e_rev}
and the one-snapshot inference constraint \eqref{equ.25f_rev} by construction.

\begin{Prop}[Finite termination of PRIMER]
	\label{Prop:primer_termination}
	PRIMER (Alg.~\ref{Alg:PIRD}) terminates in finite steps.
	In particular, the number of scanned budgets is at most
	$1+\left\lceil\frac{\varepsilon_j^{\max}-\varepsilon_j^{\min}}{\Delta\varepsilon}\right\rceil$.
\end{Prop}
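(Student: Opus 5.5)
The argument is a direct count of loop iterations in Alg.~\ref{Alg:PIRD}. Write $\varepsilon^{(k)} \triangleq \varepsilon_j^{\min}+k\,\Delta\varepsilon$ for $k=0,1,2,\ldots$. This is the value of the trial budget at the start of the $(k{+}1)$-th pass of the while loop. Nothing else modifies $\varepsilon$: the only assignments are the initialization and the increment in line~10. We assume $\Delta\varepsilon>0$ and $\varepsilon_j^{\min}\le\varepsilon_j^{\max}$, both implicit in the input specification. Under these assumptions the sequence $\varepsilon^{(k)}$ is strictly increasing and unbounded.

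\textbf{Each iteration costs a finite amount of work.} For a fixed $\varepsilon$, the body evaluates $Q^{\mathrm{loss}}_j$ via \eqref{eq:Qloss_intent_rev} and $\xi_j$ via \eqref{eq:inference_error_intent_rev}. Both are finite sums over $\mathcal{B}_j\times\tilde{\mathcal{B}}_j\subseteq\{0,1\}^{|\bm{\mathcal{S}}|}\times\{0,1\}^{|\bm{\mathcal{S}}|}$. The Bayes-optimal estimator $\hat{\mathbf b}_j(\tilde{\mathbf b}_j)$ in \eqref{eq:optimal_inference_intent_rev} is an argmin over the finite set $\{0,1\}^{|\bm{\mathcal{S}}|}$, so it is also computable in finitely many operations. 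The body therefore consists of finitely many arithmetic operations and a comparison, and it either returns or increments $\varepsilon$.

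\textbf{Bounding the number of passes.} The loop guard admits $\varepsilon^{(k)}$ exactly when $k\le K\triangleq\lfloor(\varepsilon_j^{\max}-\varepsilon_j^{\min})/\Delta\varepsilon\rfloor$. Hence at most $K+1$ budgets are scanned before the guard fails and the algorithm returns \textsf{INFEASIBLE}. If the feasibility test in line~7 succeeds earlier, the algorithm returns even sooner. Since $\lfloor x\rfloor\le\lceil x\rceil$, the count is at most $1+\lceil(\varepsilon_j^{\max}-\varepsilon_j^{\min})/\Delta\varepsilon\rceil$, which is the stated bound. The ceiling also covers any floating-point rounding in the last step.

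\textbf{Main obstacle.} The only real subtlety is the bookkeeping at the boundary: the case where $\varepsilon_j^{\max}-\varepsilon_j^{\min}$ is an exact multiple of $\Delta\varepsilon$. In that case the endpoint $\varepsilon_j^{\max}$ itself is scanned, and the guard $\le$ includes it. This accounts for the ``$1+$'' in the bound, so the final count needs to be checked against this case explicitly.
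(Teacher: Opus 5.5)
Your proof is correct and follows the same route as the paper, which simply notes that $\varepsilon$ starts at $\varepsilon_j^{\min}$, increases by $\Delta\varepsilon>0$, and so the number of iterations is bounded by the size of the discrete scan grid. You spell out what the paper leaves implicit: the admitted indices are $k=0,\ldots,\lfloor(\varepsilon_j^{\max}-\varepsilon_j^{\min})/\Delta\varepsilon\rfloor$, and $\lfloor\cdot\rfloor\le\lceil\cdot\rceil$ yields the stated bound. You also note that $\varepsilon_j^{\min}\le\varepsilon_j^{\max}$ is needed for that bound to make sense, and that each loop body is a finite computation over $\{0,1\}^{|\bm{\mathcal{S}}|}$.
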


\begin{proof}
	PRIMER initializes $\varepsilon\leftarrow \varepsilon_j^{\min}$ and increases it by a fixed step $\Delta\varepsilon>0$.
	The algorithm stops once either a feasible $\varepsilon$ is found or $\varepsilon>\varepsilon_j^{\max}$.
	Therefore, the number of iterations is upper bounded by the size of the discrete scan grid.
\end{proof}

\begin{Prop}[Constraint satisfaction for calibrated workers]
	\label{Prop:primer_feasibility}
	If PRIMER returns a calibrated $\varepsilon_j^\star$ for worker $w_j$, then the instantiated IPM $f(\cdot\mid\cdot;\varepsilon_j^\star)$ satisfies
	$Q^{\mathrm{loss}}_j(\phi,f,\Delta_j)\le Q^{\mathrm{loss}}_{\max}$ in \eqref{equ.25e_rev} and
	$\xi_j\ge \beta^0$ (equivalently \eqref{equ.25f_rev}) by construction.
	If PRIMER returns \textsf{INFEASIBLE}, worker $w_j$ is excluded from the candidate sets, and thus constraints \eqref{equ.25e_rev}--\eqref{equ.25f_rev}
	hold for all workers that participate in RAPCoD.
\end{Prop}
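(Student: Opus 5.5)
The statement is essentially a by-construction property of Alg.~\ref{Alg:PIRD}, so I will prove it by case analysis on the two possible return values of PRIMER, relying on Proposition~\ref{Prop:primer_termination} to guarantee that exactly one of them is produced.

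\emph{Case 1: PRIMER returns a calibrated $\varepsilon_j^\star$.} The only \textbf{return} statement that outputs a numerical budget is the one inside the conditional (lines 7--8). That branch executes only when, for the current trial value $\varepsilon$, both tests hold: $Q^{\mathrm{loss}}_j(\phi,f(\cdot\mid\cdot;\varepsilon),\Delta_j)\le Q^{\mathrm{loss}}_{\max}$ and $\xi_j\ge\beta^0$. Since $\varepsilon_j^\star$ is assigned exactly this $\varepsilon$, and the quantities are deterministic functions of $(\phi,f,\Delta_j,d)$ computed via \eqref{eq:Qloss_intent_rev} and \eqref{eq:inference_error_intent_rev}, the IPM $f(\cdot\mid\cdot;\varepsilon_j^\star)$ deployed by MIRROR satisfies \eqref{equ.25e_rev}. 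It also satisfies $\xi_j\ge\beta^0$.

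To pass from $\xi_j\ge\beta^0$ to \eqref{equ.25f_rev}, I will show the equivalence with \eqref{eq:h_lower_bound_intent_1_rev}--\eqref{eq:h_lower_bound_intent_2_rev} explicitly, taking the witness
\[
h(\tilde{\mathbf b}_j)=\sum_{\mathbf b_j\in\mathcal{B}_j}\phi(\mathbf b_j)f(\tilde{\mathbf b}_j\mid\mathbf b_j)\,d(\hat{\mathbf b}_j(\tilde{\mathbf b}_j),\mathbf b_j).
\]
With this choice, \eqref{eq:h_lower_bound_intent_1_rev} holds with equality. Exchanging the order of summation in \eqref{eq:inference_error_intent_rev} gives $\sum_{\tilde{\mathbf b}_j}h(\tilde{\mathbf b}_j)=\xi_j\ge\beta^0$, which is \eqref{eq:h_lower_bound_intent_2_rev}.

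The converse direction is not needed for the statement, but it is a one-line check. Summing \eqref{eq:h_lower_bound_intent_1_rev} over $\tilde{\mathbf b}_j$ gives $\xi_j\ge\sum h\ge\beta^0$.

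\emph{Case 2: PRIMER returns \textsf{INFEASIBLE}.} The candidate sets \eqref{eq:candidate_workers_task_en} and \eqref{eq:candidate_workers_on_app} explicitly require $\varepsilon_j^\star\neq\textsf{INFEASIBLE}$. Hence $w_j\notin\tilde{\bm{\mathcal{W}}}_i$ for every task $s_i$, and therefore $y_{i,j}(\bm a)=0$ for every strategy profile. Such a worker never participates in RAPCoD, and constraints \eqref{equ.25e_rev}--\eqref{equ.25f_rev} need only be checked on the participating workers, all of which fall under Case 1.

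The only step requiring care is the rewriting of \eqref{equ.25f_rev} through the auxiliary function $h$: I must confirm that the summation exchange is valid. This holds because all sums are over the finite sets $\mathcal{B}_j$ and $\tilde{\mathcal{B}}_j$ and all terms are nonnegative. Everything else follows directly from the control flow of Alg.~\ref{Alg:PIRD}.
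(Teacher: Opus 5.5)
Your proposal is correct and follows the same by-construction case analysis as the paper: a numerical $\varepsilon_j^\star$ is returned only when both acceptance tests in PRIMER pass, and an \textsf{INFEASIBLE} worker is screened out of every candidate set through the condition $\varepsilon_j^\star\neq\textsf{INFEASIBLE}$. The one addition is that you explicitly exhibit the witness $h(\tilde{\mathbf b}_j)$ (the per-output contribution to $\xi_j$) to justify the equivalence between $\xi_j\ge\beta^0$ and \eqref{eq:h_lower_bound_intent_1_rev}--\eqref{eq:h_lower_bound_intent_2_rev}, which the paper simply asserts.
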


\begin{proof}
	PRIMER outputs $\varepsilon_j^\star$ only when the acceptance test in Alg.~\ref{Alg:PIRD} is satisfied, i.e.,
	both $Q^{\mathrm{loss}}_j(\phi,f,\Delta_j)\le Q^{\mathrm{loss}}_{\max}$ and $\xi_j\ge \beta^0$ hold under $f(\cdot\mid\cdot;\varepsilon_j^\star)$.
	If no $\varepsilon$ in $[\varepsilon_j^{\min},\varepsilon_j^{\max}]$ passes the test, PRIMER returns \textsf{INFEASIBLE};
	the worker is then removed from the effective candidate sets, enforcing \eqref{equ.25e_rev}--\eqref{equ.25f_rev} over the participating workers.
\end{proof}

\subsection{Feasibility Preservation and Finite-step Properties of ASPIRE-Off and ASPIRE-On}
\label{app:proofs:aspireoff}

We finally provide proof details for ASPIRE-Off and ASPIRE-On.
For ASPIRE-Off, exact feasible best-response updates are accepted only when the resulting profile remains feasible and yields a strict potential improvement, which implies feasibility preservation and finite-step convergence to a constrained NE.
For ASPIRE-On, because Alg.~\ref{Alg:TaskDPAsyncOn} adopts DP-assisted remedial search over bounded rounds, we state feasibility preservation, finite termination, and monotonic potential improvement for online execution.

\begin{Prop}[Feasibility preservation of ASPIRE-Off]
	\label{Prop:aspireoff_feasibility}
	Suppose the initial profile $\bm a(0)$ is feasible, i.e., $\bm a(0)\in\mathcal{A}^{\mathrm{feas}}$ in \eqref{eq:A_feas_task_en}.
	Then every accepted update in ASPIRE-Off preserves feasibility:
	$\bm a(t)\in\mathcal{A}^{\mathrm{feas}}$ for all iterations $t$.
\end{Prop}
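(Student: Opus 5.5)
The proof is an induction on the iteration index $t$, with base case $\bm a(0)\in\mathcal{A}^{\mathrm{feas}}$ given by the feasibility-repair initialization assumed in the statement. For the inductive step, suppose $\bm a(t)\in\mathcal{A}^{\mathrm{feas}}$ and consider round $t$ of Alg.~\ref{Alg:TaskDPAsyncOff}. There are two cases.

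\emph{Termination case.} If $\mathcal{I}(t)=\varnothing$, the loop breaks and the returned profile is $\bm a(t)$. This profile is feasible by hypothesis.

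\emph{Update case.} Otherwise exactly one task $i^\star$ is chosen, and $\bm a(t+1)=(a^{\mathrm{br}}_{i^\star}(t),\bm a_{-i^\star}(t))$. The key observation is that $a^{\mathrm{br}}_{i^\star}(t)$ is, by definition in \eqref{eq:offline_exact_best_response}, an element of $\mathcal{A}^{\mathrm{uni}}_{i^\star}(t)$. By \eqref{eq:offline_unilateral_feasible_set}, every $a_i'\in\mathcal{A}^{\mathrm{uni}}_i(t)$ satisfies $(a_i',\bm a_{-i}(t))\in\mathcal{A}^{\mathrm{feas}}$. Hence $\bm a(t+1)\in\mathcal{A}^{\mathrm{feas}}$ directly.

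The best response is well defined because $\mathcal{A}^{\mathrm{uni}}_{i^\star}(t)$ is nonempty. To see this, note that $a_{i^\star}(t)\subseteq \bm{\mathcal{W}}^{\mathrm{avail}}_{i^\star}(t)$: we have $a_{i^\star}(t)\subseteq\tilde{\bm{\mathcal{W}}}_{i^\star}$ because $\mathcal{A}_{i^\star}$ only contains subsets of the candidate pool. Together with $\bm a(t)\in\mathcal{A}^{\mathrm{feas}}$, this gives $a_{i^\star}(t)\in\mathcal{A}^{\mathrm{uni}}_{i^\star}(t)$. So the argmax is taken over a finite nonempty set. (It also shows $\Delta_i(t)\ge 0$ for every $i$.)

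For completeness, I would unpack membership in $\mathcal{A}^{\mathrm{feas}}$ to show that each defining condition of \eqref{eq:A_feas_task_en} is genuinely checked.
\begin{itemize}
\item \emph{Local feasibility.} $a'_{i^\star}\in\mathcal{A}_{i^\star}$, i.e., it lies in the candidate pool and respects the budget \eqref{equ.25c_rev}; IR \eqref{equ.25b_rev} is embedded through $\tilde{\bm{\mathcal{W}}}_{i^\star}$.
\item \emph{Worker exclusivity.} Since $a'_{i^\star}\subseteq a_{i^\star}(t)\cup\bm{\mathcal{W}}^{\mathrm{idle}}(t)$, no worker held by another task is added, so $\sum_i y_{i,j}\le 1$ persists.
\item \emph{Risk constraints.} \eqref{equ.25g_rev} and \eqref{equ.25h_rev} for the new profile are evaluated exactly via \eqref{eq:Phi_off_exact_eval}--\eqref{eq:arrival_prob_exact} over the finite arrival-state space. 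This exact evaluation is what makes the membership test precise rather than approximate.
\end{itemize}
Risk of the other tasks $s_i$, $i\neq i^\star$, is unchanged, since $R_i^{\mathrm{Qual}}$ depends only on $a_i$ and the independent arrivals of its own workers. $R_j^{\mathrm{Pref}}$ changes only for workers whose assignment changed. Both facts are consistent with the membership check rather than being additional obligations.

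The main point needing care, though not a real obstacle, is the exclusivity argument through $\bm{\mathcal{W}}^{\mathrm{avail}}$ and the fact that $\mathcal{A}^{\mathrm{feas}}$ is a joint set that a unilateral move could in principle violate for other players. Both are resolved because $\mathcal{A}^{\mathrm{uni}}$ is defined via full joint membership. Because only one update is committed per round, no simultaneous-move conflicts arise, and induction closes the argument for all $t$ up to termination.
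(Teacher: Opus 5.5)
Your proof is correct and follows the same route as the paper: induction on $t$, with the committed update $a^{\mathrm{br}}_{i^\star}(t)\in\mathcal{A}^{\mathrm{uni}}_{i^\star}(t)$ keeping the joint profile in $\mathcal{A}^{\mathrm{feas}}$ by definition, and only one update committed per round. You also show that $a_{i^\star}(t)\in\mathcal{A}^{\mathrm{uni}}_{i^\star}(t)$, so the best response is taken over a nonempty set and $\Delta_i(t)\ge 0$; the paper leaves this detail implicit.
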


\begin{proof}
	In Alg.~\ref{Alg:TaskDPAsyncOff}, a best-response update $\big(a_i^{\mathrm{br}}(t),\bm a_{-i}(t)\big)$ is accepted only if $a_i^{\mathrm{br}}(t)\in\mathcal{A}^{\mathrm{uni}}_i(t)$,
	which by definition satisfies the feasibility conditions in \eqref{eq:A_feas_task_en}, including local budget feasibility, worker exclusivity, and the risk constraints.
	Since ASPIRE-Off commits at most one accepted update per round, the resulting profile $\bm a(t{+}1)$ remains feasible.
	The claim then follows by induction on $t$.
\end{proof}

\begin{Prop}[Finite-step convergence of ASPIRE-Off to a constrained NE]
	\label{Prop:aspireoff_convergence}
	ASPIRE-Off terminates in finite steps and returns a constrained Nash equilibrium on $\mathcal{A}^{\mathrm{feas}}$.
\end{Prop}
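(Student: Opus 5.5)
The argument is a standard finite-improvement argument driven by the exact potential $\Phi^{\mathrm{off}}$. It has three parts: feasibility along the trajectory, strict monotonicity of the potential, and a characterization of the stopping condition.

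\textbf{Feasibility.} By Proposition~\ref{Prop:aspireoff_feasibility}, with $\bm a(0)\in\mathcal{A}^{\mathrm{feas}}$, every iterate satisfies $\bm a(t)\in\mathcal{A}^{\mathrm{feas}}$.

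\textbf{Strict monotonicity.} Whenever the loop does not break, the committed update $a_{i^\star}^{\mathrm{br}}(t)$ has $\Delta_{i^\star}(t)>0$ by construction of $\mathcal{I}(t)$. Hence $\Phi^{\mathrm{off}}(\bm a(t+1))>\Phi^{\mathrm{off}}(\bm a(t))$. No profile can therefore be visited twice, since that would require $\Phi^{\mathrm{off}}$ to return to a previous value along a strictly increasing sequence.

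\textbf{Finiteness.} $\mathcal{A}^{\mathrm{feas}}\subseteq\prod_i 2^{\tilde{\bm{\mathcal{W}}}_i}$ is finite. The number of rounds is therefore at most $|\mathcal{A}^{\mathrm{feas}}|-1$, and the algorithm must eventually reach $\mathcal{I}(t)=\varnothing$. Equivalently, this is the FIP from Theorem~\ref{thm:task_exact_potential_en}, since best-response steps are a special case of better-response steps.

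\textbf{Terminal profile is a constrained NE.} The delicate point is that $\mathcal{A}^{\mathrm{uni}}_i(t)$ only ranges over subsets of $\bm{\mathcal{W}}^{\mathrm{avail}}_i(t)$. I must argue that this restriction does not omit any feasible unilateral deviation in Definition~\ref{def:NE_off}.

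Take any $a_i'$ with $(a_i',\bm a_{-i}^\star)\in\mathcal{A}^{\mathrm{feas}}$. Worker exclusivity forces every $w_j\in a_i'$ to have $y_{k,j}=0$ for all $k\neq i$. Such a worker is thus either in $a_i^\star$ or idle. Membership in $\mathcal{A}_i$ gives $a_i'\subseteq\tilde{\bm{\mathcal{W}}}_i$. Therefore $a_i'\subseteq\bm{\mathcal{W}}^{\mathrm{avail}}_i$, so $a_i'\in\mathcal{A}^{\mathrm{uni}}_i$.

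At termination $\Delta_i\le 0$ for every $i$, which gives $\Phi^{\mathrm{off}}(a_i',\bm a_{-i}^\star)\le\Phi^{\mathrm{off}}(\bm a^\star)$. By the exact-potential identity of Theorem~\ref{thm:task_exact_potential_en}, $U_i(a_i',\bm a_{-i}^\star)-U_i(\bm a^\star)$ equals this potential difference and is therefore $\le 0$. This is exactly \eqref{eq:NE_task_def_cn_en}.

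I expect the main obstacle to be this set-inclusion check, together with making sure the exact evaluation \eqref{eq:Phi_off_exact_eval} makes $\Delta_i$ the true potential gain. The second point holds by the paper's exact-enumeration assumption.
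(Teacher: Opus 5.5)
Your proposal is correct and follows essentially the same route as the paper: strict ascent of the exact potential $\Phi^{\mathrm{off}}$ over the finite set $\mathcal{A}^{\mathrm{feas}}$, then absence of any improving feasible unilateral deviation at termination, which is the constrained NE condition. Your explicit check, via worker exclusivity and $a_i'\subseteq\tilde{\mathcal{W}}_i$, that $\mathcal{A}^{\mathrm{uni}}_i$ contains every feasible unilateral deviation makes rigorous a step the paper only asserts.
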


\begin{proof}
	By Theorem~\ref{thm:task_exact_potential_en}, the offline game is an exact potential game on the finite feasible set $\mathcal{A}^{\mathrm{feas}}$ and satisfies FIP.
	ASPIRE-Off accepts an exact feasible best-response update only when $\Delta_i(t)>0$, which yields a strict increase of the exact potential value.
	Because $\mathcal{A}^{\mathrm{feas}}$ is finite, strict potential ascent cannot continue indefinitely.
	Thus, ASPIRE-Off terminates in finite steps when no feasible unilateral improvement exists.
	At termination, no task admits a feasible unilateral deviation that improves its payoff (equivalently, increases the potential), which is exactly the definition of a constrained NE on $\mathcal{A}^{\mathrm{feas}}$.
\end{proof}
\begin{Prop}[Feasibility preservation and finite termination of ASPIRE-On]
	\label{Prop:aspireon_convergence}
	Starting from an operationally feasible online profile $\bm a^{\mathrm{on}}(0)\in\mathcal{A}^{\mathrm{on,op}}$, ASPIRE-On preserves operational feasibility for every accepted temporary recruitment update and terminates in a finite number of rounds. Moreover, every accepted update passes the updated-task restoration screening and strictly improves the exact online potential $\Phi^{\mathrm{on}}$ by more than $\varepsilon'$.
\end{Prop}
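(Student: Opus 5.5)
The argument is an induction on the round index $r$ of Alg.~\ref{Alg:TaskDPAsyncOn}, modeled on the proof of Proposition~\ref{Prop:aspireoff_feasibility}. Termination then follows from the round cap $R_{\max}$ combined with strict potential ascent.

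\textbf{Feasibility.} The base case is the hypothesis $\bm a^{\mathrm{on}}(0)\in\mathcal{A}^{\mathrm{on,op}}$; the empty profile qualifies trivially, since it uses no worker and no budget. For the inductive step, suppose $\bm a^{\mathrm{on}}(r)\in\mathcal{A}^{\mathrm{on,op}}$. Only one task $i^\star$ changes its strategy in round $r$. Its new strategy is $a_{i^\star}^{\mathrm{cand,on}}(r)$, and line~9 accepts this candidate only if it passes the screening in \eqref{eq:A_on_op}--\eqref{eq:R_on_i}. That screening certifies two facts. First, $(a_{i^\star}^{\mathrm{cand,on}}(r),\bm a^{\mathrm{on}}_{-i^\star}(r))\in\mathcal{A}^{\mathrm{on,op}}$. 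Second, $Q_{i^\star}^{\mathrm{fin}}\ge Q_{i^\star}^D$, which is the restoration claim. I will also note a structural reason the candidate is safe. It is drawn from $\bm{\mathcal{W}}^{\mathrm{avail,on}}_{i^\star}(r)=(a^{\mathrm{on}}_{i^\star}(r)\cup\bm{\mathcal{W}}^{\mathrm{idle}}_{\mathrm{on}}(r))\cap\tilde{\bm{\mathcal{W}}}^{\mathrm{on}}_{i^\star}$, and \textsc{KnapsackDP} runs with capacity $\bar B_{i^\star}$. As a result, exclusivity, candidate screening and the residual budget already hold by construction, and the explicit check is only a safeguard. The other tasks keep their strategies, so the updated profile equals the one that was screened. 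This gives $\bm a^{\mathrm{on}}(r+1)\in\mathcal{A}^{\mathrm{on,op}}$.

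\textbf{Strict improvement.} A task enters $\mathcal{I}^{\mathrm{on}}(r)$ only if $\Delta_i^{\mathrm{on}}(r)>\varepsilon'$. Tasks that fail screening receive $\Delta_i^{\mathrm{on}}(r)=0$. Assuming $\varepsilon'\ge 0$, such tasks are never selected. Hence every accepted update satisfies $\Phi^{\mathrm{on}}(\bm a^{\mathrm{on}}(r+1))-\Phi^{\mathrm{on}}(\bm a^{\mathrm{on}}(r))=\Delta^{\mathrm{on}}_{i^\star}(r)>\varepsilon'$. Because $\Delta^{\mathrm{on}}_{i^\star}(r)$ is computed exactly by \eqref{eq:Phi_on_app}, this inequality is a true statement about $\Phi^{\mathrm{on}}$ and not about a surrogate.

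\textbf{Finite termination.} The while loop runs at most $R_{\max}$ times; it either breaks earlier when $\mathcal{I}^{\mathrm{on}}(r)=\varnothing$ or exits when $r$ reaches $R_{\max}$. Each round consists of finitely many knapsack DPs and finitely many exact potential evaluations over finite sets, so every round is itself finite. I would also add a cap-independent argument, which matters if $R_{\max}$ is large. The set $\mathcal{A}^{\mathrm{on,op}}$ is finite, so $\Phi^{\mathrm{on}}$ takes finitely many values. Strict ascent means no profile can repeat. Therefore the number of accepted updates is at most $|\mathcal{A}^{\mathrm{on,op}}|-1$. This is the finite improvement property of Theorem~\ref{thm:on_exact_potential_app}.

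The main delicate point is the restoration claim. $Q_i^{\mathrm{fin}}$ in \eqref{eq:Qi_final_app} depends only on task $i$'s own recruits, so a later update by a different task $i'$ cannot change $Q_i^{\mathrm{fin}}$. That task's update only removes idle workers from the pool. However, task $i$ itself may later move to another screened set, and that set again passes screening. Therefore I will phrase the claim as a property of each accepted update, as the statement does. I will not claim that every task in $\bm{\mathcal{S}}^{\mathrm{on}}$ ends up restored.
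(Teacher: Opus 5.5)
Your proposal is correct and follows the same route as the paper's proof. Feasibility and restoration come directly from the acceptance screening against \eqref{eq:A_on_op}--\eqref{eq:R_on_i}, strict improvement comes from the threshold test $\Delta_i^{\mathrm{on}}(r)>\varepsilon'$, and termination comes from the early break or the $R_{\max}$ cap; your explicit assumption $\varepsilon'\ge 0$ (needed because screened-out tasks are assigned $\Delta_i^{\mathrm{on}}(r)=0$) and your cap-independent bound of $|\mathcal{A}^{\mathrm{on,op}}|-1$ accepted updates are valid refinements that the paper leaves implicit.
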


\begin{proof}
	In Alg.~\ref{Alg:TaskDPAsyncOn}, an online update is accepted only after the candidate profile remains in $\mathcal{A}^{\mathrm{on,op}}$ and the updated task's candidate belongs to $\mathcal{R}^{\mathrm{on}}_i(\bm a^{\mathrm{on}}_{-i})$. Hence, every accepted update preserves operational feasibility and satisfies the updated-task restoration screening. In addition, ASPIRE-On accepts an update only when $\Delta_i^{\mathrm{on}}(r)>\varepsilon'$, which implies a strict increase of the exact online potential $\Phi^{\mathrm{on}}$. Since the algorithm either stops when no verified DP-assisted remedial candidate yields sufficient potential gain or reaches the prescribed $R_{\max}$ rounds, it terminates in a finite number of rounds.
\end{proof}
\end{document}